\newcolumntype{M}[1]{>{\centering\arraybackslash}m{#1}}
\newcolumntype{N}{@{}m{0pt}@{}}
\renewcommand{\thesection}{\arabic{section}}
\renewcommand{\thesubsection}{\arabic{section}.\arabic{subsection}}
\renewcommand{\thesubsubsection}{
\arabic{section}.\arabic{subsection}.\arabic{subsubsection}}
\def\p@subsection{}
\def\p@subsubsection{}
\newtheorem{proposition}{Proposition}
\newcommand\footnoteref[1]{\protected@xdef\@thefnmark{\ref{#1}}\@footnotemark}
\def\Mg{{\rm Mg}}
\def\Ca{{\rm Ca}}
\def\H{{\rm H}}
\def\Two{ ^{2+} }
\def\PO{{\rm PO}_4^{3-}}
\newcommand{\spin}{{\rm spin}}
\newcommand{\orb}{{\rm orb}}
\newcommand{\Posner}{{\rm Pos}}
\newcommand{\nuc}{{\rm nuc}}
\newcommand{\lab}{{\rm lab}}  % Lab frame
\newcommand{\zlab}{z_\lab}
\newcommand{\In}{{\rm in}}  
\newcommand{\zinter}{z_\In}  
\newcommand{\xinter}{x_\In}  
\newcommand{\zenz}{z_{\rm enz}} % Axis fixed in enzyme  
\newcommand{\TPos}{t_\Posner}
\newcommand{\comp}{{\rm comp}}
\newcommand{\HilPos}{\Hil_{ \text{no-coll.} }^-}
\newcommand{\PiPos}{\Pi_{ \text{no-coll.} }^-}
\newcommand{\Log}{{\rm L}}
\newcommand{\Logg}{{\rm \mathcal{L}}}
\newcommand{\rep}{{\rm rep}}
\newcommand{\PiStick}{ \Pi_{ AB} }
\newcommand{\PiStickBC}{ \Pi_{ BC} }
\newcommand{\qutrit}{{\rm qutrit}}
\newcommand{\CThree}{C}
\newcommand{\GC}{\mathcal{G}_{ \CThree }}
\newcommand{\rot}{{\rm rot}}
\newcommand{\fire}{{\rm fire}}
\newcommand{\diffuse}{{\rm diff}}
\newcommand{\Sites}{N}    % # Posners
\newcommand{\AKLT}{{\rm AKLT}}   
\newcommand{\AKLTOne}{\ket{ \AKLT_{1 \text{D}} }}   % 1D AKLT state
\newcommand{\AKLTHon}{\ket{ \AKLT_{\rm hon} }}   % Honeycomb AKLT state
\newcommand{\AKLTPrime}{\ket{ \AKLT'_{\rm hon} }}   % Our variation on the honeycomb AKLT state
\newcommand{\Thirtyone}{^{31}{\rm P}}
\newcommand{\Graph}{G}
\newcommand{\zerot}{0_\tau}
\newcommand{\onet}{1_\tau}
\newcommand{\twot}{2_\tau}
\newcommand{\jt}{j_\tau}
\newcommand{\plust}{ +_\tau }
\newcommand{\inter}{ {\rm int} }   % interaction
\newcommand{\tot}{ {\rm tot} }
\def\const{ {\rm const.} }   % const.
\newcommand{\Tr}{{\rm Tr}}   % Trace
\def\id{\mathbbm{1}}   % Identity
\newcommand{\kB}{k_\mathrm{B}}  % k_B
\newcommand{\Hil}{\mathcal{H}}  % Hilbert space
\newcommand{\Basis}{\mathcal{B}}  % Basis
\newcommand{\LParen}{ \bm{(} }
\newcommand{\RParen}{ \bm{)} }
\newcommand*{\Set}[1]{\left\{  #1  \right\}}
\renewcommand\th{ {\rm th} }
\newcommand*{\bra}[1]{\langle #1\rvert}
\newcommand*{\ket}[1]{\lvert #1 \rangle}
\newcommand*{\ketbra}[2]{\lvert #1 \rangle\!\langle #2 \rvert}
\newcommand*{\expval}[1]{\left\langle  #1  \right\rangle}
\newcommand{\caphead}[1]{{\bf #1}}
\begin{document}

\title{Quantum information in the Posner model of quantum cognition}
\author{Nicole Yunger Halpern}
\email{nicoleyh.11@gmail.com}
\affiliation{Institute for Quantum Information and Matter, Caltech, Pasadena, CA 91125, USA}
\affiliation{Kavli Institute for Theoretical Physics, University of California, Santa Barbara, CA 93106, USA}
\affiliation{ITAMP, Harvard-Smithsonian Center for Astrophysics, Cambridge, MA 02138, USA}
\affiliation{Department of Physics, Harvard University, Cambridge, MA 02138, USA}
\author{Elizabeth Crosson}
\email{crosson@unm.edu}
\affiliation{Institute for Quantum Information and Matter, Caltech, Pasadena, CA 91125, USA}
\affiliation{Kavli Institute for Theoretical Physics, University of California, Santa Barbara, CA 93106, USA}
\affiliation{Center for Quantum Information and Control (CQuIC)
Dept. of Physics and Astronomy, University of New Mexico, 
Albuquerque, NM 87131, USA}
\date{\today}

\begin{abstract} 
Matthew Fisher recently postulated a mechanism by which quantum phenomena could influence cognition: Phosphorus nuclear spins may resist decoherence for long times, especially when in Posner molecules. The spins would serve as biological qubits. We imagine that Fisher postulates correctly. How adroitly could biological systems process quantum information (QI)? We establish a framework for answering. Additionally, we construct applications of biological qubits to quantum error correction, quantum communication, and quantum computation. First, we posit how the QI encoded by the spins transforms as Posner molecules form. The transformation points to a natural computational basis for qubits in Posner molecules. From the basis, we construct a quantum code that detects arbitrary single-qubit errors. Each molecule encodes one qutrit. Shifting from information storage to computation, we define the model of Posner quantum computation. To illustrate the model's quantum-communication ability, we show how it can teleport information incoherently: A state's weights are teleported. Dephasing results from the entangling operation's simulation of a coarse-grained Bell measurement. Whether Posner quantum computation is universal remains an open question. However, the model's operations can efficiently prepare a Posner state usable as a resource in universal measurement-based quantum computation. The state results from deforming the Affleck-Kennedy-Lieb-Tasaki (AKLT) state and is a projected entangled-pair state (PEPS). Finally, we show that entanglement can affect molecular-binding rates, boosting a binding probability from 33.6\% to 100\% in an example. This work opens the door for the QI-theoretic analysis of biological qubits and Posner molecules.
\end{abstract}
\maketitle{}

\section{Introduction}

Fisher recently proposed 
a mechanism by which 
quantum phenomena might affect cognition~\cite{Fisher15}.
Phosphorus atoms populate biochemistry.
A phosphorus nucleus's spin, he argued,
can store quantum information (QI) for long times.
The nucleus has a spin quantum number $s = \frac{1}{2}$.
Hence the nucleus forms a \emph{qubit},
a quantum two-level system.
The qubit is the standard unit of QI.

Fisher postulated physical processes that might
entangle phosphorus nuclei.
Six phosphorus atoms might, with other ions,
form \emph{Posner molecules}, Ca$_9$(PO$_4$)$_6$~\cite{Treboux_00_Posner,Kanzaki_01_Posner,Yin_03_Posner}.\footnote{
% < f >
Ca$_9$(PO$_4$)$_6$ has been called
the \emph{Posner cluster} and \emph{Posner molecule}.
We call it the \emph{Posner}, for short.}
% < /f >
The molecules might protect the spins' states
for long times.
Fisher also described how the QI stored in the spins
might be read out.
This QI, he conjectured, could impact neuron firing.
The neurons could participate in \emph{quantum cognition}.

These conjectures require empirical testing.
Fisher has proposed experiments~\cite{Fisher15},
including with Radzihovsky~\cite{Fisher_17_Quantum}.
Some of those experiments have begun~\cite{Fisher_17_Personal}.

Suppose that Fisher conjectures correctly.
How effectively could the spins process QI?
We provide a framework for answering this question,
and we begin answering.
We translate Fisher's physics and chemistry
into information theory.
The language of molecular binding, heat dissipation, etc.
is replaced with the formalism of
positive operator-valued measures (POVMs), 
computational bases, etc.
Additionally, we identify and quantify QI-storage, -communication, 
and -computation
capacities of the phosphorus nuclear spins and Posners.

The constructions and analyses consist largely of QI theory:
We leave primarily to Fisher
conjectures about which chemical processes 
occur in biological systems.
We suppose that Fisher conjectures correctly,
identifying the QI-theoretic implications of his proposal.
Granted, experiments might rule against the proposal,
but they might rule in favor.
Furthermore, Fisher's arguments are compelling enough
that their ramifications merit exploration.

To initiate that exploration,
we identify QI-processing tasks 
that Posners could undertake in principle.
We do not claim that Posners do process QI.
Such claims require justification with biochemistry,
whereas this paper focuses on QI theory.
This in-principle analysis forms a crucial starting point:
Characterizing a system's QI-processing power is difficult. 
To render the problem tractable, 
we sometimes imagine that atoms and molecules
can be manipulated with complete control.
Though impractical, this assumption provides a footing
on which to build an analysis.
(Nor does every part of this paper rely on this assumption.)
We chalk out boundaries on Posners' QI-processing power.

\section{Theory}

This paper is intended for QI scientists, 
for chemists, and for biophysicists.
Some readers may require background about QI theory.
They may wish to consult App.~\ref{section:QI_backgrnd} 
and~\cite{NielsenC10,Preskill_99_QEC}.
Next, we overview this paper's contributions.

\textbf{Computational bases before and after 
molecule formation:}
Phosphorus nuclear spins originate outside Posners,
in Fisher's narrative.
The spins occupy phosphate ions that join together to form Posners.
Molecular formation changes
how QI is encoded physically.

Outside of molecules, phosphorus nuclear spins
couple little to orbital degrees of freedom (DOFs).
Spin states form an obvious choice of computational basis.\footnote{
% < f >
In QI, computations are expressed in terms of 
a \emph{computational basis}
for the system's Hilbert space~\cite{NielsenC10}.
Basis elements are often represented by bit strings, as in
$\Set{ \ket{ 0 0 \ldots 0 } , \ket{ 0 0 \ldots 0 1 } ,  \ldots  \ket{ 1 1 \ldots 1 } }$.}
% < /f >
In a Posner molecule, the nuclei are indistinguishable.
They occupy a totally antisymmetric state~\cite{Fisher15,Fisher_17_Quantum}:
The spins entangle with orbital DOFs.
Which physical states form a useful computational basis
is not obvious.

We identify such a basis.
Molecule formation, we posit further, maps premolecule spin states
to antisymmetric molecule states deterministically.
The premolecule orbital state determines the map.
We formalize the map with a projector-valued measure (PVM).
The mapped-to antisymmetric states form the computational basis,
in terms of which Posners' QI processing 
can be expressed cleanly.

\textbf{Quantum error-correcting and -detecting codes:}
The basis elements may decohere quickly:
Posners' geometry protects only spins.
The basis elements are spin-and-position entangled states.
Do the dynamics protect any states against errors?

Hamiltonians' ground spaces may form
quantum error-correcting and -detecting codes 
(QECD codes)~\cite{Preskill_99_QEC}.
One might hope to relate the Posner Hamiltonian $H_\Posner$
to a QECD code.
$H_\Posner$ was characterized shortly after
the present paper's initial release~\cite{Swift_17_Posner}.
Even without knowing the form of $H_\Posner$, however,
one can construct QECD codes that respect
charges expected to be conserved.

$H_\Posner$ likely preserves two observables.
One, $\GC$, generates cyclic permutations of the spins.
One such permutation shuffles the spins % counterclockwise
about the molecule's symmetry axis,
through an angle $2 \pi / 3$.
This permutation preserves the Posner's geometry~\cite{Treboux_00_Posner,Kanzaki_01_Posner,Yin_03_Posner,Fisher15}.
The other charge, $S^{ \zlab }_{1 \ldots 6 }$,
is the spins' total $z$-component relative to the lab frame.

The dynamics likely preserve
eigenstates shared by $\GC$ and $S^{ \zlab }_{1 \ldots 6}$.
Yet $\GC$ shares many eigenbases with $S^{ \zlab }_{1 \ldots 6}$:
The charges fail to form a complete set of commuting observables (CSCO).
We identify a useful operator to break the degeneracy:
$\mathbf{S}^2_{123}  \otimes  \mathbf{S}^2_{456}$
equals a product of the spin-squared operators $\mathbf{S}^2$
of trios of a Posner's spins.
% NYH: $\mathbf{S}^2_{123}  \otimes  \mathbf{S}^2_{456}$ alone
% doesn't break the degeneracy.
% Rather, a CSCO consists of 
% \mathbf{S}^2_{123}; \mathbf{S}^2_{456};
% the C analogue that nontrivially transforms just qubits 1, 2, and 3;
% the C analogue that nontrivially transforms just qubits 4, 5, and 6;
% S^\zinter_{1 2 3};  and  S^\zinter_{4 5 6}.
This operator (i) respects the Posner's geometry and
(ii) facilitates the construction of Posner states 
that can fuel universal quantum computation (discussed below).

From the eigenbasis shared by $\GC$, 
$S^{ \zlab }_{1 \ldots 6}$,
and $\mathbf{S}^2_{123}  \otimes  \mathbf{S}^2_{456}$,
we form QECD codes.
A state $\ket{ \psi }$ in one charge sector of $\GC$
and one sector of $S^{ \zlab }_{1 \ldots 6}$
likely cannot transform, under the dynamics,
into a state $\ket{ \phi }$ in a second sector of $\GC$
and a second sector of $S^{ \zlab }_{1 \ldots 6}$.
Hence $\ket{ \psi }$ and $\ket{ \phi }$
suggest themselves as codewords.
Charge preservation would prevent
one codeword from evolving into another.

We construct two quantum codes,
each partially protected by charge preservation.
Via one code, each Posner encodes one qutrit.
The codewords correspond to
distinct eigenvalues of $\GC$.
This code detects arbitrary single-physical-qubit errors.
Via the second code, each Posner encodes one qubit.
This repetition code corrects two bit flips.
The codewords correspond to
distinct eigenvalues of $S_{ 1 \ldots 6 }^{ \zlab }$.

\textbf{Model of Posner quantum computation:}
Fisher posits physical processes, such as binding,
that Posners may undergo~\cite{Fisher15}.
We abstract away the physics, 
formalizing the computations effected by the processes.
These effected \emph{Posner operations} form
the \emph{model of Posner quantum computation}.

The model includes the preparation of
singlets, $\frac{1}{ \sqrt{2} } ( \ket{ 0 1 } - \ket{ 1 0 } )$.
The logical state evolves trivially,
under the identity $\id$,
when Posners form.
But Posner creation associates a hextuple of qubits
with a geometry
and with an observable $\GC$.

A Posner's six qubits may rotate through angles of up to $\pi$,
though typical angles are expected to be much smaller.
Also, measurements can be performed:
$\GC$ has eigenvalues $\tau = 0, \pm 1$.
Whether Posners $A$ and $B$ satisfy
$\tau_A + \tau_B = 0$ can be measured projectively.
If the equation is satisfied, the twelve qubits 
can rotate jointly.

Finally, hextuples can cease to correspond 
to geometries or to $\GC$'s,
as Posners break down into their constituent ions.
Thereafter, qubits can group together into new hextuples.
This model enables us to recast Fisher's narrative~\cite{Fisher15}
as a quantum circuit.
% We also identify a criterion necessary for constructing,
% from Posner operations,
% quantum circuits of nonconstant depth:
% Molecules must break down and form anew.
% Only outside molecules can qubits undergo
% arbitrary rotations.
% Only in molecules can qubits undergo entangling operations
% late in a computation.
% To alternate between rotating and entangling,
% qubits must leave and enter Posners.

%
%
%
\textbf{Entanglement generated by, 
and quantum communication with,
molecular binding:}
Two Posners, Fisher conjectures, 
can bind together~\cite{Fisher15}.
Quantum-chemistry calculations support the conjecture~\cite{Swift_17_Posner}.
The binding is expected to entangle the Posners~\cite{Fisher15}.
How much entanglement does binding generate,
and entanglement of what sort?

We characterize the entanglement in two ways.
First, we compare Posner binding to a Bell measurement~\cite{NielsenC10}.
A Bell measurement yields one of four possible outcomes---two
bits of information.
Posner binding transforms a subspace
as a coarse-grained Bell measurement.
A Bell measurement is performed,
and one bit is discarded, effectively.

Second, we present a quantum-communication protocol
reliant on Posner binding.
We define a qutrit (three-level) subspace
of the Posner Hilbert space.
A Posner $P$ may occupy a state 
$\ket{ \psi }  =  \sum_{ j = 0}^2  c_j  \ket{ j }$
in the subspace.
The coefficients $| c_j |^2$ form a probability distribution $Q$.
This distribution has a probability $p$ of being teleported 
to another Posner, $P'$.
Another distribution, $\tilde{Q}$, consists of
combinations of the $| c_j |^2$'s.
$\tilde{Q}$ has a probability $1 - p$
of being teleported.
Measuring $P'$ in the right basis
would yield an outcome distributed 
according to $Q$ or according to $\tilde{Q}$.
A random variable is teleported,
though $P$ never interacts with $P'$ directly.

The weights of $\ket{ \psi }$ 
(or combinations of the weights)
are teleported~\cite{Bennett_93_Teleporting}.
The coherences are not.
We therefore dub the protocol \emph{incoherent teleportation}.
The dephasing comes from the binding's simulation of
a coarse-grained Bell measurement.
Bell measurements teleport QI coherently.

Incoherent teleportation effects
a variant of superdense coding~\cite{Bennett_92_Communication}.
A trit (a classical three-level system) 
is communicated effectively,
while a bit is communicated directly.
The trit is encoded superdensely in the bit,
with help from Posner binding.

\textbf{Posner-molecule state
that can serve as a universal resource for
measurement-based quantum computation:}
Measurement-based quantum computation (MBQC)~\cite{Briegel_01_Persistent,Raussendorf_03_Measurement,Briegel_09_Measurement}
is a quantum-computation model
alternative to the circuit model~\cite{Deutsch_89_Quantum}.
MBQC begins with a many-body entangled state $\ket{ \psi }$.
Single qubits are measured adaptively.

MBQC can efficiently simulate universal quantum computation
if begun with the right $\ket{ \psi }$.
Most quantum states cannot serve as universal resources~\cite{Gross_09_Most}.
Cluster states~\cite{Briegel_01_Persistent,Raussendorf_01_One,Hein_06_Entanglement} 
on 2D square lattices can~\cite{Briegel_01_Persistent,Raussendorf_03_Measurement,VandenNest_06_Universal,Miyake_11_Quantum}.
So can the Affleck-Kennedy-Lieb-Tasaki (AKLT) state~\cite{Affleck_87_Rigorous,Affleck_88_Valence,Kennedy_88_Existence}
on a honeycomb lattice, $\AKLTHon$.
Local measurements can transform $\AKLTHon$
into the universal cluster state.
Hence $\AKLTHon$ can fuel universal MBQC~\cite{Wei_12_Two,Miyake_11_Quantum}.

We define a variation $\AKLTPrime$ on $\AKLTHon$.
$\AKLTPrime$ can be prepared efficiently with Posner operations.
Preparing $\AKLTHon$, one projects onto
a spin-$\frac{3}{2}$ subspace.
Preparing $\AKLTPrime$, one projects onto
a larger subspace.
Local measurements (supplemented by Posner hydrolyzation,
singlet formation, and Posner creation)
can transform $\AKLTPrime$ into the universal cluster state.
Hence $\AKLTPrime$ can fuel universal MBQC
as $\AKLTHon$ can.

Whether Posner operations can implement
the extra local measurements,
or the adaptive measurements in MBQC,
remains an open question.
Yet the universality of a Posner state,
efficiently preparable by a (conjectured) biological system,
is remarkable.
Most states cannot fuel universal MBQC~\cite{Gross_09_Most}.
The universality of $\AKLTPrime$ follows from
(i) Posners' geometry and (ii) their ability to share singlets.

% $\AKLTPrime$ not only can serve as a universal resource.
Like $\AKLTHon$, $\AKLTPrime$ is
a projected entangled-pair state (PEPS)~\cite{Verstraete_08_Matrix}.
The state is formed from two basic tensors.
Each tensor has three physical qubits
and three virtual legs.
One virtual leg has bond dimension six.
Each other virtual leg
has bond dimension two.
$\AKLTPrime$ is
the unique ground state of 
some frustration-free Hamiltonian $H_{ \AKLT' }$~\cite{Perez_07_PEPS,Molnar_17_Generalization}.
% Reference about uniqueness: email chain "PEPS" -- message sent on 11/7/17, at 2:32 PM; etc.
The relationship between $H_{ \AKLT' }$ and $H_\Posner$
remains an open question.
So does whether $H_{ \AKLT' }$ has a constant-size gap.
% If $H_{ \AKLT' }$ has, quantum computers can prepare $\AKLTPrime$ efficiently~\cite{Schwarz_12_Preparing}.
% Classical resources can represent $\AKLTPrime$ 
% more compactly than by storing 
% all the coefficients in a wave-function decomposition.

%
%
%
\textbf{Entanglement's influence on binding probabilities:}
Entanglement, Fisher conjectures, can affect
Posners' probability of binding together~\cite{Fisher15}.
He imagined a Posner $A$ entangled with a Posner $A'$ 
and a $B$ entangled with a $B'$.
Suppose that $A$ has bound to $B$.
$A'$ more likely binds to $B'$, Fisher argues,
than in the absence of entanglement.

We formulate a scheme for quantifying
entanglement's influence on binding probabilities.
Two Posners, $A$ and $B$, illustrate the scheme.
First, we suppose that the pair contains no singlets.
Then, we raise the number of singlets incrementally.
In the final case, $A$ and $B$ are maximally entangled.
The binding probability rises from 33.6\% to 100\%.
Our technique can be scaled up
to Fisher's four-Posner example~\cite{Fisher15} and
to clouds of entangled Posners.

\textbf{Comparison with DiVincenzo's criteria:}
DiVincenzo codified the criteria required
for realizing quantum computation and communication~\cite{diVincenzo_00_Physical}.
We compare the criteria with Fisher's narrative.
At least most criteria are satisfied,
if sufficient control is available.
Whether the gate set is universal remains an open question.

\textbf{Organization of this paper:}
Section~\ref{section:Backgrnd_MF} reviews Fisher's proposal.
Section~\ref{section:One_Pos_set_up} 
details the physical set-up
and models Posner creation.
How Posner creation changes the physical encoding of QI
appears in Sec.~\ref{section:How_to_enc}.
QECD codes are presented in Sec.~\ref{section:Code_constraints}.

The model of Posner quantum computation is defined
in Sec.~\ref{section:Abstract_logic}.
Posner binding is analyzed, and applied to incoherent teleportation,
in Sec.~\ref{section:Stick_apps}.
Section~\ref{section:AKLT} showcases
the universal resource state $\AKLTPrime$.

Section~\ref{section:Bio_Bell} quantifies
entanglement's effect on molecular-binding probabilities.
Quantum cognition is compared with 
DiVincenzo's criteria in Sec.~\ref{section:diV}.
Opportunities for further study are detailed in Sec.~\ref{section:Outlook}.

% \onecolumngrid
% \tableofcontents
% \twocolumngrid

%
%
%
\subsection{Review: Fisher's quantum-cognition proposal}
\label{section:Backgrnd_MF}

Biological systems are warm, wet, and large.\footnote{
% < f >
We focus on Fisher's quantum-cognition proposal~\cite{Fisher15}.
Alternative proposals appear in, e.g.,~\cite{Penrose_89_Emperor's,Hameroff_14_Consciousness,Tegmark_15_Consciousness}.
}
% < /f >
Such environments quickly diminish quantum coherences.
% But nuclear spins retain slight coherence in similar environments,
% in nuclear-magnetic-resonance (NMR) experiments~\cite{Cory_00_NMR}.
% But certain nuclear spins enjoy protection.
Fisher catalogued the influences
that could decohere nuclear spins
in biofluids.
Examples include electric and magnetic fields
generated by other nuclear spins and by electrons.

These sources, Fisher estimated, 
decohere the phosphorus-31 ($\Thirtyone$) nuclear spin slowly.
Coherence times might reach $\sim 1$ s,
if the phosphorus occupies a free-floating phosphate ion,
or $10^5 - 10^6$ s,
if the phosphorus occupies a Posner.
No other biologically prevalent atom, Fisher conjectures,
has such a long-lived nuclear spin.

Phosphorus atoms inhabit many biological ions and molecules.
Examples include the phosphate ion, PO$_4^{3-}$.
Three phosphates feature in the molecule 
\emph{adenosine triphosphate} (ATP).
ATP stores energy that powers chemical reactions.
Two phosphates can detach from an ATP molecule,
forming a \emph{diphosphate} ion.

A diphosphate can break into two phosphates,
with help from the enzyme \emph{pyrophosphatase}.
The two phosphates' phosphorus nuclear spins 
form a \emph{singlet},
Fisher and Radzihovsky (F\&R) conjecture~\cite{Fisher15,Fisher_17_Quantum}.
A singlet is a maximally entangled state.
Entanglement is a correlation, shareable by quantum systems,
stronger than any achievable by classical systems~\cite{NielsenC10}.

Many biomolecules contain phosphate ions.
Occupying a small molecule, Fisher argues, could shelter 
the phosphorus nuclear spin:
Entanglement with other particles
could decohere the spin.
% Magnetic
Dipole-dipole interactions with external protons
threaten the spin most.
But protons and small molecules tumble around each other in fluids.
The potential experienced by the phosphorus spin
is expected to average to zero.
% Reference: Swift_17_Posner -- p. 7 -- LHS
% Small molecules tumble in fluids.
% The average of an external field $\mathbf{B}$, 
% over uniform tumbling, vanishes.
% A $\mathbf{B}$ whose average magnitude vanishes 
% cannot decohere spins quickly.

Which small biomolecules could a phosphorus inhabit?
An important candidate is Ca$_9$(PO$_4$)$_6$. 
A Posner consists of 
six phosphate ions (PO$_4^{3-}$) and nine calcium ions (Ca$^{2+}$)~\cite{Treboux_00_Posner,Kanzaki_01_Posner,Yin_03_Posner}.
% Reference for the above: Matthew's arXiv paper -- p. 2, Fig. 1;  p. 4, LHS
Posners form in simulated biofluids
and might form in vivo~\cite{Onuma98Posner,Ayako99Posner,Dey10Posner}.
% Reference for the above: Matthew's arXiv paper -- p. 2, LHS
% Posners not necessarily found in simulated biofluids.
%    Phosphate clusters that resemble Posner molecules have.
%    Reference: Swift_17 Ñ p. 3 Ñ beginning of Sec. III B
A Posner could contain a phosphate 
that forms a singlet with a phosphate in another Posner.
The Posners would share entanglement.

Two Posners can bind together, according to
quantum-chemistry calculations~\cite{Fisher15,Swift_17_Posner}.
The binding projects the Posners
onto a possibly entangled state.
Moreover, pre-existing entanglement could affect
the probability that Posners bind.

Bindings, influenced by entanglement, could influence neuron firing.
Suppose that a Posner $A$ shares 
entanglement with a Posner $A'$
and that a $B$ shares entanglement
with a $B'$.
Posners $A$ and $B$ could enter one neuron,
while $A'$ and $B'$ enter another.
Suppose that $A$ binds with $B$.
The binding, with entanglement,
could raise the probability that
$A'$ binds to $B'$.

Bound-together Posners move slowly, Fisher argues.
Compound molecules must displace 
many water molecules,
which slow down the pair.
Relatedly, the Posner pair has a large moment of inertia.
Hence the pair rotates more slowly than separated Posners
by the conservation of angular momentum.

Hydrogen ions H$^+$ can attach easily to slow molecules,
Fisher expects.
H$^+$ \emph{hydrolyzes} Posners,
breaking the molecules into their constituent ions.
Hence entanglement might correlate 
hydrolyzation of $A$ and $B$ with
hydrolyzation of $A'$ and $B'$.
Hydrolyzation would release Ca$^{2+}$ ions
into the neurons.
Suppose that many entangled Posners hydrolyzed in these two neurons.
The neurons' Ca$^{2+}$ concentrations
could rise.
The neurons could fire coordinatedly due to entanglement.

\section{Results}

\subsection{Physical set-up and Posner-molecule creation}
\label{section:One_Pos_set_up}

% A Posner molecule, ${\rm Ca}_9 ({\rm PO_4})_6$, 
% consists of nine calcium ions ${\rm Ca}^{2+}$ 
% and six phosphate ions ${\rm PO}_4^{-3}$.
% Each phosphate contains a phosphorus nucleus
% whose spin quantum number is $s = \frac{1}{2}$.
% These spins, with the phosphorus nuclei's positions,
% form a physical system that can encode QI.

This section concerns (i) the physical set-up
and (ii) the joining together of phosphates (and calcium ions) 
in Posner molecules.
Part of the material appears in~\cite{Fisher15,Fisher_17_Quantum}
and is reviewed.
Part of the material has not, according to our knowledge,
appeared elsewhere.

The phosphorus nuclei are associated with
spin and spatial Hilbert spaces in Sec.~\ref{section:H_spaces}.
Section~\ref{section:Posner_geo} reviews, and introduces notation for,
the Posner's geometry. 
Section~\ref{section:When_Pos_form} models 
the creation of a Posner
from close-together ions.
% The ions can have distinguishable DOFs before, but not after,
% forming a Posner.

%
%
%
\subsubsection{Spin and spatial Hilbert spaces}
\label{section:H_spaces}

Each phosphorus nucleus has two relevant DOFs:
a spin and a position.
We will sometimes call the position
the \emph{orbital} or \emph{spatial} DOF.
Let $\Hil_\nuc^\spin$ and $\Hil_\nuc^\orb$ denote the associated Hilbert spaces.
The nucleus has a spin quantum number $s = \frac{1}{2}$.
Hence $\Hil_\nuc^\spin  =  \mathbb{C}^2$.
The orbital Hilbert space is infinite-dimensional:
$\dim ( \Hil_\nuc^\orb )  =  \infty$.
Each phosphorus nucleus's Hilbert space decomposes as
$\Hil_\nuc  =  \Hil_\nuc^\spin  \otimes  \Hil_\nuc^\orb$.

The electrons' states
transform trivially under all relevant operations,
Fisher and Radzihovsky (F\&R) conjecture~\cite{Fisher_17_Quantum}.
We therefore ignore the electronic DOFs.
We ignore calcium ions similarly.
We focus on the DOFs that might store QI for long times.

\subsubsection{Posner-molecule geometry and notation}
\label{section:Posner_geo}

% Reference: email chain "NMR," begun on 9/15/17 -- second message in chain, sent on 9/15/17

Quantum-chemistry calculations have shed light on
the shapes available to Posners~\cite{Treboux_00_Posner,Kanzaki_01_Posner,Yin_03_Posner,Swift_17_Posner}.
A Posner's shape depends on the environment.
Posners in biofluids have begun to be studied~\cite{Yin_03_Posner}.
We follow~\cite{Fisher15,Fisher_17_Quantum},
supposing that more-detailed studies will support~\cite{Yin_03_Posner}.\footnote{
% < f >
Reference~\cite{Swift_17_Posner},
released shortly after this paper's initial release,
supports~\cite{Yin_03_Posner}.}
% < /f >

The Posner forms a cube (Fig.~\ref{fig:Geo}).
At each face's center sits a phosphate.
The Posner lacks cubic symmetry,
due to the tetrahedral phosphates' orientations.
But (a stable proposed configuration of) the Posner 
retains S$_6$ symmetry
and one $\CThree_3$ symmetry.
% Reference: Swift_17_Posner -- p. 1 -- RHS
The $\CThree_3$ symmetry is an invariance under
$2 \pi / 3$ rotations about a cube diagonal.

This cube diagonal serves as
the $z$-axis $\hat{z}_\In$ of 
a reference frame fixed in the molecule.
The atoms' positions remain constant relative to 
this \emph{internal frame}.
The internal frame can move relative to the lab frame,
denoted by the subscript ``lab.''
The spins' Bloch vectors are defined with respect to the lab frame.

Imagine gazing down the diagonal, 
as in Fig~\ref{fig:h_axis}.
You would see two triangles
whose vertices consisted of phosphates.
The triangles would occupy parallel planes
pierced by the $\hat{z}_\In$-axis.
The three black dots in Fig.~\ref{fig:Phi} represent
the phosphates closest to you. 
We label this trio's $\zinter$-coordinate by $h_+$.
Farther back, at $\zinter = h_-$, sits
the trio represented by gray dots.
$\hat{z}_\In$ points oppositely 
the direction in which we imagined gazing,
such that $h_+ > h_-$.

% Figure: h-axis
\begin{figure}[tb]
\centering
\begin{subfigure}{0.4\textwidth}
\centering
\includegraphics[width=.9\textwidth]{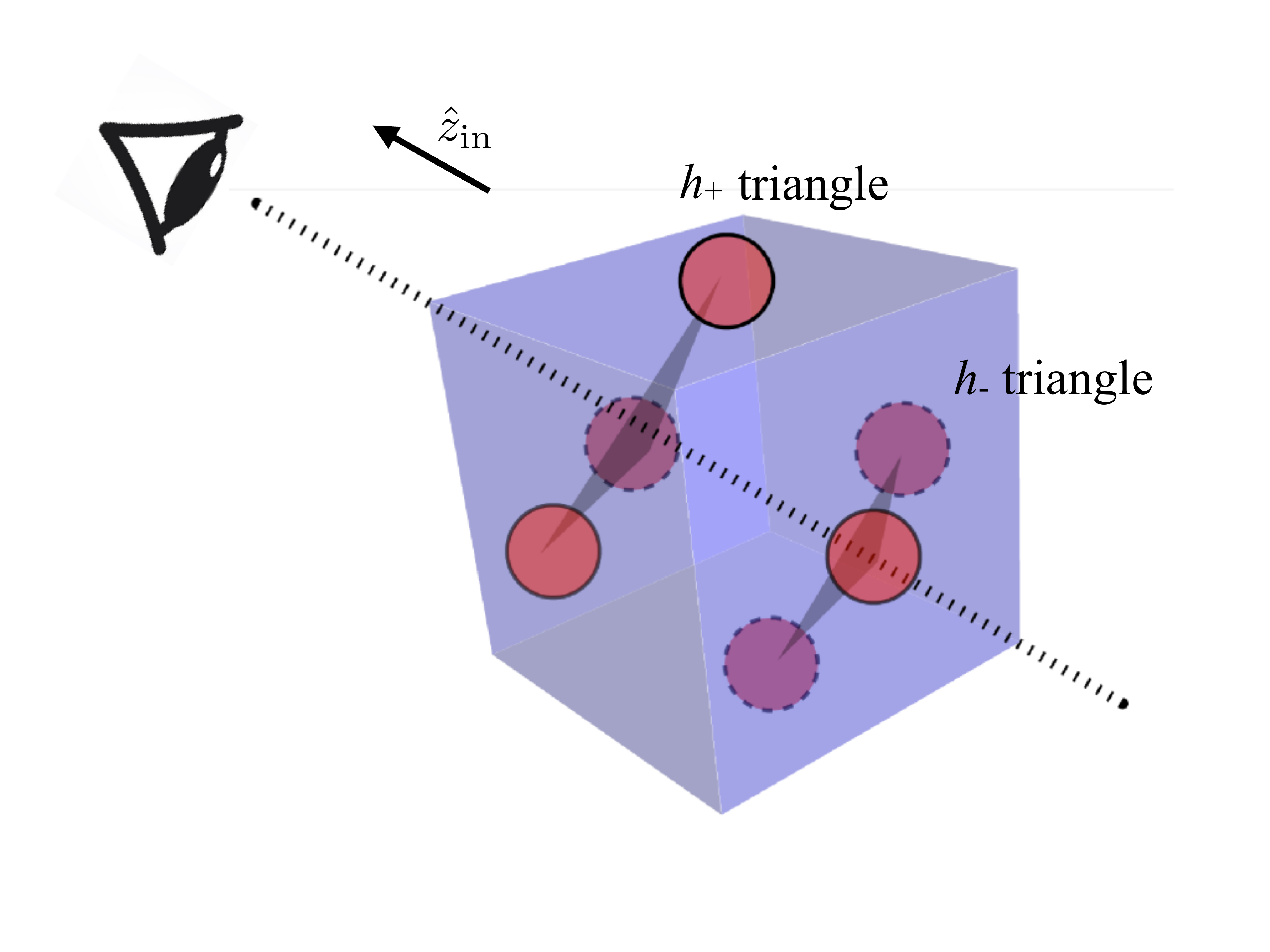}
\caption{}
\label{fig:h_axis}
\end{subfigure}
\begin{subfigure}{0.4\textwidth}
\centering
\includegraphics[width=.9\textwidth]{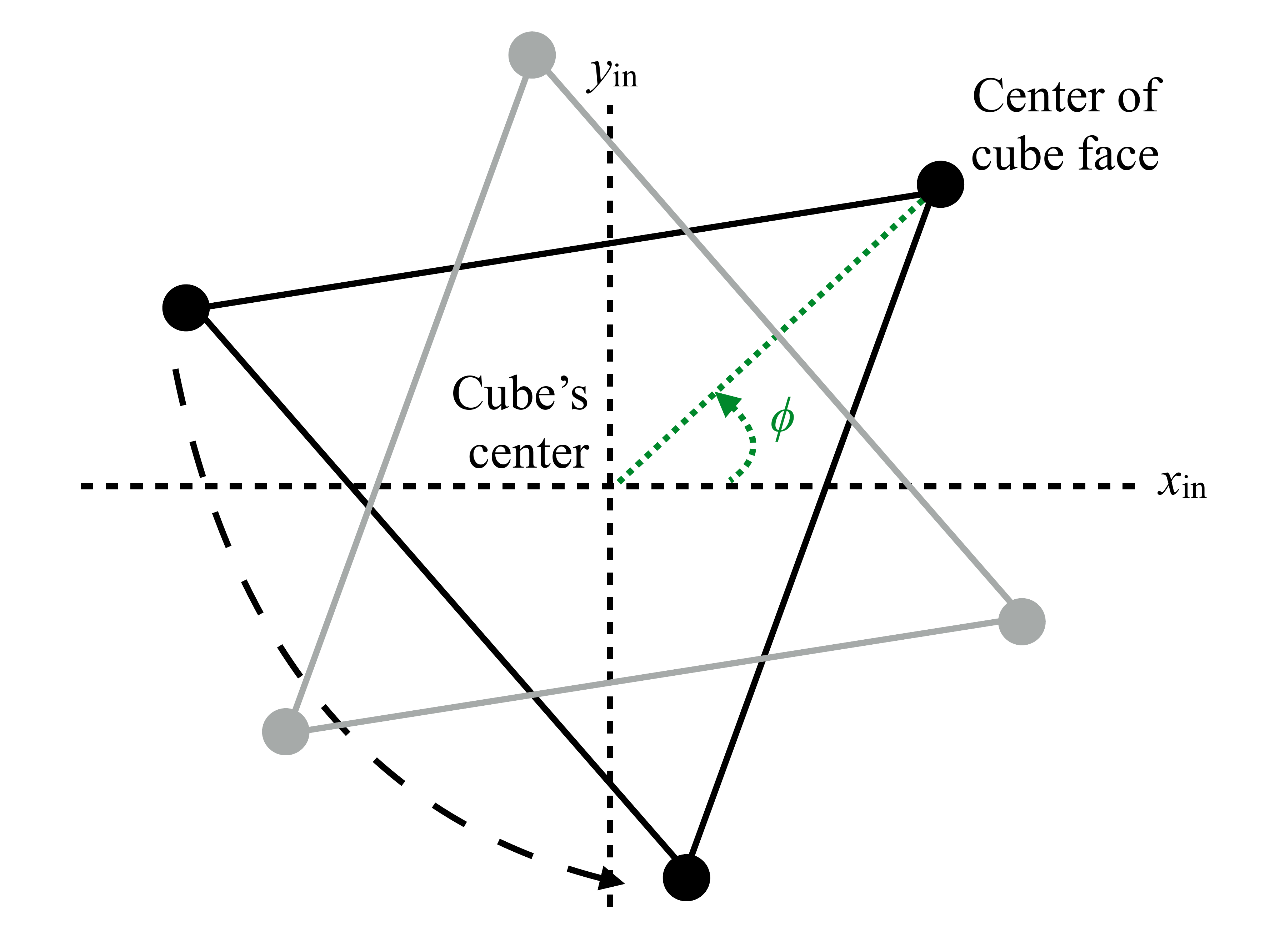}
\caption{}
\label{fig:Phi}
\end{subfigure}
\caption{\caphead{Posner-molecule geometry and coordinates:}
Quantum-chemistry calculations have shed light on 
the Posner molecule's cubic geometry~\cite{Swift_17_Posner,Treboux_00_Posner,Kanzaki_01_Posner,Yin_03_Posner}.
At each cube face's center
sits one phosphate ion (${\rm PO}_4^{3-}$).
The molecule appears to have one threefold symmetry axis
when in biofluids~\cite{Yin_03_Posner,Fisher15}.
The axis coincides with a cube diagonal.
Imagine gazing down the diagonal, as in Fig.~\ref{fig:h_axis}.
We orient the internal $z$-axis, $\hat{z}_\In$,
in the opposite direction.
(The internal reference frame remains fixed relative to the atoms' positions.)
Gazing down the diagonal, one sees
a triangle of phosphate ions
(the black dots in Fig.~\ref{fig:Phi}).
We denote the triangle's $\zinter$-coordinate by $h_+$.
$\phi$ denotes the least angle swept out counterclockwise
from the $+\xinter$-axis to a phosphate.
Behind the black-dot phosphates, at $\zinter = h_-$, sit
phosphates represented by gray dots in Fig.~\ref{fig:Phi}.
The gray dots form a triangle
rotated relative to the black-dot triangle 
through an angle $\pi / 4$.
The triangle pair remains invariant
under rotations, about $\hat{z}_\In$,
through an angle $2 \pi / 3$.
The long-dash line in Fig.~\ref{fig:Phi} illustrates such a rotation.
The invariance endows the Posner with $\text{C}_3$ symmetry.}
% Reference: email chain "NMR," begun on 9/15/17 -- second message in chain, sent on 9/15/17
\label{fig:Geo}
\end{figure}

$\phi$ labels the triangles' orientation, 
as shown in Fig.~\ref{fig:Phi}.
We denote by $\varphi_j$ the angular orientation
of cube face $j$ (the site of a phosphate):
Consider a top-triangle face $j$, at $\zinter = h_+$.
Imagine rotating the $\xinter$-axis counterclockwise 
until it intersects a phosphate.
The angle swept out is $\varphi_j$.
One $h_+$ phosphate's $\varphi_j = \phi$,
another's $\varphi_j = \phi + 2 \pi / 3$,
and another's $\varphi_j = \phi + 4 \pi / 3$.
Now, consider the triangle at $\zinter = h_-$.
Each phosphate sits at an angle $\varphi_j + \pi / 4$,
for $\varphi_j = \phi$, $\phi + 2 \pi / 3$, or $\phi + 4 \pi / 3$.
We label the site of phosphate $j$ with an angle and a height: 
$( \varphi_j, h_j )$.

\subsubsection{Qualitative model for the creation of a Posner molecule}
\label{section:When_Pos_form}

Posners form from phosphate and calcium ions.
We propose a qualitative model for the formation process.
We first review how, 
according to Fisher, phosphorus nuclear spins
might come to form singlets.
We then envision phosphates falling into 
a potential well generated by the ions' mutual attraction
as a Posner forms.
F\&R have discussed the indistinguishability
of phosphorus nuclei in a Posner~\cite{Fisher15,Fisher_17_Quantum}.
We expand upon this discussion,
considering how distinguishable ions
become indistinguishable.
This discussion lays the foundation for constructing
a computational basis for in-Posner phosphorus nuclei.

Several molecules contain phosphate ions ${\rm PO}_4^{3-}$.
Examples include ATP (Sec.~\ref{section:Backgrnd_MF}).
Each ATP molecule contains three phosphates.
Two of the phosphates can break off, 
forming a diphosphate ion.
The enzyme pyrophosphatase can hydrolyze a diphosphate,
cleaving the ion into separated phosphates.
The separated phosphates contain phosphorus nuclear spins that,
Fisher conjectures~\cite{Fisher15}, form a singlet.

Let 1 and 2 label the phosphorus nuclear spins. 
Let $\hat{z}_{\rm enz}$ denote the $z$-axis of
a reference frame fixed in the enzyme.
Let $\hat{S}_{ \zenz }$ denote
the $\zenz$-component
of a phosphorus nucleus's spin operator.
Let $\ket{ \uparrow }$ and $\ket{ \downarrow }$
denote the $\hat{S}_{ \zenz }$ eigenstates:
$\hat{S}_{ \zenz }  \ket{ \uparrow }  =  \frac{ \hbar }{2}  \ket{ \uparrow }$,
and  $\hat{S}_{ \zenz }  \ket{ \uparrow } 
=  - \frac{ \hbar }{2}  \ket{ \uparrow }$.
The singlet has the form
\begin{align}
   \label{eq:Singlet}
   \ket{ \Psi^- }  :=  \frac{1}{ \sqrt{2} }  
   ( \ket{ \uparrow } \ket{ \downarrow }
   -  \ket{ \downarrow } \ket{ \uparrow } ) \, .
\end{align}
The singlet is one of the four \emph{Bell pairs}.
The Bell pairs are mutually orthogonal,
maximally entangled states of pairs of qubits~\cite{NielsenC10}.
Bell pairs serve as units of entanglement in QI.

Phosphorus nuclei are identical fermions,
as F\&R emphasize~\cite{Fisher15,Fisher_17_Quantum}.
But some of the nuclei's DOFs might be distinguishable 
before Posners form.
Consider, for example, two ATP molecules
on opposite sides of a petri dish.
Call the molecules $A$ and $B$.
A diphosphate could break off from each ATP molecule.
Each diphosphate could hydrolyze into two phosphates,
$A_1$ and $A_2$ or $B_1$ and $B_2$.
Consider the phosphorus nuclear spins
of one phosphate pair---say, of $A_1$ and $A_2$.
These spins would be indistinguishable:
Neither nucleus could be associated with 
an upward-pointing spin or with a downward-pointing spin.

But the spatial DOF of $A_1$ and $A_2$
could be distinguished from
the spatial DOF of $B_1$ and $B_2$:
We can imagine painting phosphate pair $A$ red 
and phosphate pair $B$ blue.
The phosphate pairs could diffuse to the dish's center.
The red pair and the blue pair could be tracked
along their trajectories.

% Figure: Lennard-Jones potential
\begin{figure}[tb]
\centering
\includegraphics[width=.4\textwidth, clip=true]{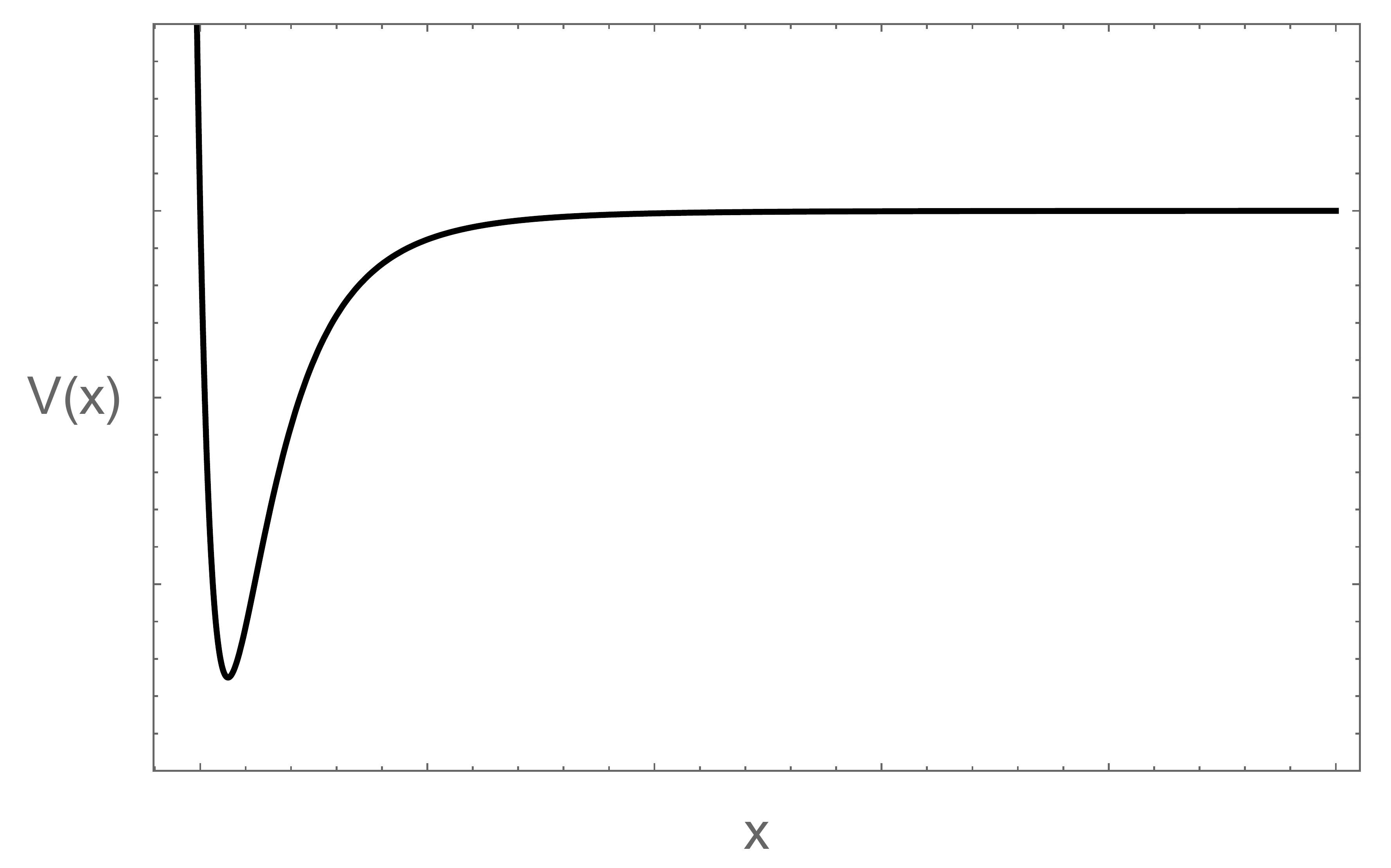}
\caption{\caphead{
Potential that models the ions' long-range attraction
and short-range repulsion:}  
Van der Waals forces draw particles together weakly at long range
and force particles apart strongly at short range.
The Lennard-Jones potential,
$(x)  =  \frac{ a }{ x^{12} }  -  \frac{b}{ x^6 }$,
forms a toy model for this qualitative behavior.
The real parameters $a, b > 0$.
We approximate qualitatively, with $V (x)$, 
the potential experienced by 
phosphate ions coalescing into a Posner molecule.
$x$ denotes the distance from a phosphate
to the system's center of mass.
}
\label{fig:Lennard_Jones}
\end{figure}

Consider six phosphates (and nine $\Ca\Two$ ions) approaching each other.
They are expected to attract each other weakly when far apart
and to repel strongly when close together.
The Lennard-Jones potential (Fig.~\ref{fig:Lennard_Jones}), 
used in molecular-dynamics simulations~\cite{Rapaport_04_Art},
captures these qualitative behaviors:
We temporarily approximate each phosphate as having
a classical position.
If $x$ denotes some phosphate's distance from 
the ions' center of mass,
\begin{align}
   \label{eq:LJ}
   V(x)  =  \frac{ a }{ x^{12} }  -  \frac{b}{ x^6 }  \, .
\end{align}
The real parameters $a, b  > 0$.

Where the concavity changes from negative to positive,
$\frac{ d^2  V (x) }{ d x^2 }  =  0$,
the potential has a ``lip.''
The ions have more energy, separated, than
they would have in a molecule.
The ions slide down the potential well,
releasing binding energy as heat.
The heat disrupts the environment,
which effectively measures the ions' state.\footnote{
% < f >
That the environment measures the state via heat transfer
was proposed in~\cite{Fisher15}.}
% < /f >

At the well's bottom, the ions constitute a Posner molecule.
The phosphorus nuclei's quantum states
have position representations (wave functions) that overlap significantly.
The nuclei are indistinguishable~\cite{Fisher_17_Quantum}:
No nuclear pair can be identified as red-painted
or as blue-painted.
% No identifiable nuclear pair forms a singlet
% unentangled with any other identifiable nucleus.
The six phosphorus nuclei occupy 
a totally antisymmetric spin-and-spatial state.
We will abbreviate ``totally antisymmetric'' as ``antisymmetric.''

\subsubsection{Formalizing the model for Posner-molecule creation}
\label{section:Formal_Pos_form}

Let us model, with mathematical tools of QI,
the environment's measuring of the ions,
the creation of a Posner,
and the antisymmetrization process.
Let $\TPos$ denote the scale of the time over which
the ions slide down the Lennard-Jones well from the lip,
emit heat, jostle about, and settle 
into the Posner geometry.

The environment effectively measures the ions 
with a frequency $1 / \TPos$.
We model the measurement with a
\emph{projector-valued measure} (PVM)~\cite{NielsenC10}.
Consider the Hilbert space 
$( \Hil_\nuc )^{\otimes 6 }$
of the Posner's six phosphorus nuclei.
An antisymmetric subspace $\HilPos$ consists of the states
available to the indistinguishable nuclei.
(The states are detailed in Sec.~\ref{section:How_to_enc_transf}.)
The subscript stands for ``no-colliding-nuclei'':
No two nuclei can inhabit the same Posner-cube face.

Let $\PiPos$ denote the projector onto $\HilPos$.
The PVM has the form
\begin{align}
   \label{eq:Form_Pos_PVM}
   \Set{  \PiPos  ,  \id  -  \PiPos  }  \, .
\end{align}

Suppose that one length-$(1 / \TPos )$ time interval has just passed.
The environment has measured the ions.
Suppose that, during the interval, 
the ions have emitted considerable heat.
The environment has registered 
the outcome ``Yes, a Posner has formed.''
$\PiPos$ has projected the ions' joint state.

Suppose, instead, that the ions have not emitted much heat.
The environment has registered the outcome ``No, no Posner has formed.''
$\id  -  \PiPos$ has projected the ions' joint state.\footnote{
% < f >
One might try to model the environment
as measuring the ions continuously.
This model is unfaithful:
The environment would continuously project the ions
onto states inaccessible to a Posner.
No Posner could form,
due to the quantum Zeno effect~\cite{Misra_77_Zeno's}.
The Posner-creation time $\TPos$
sets the measurement's time scale.}$^,$\footnote{
% < /f >
% < f >
F\&R suggest that,
upon forming, a molecule is entangled with 
its environment~\cite[Eq.~(7)]{Fisher_17_Quantum}.
Our PVM is consistent with F\&R's model,
by the principle of deferred measurement~\cite{NielsenC10}:
Let $S$ denote a general quantum system.
A measurement of $S$ consists of two steps:
First, $S$ is entangled with a memory $M$.
Second, $M$ is measured.
Suppose that (i) the entanglement is maximal
and (ii) the $M$ measurement is projective.
The $M$ measurement projects the system's state.
Suppose that $S$ evolves after the $M$ measurement.
This entangling, $M$ measurement, and evolution
is equivalent to
the entangling, followed by the $S$ evolution, 
followed by the $M$ measurement.
The $M$ measurement can be deferred
until after the evolution.
Deferral fails to alter the measurement statistics.
Let $S$ denote the nuclei, and let $M$ denote the environment.
The $M$ measurement is deferred
in F\&R's model, not in ours.
The models are equivalent, by the deferred-measurement principle.}
% < /f >

Let $\hat{\mathbf{S}}_{1 \ldots 6}$ denote 
the six phosphorus nuclei's total spin operator.
% Let $\hat{S}^{\zinter}_\tot$ denote 
% the $z$-component relative to the internal frame.
We assume that Posner creation can be modeled as
a two-stage process.
First, the independent phosphates tumble in the fluid.
They might experience magnetic fields 
generated by firing neurons.
The spins would rotate unitarily.
Second, the phosphates combine into a Posner
via an evolution that preserves $( \hat{S}^{ \zlab } )^{ \otimes 6 }$.

The assumption follows from
Fisher's claims that the spins barely decohere~\cite{Fisher15}:
The spins do not entangle with anything.
At worst, therefore, the spins rotate on the Bloch sphere
during Posner creation.
Most rotations fail to preserve $\hat{S}^{ \zlab }$.
But Posner creation that involves rotations
is mathematically equivalent to 
(i) rotations followed by 
(ii) $( \hat{S}^{ \zlab } )^{ \otimes 6 }$-conserving Posner creation.
The initial rotations can be absorbed into
the pre-Posner rotations.
We therefore will say that Posner creation 
``essentially preserves'' $( \hat{S}^{ \zlab } )^{ \otimes 6 }$.

\subsection{Encoded states and their changing physical representations}
\label{section:How_to_enc}

Phosphorus nuclear spins cleanly encode QI before Posners form.
The spins, Fisher conjectures, are decoupled from 
the nuclei's positions~\cite{Fisher15}.
Posner creation antisymmetrizes the spin-and-orbital state.
The spins become entangled with the positions,
no longer encoding QI cleanly.

But, we posit, Posner creation maps
each pre-Posner spin state
to an antisymmetric Posner state deterministically.
Posner creation preserves QI but changes
how QI is encoded physically.
Hence spin configurations can label
a computational basis for the Posner Hilbert space, e.g.,
$\uparrow \uparrow \uparrow \uparrow \uparrow \uparrow 
\equiv  000000$.

This section is organized as follows.
Section~\ref{section:How_to_enc_pre} concerns
premolecule phosphorus nuclear spins.
Section~\ref{section:Encodings} reviews the formalism of encodings.
A map between (i) physical states of pre-Posner spins
and (ii) logical states is formalized.
Logical states are mapped to Posner states 
in Sec.~\ref{section:How_to_enc_transf}.

\subsubsection{Physical encoding of quantum information 
in the phosphorus nuclei that will form a Posner molecule}
\label{section:How_to_enc_pre}

Consider six phosphates that approach each other,
soon to form (with $\Ca\Two$ ions) a Posner.
We index the phosphorus nuclei as $a = 1 , 2, \ldots, 6$.
Each nucleus has a spin DOF and an orbital DOF.
Nucleus $a$ occupies some quantum state 
$\rho_a  \in  \mathcal{D} ( \Hil_\nuc )$.
$\mathcal{D} ( \Hil )$ denotes 
the set of density operators (trace-one linear operators) 
defined on the Hilbert space $\Hil$.
$\rho_a$ may be pure (unentangled with any external DOFs)
or mixed (entangled with external DOFs, e.g., 
another phosphorus nucleus's spin).

Tracing out the orbital DOF yields the reduced spin state:
$\rho_a^\spin  :=  \Tr_\orb ( \rho_a )  \in  \mathcal{D} ( \mathbb{C}^2 )$.
The magnetic spin quantum number $m_a  =  \pm \frac{1}{2}$ 
quantifies the spin's $\zlab$-component.

Shifting focus from chemistry to information theory, 
we adopt QI notation:
We usually omit hats from operators,
and we often omit factors of $\hbar$ and $\frac{1}{2}$.
We often replace the spin operator's $\alpha$-component with 
the Pauli $\alpha$-operator, for $\alpha = x, y , z$:
$\hat{S}^\alpha  
\equiv  S^\alpha
=  \frac{ \hbar }{ 2 }  \:  \sigma^\alpha
\equiv  \sigma^\alpha$.
The $\sigma^z$ eigenstates are often labeled as
$\ket{ 0 }  :=  \ket{ \uparrow }$ and
$\ket{ 1 }  :=  \ket{ \downarrow }$.

Tracing out the spin DOF from $\rho_a$ yields 
the reduced orbital state:
$\rho_a^\orb  :=  \Tr_\spin ( \rho_a )  \in  \mathcal{D} ( \Hil_\nuc^\orb )$.
We parameterize $\Hil_\nuc^\orb$ with
the eigenstates $\ket{ \mathbf{x} }$ of the position operator, $\mathbf{x}$.
The coordinates are defined with respect to the lab frame. 
$\Set{ \ket{ \mathbf{x} } }$ forms a continuous set.

The spin and/or orbital DOFs can store QI.
But water and other molecules buffet the phosphates.
An independent phosphate's position is expected to be mixed.
The spin, in contrast, is expected to remain coherent
for long times. 
(See Sec.~\ref{section:diV} and~\cite{Fisher15}.)
The spins encode protected QI.

The nuclear spins form six qubits.
The qubits correspond to the Hilbert space
$(\Hil_\nuc^\spin)^{ \otimes 6 }  =  \mathbb{C}^{12}$, 
which has dimensionality $2^6  =  64$.
A useful basis for $\Hil_\nuc^\spin$ consists of tensor products
of $\sigma^z$ eigenstates:
$\Basis_\comp  :=
\Set{ \ket{ 0 , 0 , \ldots , 0 } , \ket{ 0 , 0 , \ldots , 0 , 1 } ,  
\ldots ,  \ket { 1 , 1 , \ldots , 1 } }$.
The notation $\ket{ A , B , \ldots, K }  \equiv  
\ket{A} \otimes \ket{B} \otimes \ldots \otimes \ket{K}$.
The set $\Basis_\comp$  the \emph{computational basis
for the physical states}.

Consider $\Sites$ hextuples of phosphates
($\Sites$ sets of six phosphates).
The phosphorus nuclei correspond to 
a spin space $\mathbb{C}^{ 6 \Sites }$. 
We suppose, without loss of generality, 
that the $6 \Sites$ spins occupy 
a pure joint state $\ket{ \psi }$.
Each hextuple could contain three singlets, for example.
Or a spin in some hextuple $A$
could form a singlet with
a spin in some hextuple $B$.
% Entanglement patterns are detailed in Sec.~\ref{section:Assess_power}.

%
%
%
\subsubsection{Notation and quick review: Encodings}
\label{section:Encodings}

Imagine an agent Alice who wishes 
to send another agent, Bob, a message.
A quantum message is a quantum state $\ket{ \psi_\Log }  \in  \Hil_\Log$, 
called the \emph{logical state}.
Let $\Basis^\Log_\comp$ %  :=  \Set{ 
% \ket{ 0_\Log, \ldots , 0_\Log } ,  \ldots  \ket{ 1_\Log , \ldots , 1_\Log } }$ 
denote a preferred basis for
the Hilbert space $\Hil_\Log$.
Operations are expressed in terms of this
\emph{computational basis for the logical space}.

Alice must encode $\ket{ \psi_\Log }$
in the state of a physical system.
The agents would choose a \emph{code},
a dictionary between the computational basis $\Basis^\Log_\comp$ 
for the logical space
and the computational basis $\Basis_\comp$ for the physical space.
Alice would decompose $\ket{ \psi_\Log }$ 
in terms of $\Basis^\Log_\comp$ elements $\ket{ j_\Log }$;
replace each $\ket{ j_\Log }$
with a $\Basis_\comp$ element $\ket{ j }$;
and prepare the resultant \emph{physical state}:
$\ket{ \psi_\Log }  =  \sum_j  c_j  \ket{ j_\Log }
=  \sum_j  c_j  \ket{ j }  =  \ket{ \psi }$.

$\Hil_\Log$ cannot be arbitrarily large,
if the encoding is \emph{faithful}.
A faithful encoding can be reversed,
yielding the exact form of $\ket{ \psi_\Log }$. 
The six-qubit state $\ket{ \psi }$ can faithfully encode
a $\ket{ \psi_\Log }$ of $\leq 6$ qubits,
called \emph{logical qubits}.
The phosphorus nuclear spins---the 
physical DOFs that encode the logical qubits---are 
called \emph{physical qubits}.

Suppose that $\ket{ \psi_\Log }$ is a state of six logical qubits.
% $\ket{ \psi_\Log }  =  \sum_{ m_1 , \ldots , m_6 = 0 , 1 }
% c_{ m_1 \ldots m_6 }  \ket{ m_1 \ldots m_6 }$.
We label the logical space's computational basis as
$\Basis^\Log_\comp  =  \Set{
\ket{ 0 0 \ldots 0 },  \ket{ 0 0 \ldots 0 1 } ,  \ldots ,  \ket{ 1 1 \ldots 1 }  }$.
A simple code from $\Basis_\comp$ to $\Basis^L_\comp$
has the form 
\begin{align}
   \label{eq:Simple_code}
   \ket{ m_{1 } , \ldots ,  m_{6 } } 
   \equiv  \ket{ m_1  \ldots m_6 }  \, ,
\end{align}
for $m_1 , \ldots,  m_6  =  0,1$.
For example, all six physical qubits' pointing upward is equivalent to
all six logical qubits' pointing upward:
$\ket{ 0 , \ldots , 0 }  =  \ket{ 0 \ldots 0 }$.

\subsubsection{Transformation of the encoding during 
Posner-molecule creation}
\label{section:How_to_enc_transf}

% Figure: Code transforms as Posners form
% \begin{figure}[tb]
% \centering
% \includegraphics[width=.23\textwidth, clip=true]{Code_transfn}
% \caption{\caphead{The formation of Posner molecules changes
% how physical degrees of freedom (DOFs)
% encode quantum information (QI):}
% The phosphate ion (${\rm PO}_4^{3-}$)
% contains a phosphorus atom ($^{31}$P)
% whose $s = \frac{1}{2}$ spin stores QI.
% Six independent such spins
% correspond to a Hilbert space $( \mathbb{C}^2 )^{ \otimes 6}$
% of dimensionality $2^6 = 64$
% (illustrated with the small square in the top image
% and the small blue square in the bottom image).
% The six spins can encode a logical Hilbert space
% of dimensionality 64 (illustrated with the green small square
% in the top image).
% The spins entangle with orbital DOFs
% when Posners form:
% Each spin comes to occupy one of six possible positions,
% labeled by a tuple $( \varphi, h )$, in the Posner.
% Suppose that the nuclei were assigned to positions independently.
% The six nuclei's positions would correspond to
% a Hilbert space of dimensionality $6^6$
% (illustrated with the large, blue squares).
% But the Posner can occupy only antisymmetric states
% in which no nuclei collide.
% These antisymmetric states span a Hilbert space $\HilPos$
% of dimensionality 64 (illustrated with the green small square
% in the bottom image).
% $\HilPos$ stores the QI
% stored originally in the spins.
% This figure sketches suggestively how
% the physical encoding changes during Posner creation.
% The components are not drawn to scale.}
% \label{fig:Code_transfn}
% \end{figure}
%

Consider six phosphates that join together, forming a Posner.
The phosphorus nuclei might begin with distinguishable DOFs
(Sec.~\ref{section:When_Pos_form}).
The spins entangle with each other
and with orbital DOFs~\cite{Fisher15,Fisher_17_Quantum}.
% The physical qubits' state changes;
The QI $\ket{ \psi_\Log }$ stored in the spins 
``spills'' into the orbital DOFs.

But, we posit, Posner creation maps 
each pre-Posner spin state
to an antisymmetric Posner state deterministically.
The physical qubits change from spins to
spin-and-orbital DOFs.
The physical state's form changes 
from $\ket{ \psi }  \in  \mathbb{C}^{12}$ to 
some $\ket{ \psi' }  \in  \HilPos$.
The Posner state $\ket{ \psi' }$ encodes $\ket{ \psi_\Log }$ faithfully.
% Figure~\ref{fig:Code_transfn} shows a suggestive sketch.

Reparameterizing position will prove useful.
We labeled by $\mathbf{x}$ a pre-Posner 
phosphorus nucleus's position.
A Posner's phosphorus nuclei occupy
the centers of cube faces (Fig.~\ref{fig:Geo}).
Let $\mathbf{r}  =  ( r , \varphi, h )$ label
a nucleus's position 
relative to the cube's center.
The cube's size determines each nucleus's 
distance $r$ from the cube center.
Hence we suppress the $r$:
$\ket{ \mathbf{r} }  \equiv  \ket{ \varphi, h }$.
The angle variable is restricted to 
$\varphi  =  \phi ,  \phi + 2 \pi / 3,  \phi + 4 \pi / 3$
(Fig.~\ref{fig:Phi}).
The height variable is restricted to $h  =  h_\pm$ (Fig.~\ref{fig:h_axis}).

Which states can one phosphorus nucleus occupy
when in a Posner?
One might reason na\"ively as follows.
The basis $\Set{ \ket{0} ,  \ket{1} }$ spans 
the nuclear-spin space $\Hil_\nuc^\spin$.
The basis $\Set{ \ket{ \varphi ,  h } }$ spans 
the nuclear-position space $\Hil_\nuc^\orb$.
Hence a product basis spans the nuclear Hilbert space
$\Hil_\nuc  =  \Hil_\nuc^\spin  \otimes  \Hil_\nuc^\orb$:
\begin{align}
   \label{eq:Prod_basis}
   & \{  \ket{ 0 ;  \phi , h_+ } ,   \ket{ 0 ;  \phi ,  h_- } ,
   \ket{ 0 ;  \phi + 2 \pi / 3 ,  h_+ } ,  \ket{ 0 ; \phi + 2 \pi / 3 ,  h_- }  ,
   \nonumber \\ & \;  \;
   \ket{ 0 ;  \phi + 4 \pi / 3 ,  h_+ } ,  \ket{ 0 ; \phi + 4 \pi / 3 ,  h_- }  ,
   \ket{ 1 ;  \phi , h_+ } ,   \ket{ 1 ;  \phi ,  h_- } ,
   \nonumber \\ & \;  \;
   \ket{ 1 ;  \phi + 2 \pi / 3 ,  h_+ } ,  \ket{ 1 ; \phi + 2 \pi / 3 ,  h_- }  ,
   \ket{ 1 ;  \phi + 4 \pi / 3 ,  h_+ } ,  
   \nonumber \\ & \;  \;
   \ket{ 1 ; \phi + 4 \pi / 3 ,  h_- } \}  \, .
\end{align}
We have condensed tensor products 
$\ket{ m }  \otimes  \ket{ \varphi , h }$
into $\ket{ m ; \varphi , h }$.
One might expect the phosphorus nucleus 
to be able to occupy any state in~\eqref{eq:Prod_basis}.
The hextuple of nuclei would be able to occupy a product state
\begin{align}
   \label{eq:Prod_state}
   \ket{ m_1 ;  \varphi_1 , h_1 }  \otimes
   % \ket{ m_{\beta} ;  \varphi_\beta , h_\beta }_2  \otimes
   \ldots \otimes
   \ket{ m_6 ;  \varphi_6 , h_6  }  \, .
\end{align}

The nuclei cannot occupy such a state, 
due to their indistinguishability.
The nuclei are fermions.
Hence Posner formation antisymmetrizes the nuclei's joint state. 
We have assumed, in the spirit of~\cite{Fisher15}, 
that Posner creation essentially preserves 
each phosphorus nucleus's $S^{ \zlab }$ 
(Sec.~\ref{section:Formal_Pos_form}). 
Hence the pre-Posner nuclei's 
set $\Set{ m }$ of spin quantum numbers
equals the in-Posner nuclei's set.
But Posner creation prevents any particular $m$
from corresponding, anymore, 
to any particular nucleus.
The nuclei delocalize across the cube-face centers.

Let us mathematize this physics.
The one-nucleus states~\eqref{eq:Prod_basis}
combine into the antisymmetric six-nucleus states
\begin{align}
   \label{eq:Slater}
   & \frac{1}{ \sqrt{ 6! } }
   \sum_{ \alpha = 1}^{ 6! }   \:   
   \bigotimes_{j = 1 }^6   \:
   (-1)^{ \pi_\alpha }
   \ket{ m_{ \pi_\alpha (j) }  ,  \mathbf{r}_{ \pi_\alpha ( j ) } }  \\  \nonumber
   &  :=  \ket{ 
      ( m_{1},  \mathbf{r}_1 )  
      ( m_{2},  \mathbf{r}_2 )  
      ( m_{3},  \mathbf{r}_3 ) ;
      ( m_{4},  \mathbf{r}_4 )  
      ( m_{5},  \mathbf{r}_5 )  
      ( m_{6},  \mathbf{r}_6 )  }  \, .
\end{align}
Each term contains a tensor product of six one-nucleus kets.
Each ket is labeled by one tuple 
$( m_{ \pi_\alpha (j) }  ,  \mathbf{r}_{ \pi_\alpha ( j ) } )$.
No tuple equals any other tuple in the same term,
by Pauli's exclusion principle.
Permuting one term's six tuples yields
another term, to within a minus sign.

$\pi_\alpha$ denotes the $\alpha^\th$ term's permutation.
The permutation's sign,
$(-1)^{ \pi_\alpha }  =  (-1)^{ \text{parity of permutation} }$,
equals the term's sign.\footnote{
% < f >
A permutation's parity is defined as follows.
Let $\pi_{0}$ denote the first term's permutation.
Consider beginning with $\pi_0$ and 
swapping ket labels pairwise.
Some minimal number $n_\ell$ of swaps
yields permutation $\pi_\ell$.
The parity of $n_\ell$ is the parity of $\pi_\ell$.}
% < /f >
The semicolon in Eq.~\eqref{eq:Slater} separates
the $h_+$ spins from the $h_-$ spins.
\eqref{eq:Slater} is equivalent to 
a Slater determinant~\cite{Ashcroft_76_Solid}.

If not for the Posner's geometry,
two tuples could contain the same position variables.
$\mathbf{r}_1$ could equal $\mathbf{r}_3$, for example,
if $m_{1}$ did not equal $m_{3}$.
But each cube face can house only one phosphate.
The phosphorus nuclei's state occupies the
\emph{no-colliding-nuclei subspace} $\HilPos$ 
of the antisymmetric subspace.

Posner creation, we posit, projects the nuclei's state
onto $\HilPos$. The projector has the form
\begin{widetext}
\begin{align}
      \label{eq:PiMinus} 
      & \PiPos  :=  {\sum}'  \:
      \Big\lvert  \substack{
      ( m_1,  \mathbf{r}_1 )  
      ( m_{2},  \mathbf{r}_2 )  
      ( m_{3},  \mathbf{r}_3 ) ;  \\
      ( m_{4},  \mathbf{r}_4 )  
      ( m_{5},  \mathbf{r}_5 )  
      ( m_{6},  \mathbf{r}_6 ) } 
      \Big\rangle 
      \Big\langle  \substack{
      ( m_{1},  \mathbf{r}_1 )  
      ( m_{2},  \mathbf{r}_2 )  
      ( m_{3},  \mathbf{r}_3 ) ;  \\
      ( m_{4},  \mathbf{r}_4 )  
      ( m_{5},  \mathbf{r}_5 )  
      ( m_{6},  \mathbf{r}_6 )  }
      \Big\rvert   \, .
   \end{align}
\end{widetext}
The sum $\sum'$ runs over values of $(m_1 , \ldots , m_6)$.
% Different terms in $\sum'$ correspond to
% different values of $(m_1 , \ldots , m_6)$.
% In one term, for example, $( m_1 , \ldots , m_6 ) = ( 0 , \ldots 0 )$.
% In another term, $( m_1 , \ldots , m_6 ) = ( 1 , \ldots , 1 )$.
The value of $( \mathbf{r}_1 , \ldots \mathbf{r}_6 )
= \LParen ( h_+ , \phi ) , \ldots , ( h_- , \phi + 4 \pi / 3 ) \RParen$
remains invariant throughout the terms.\footnote{
% < f >
Each pre-Posner spin variable $m$ pairs 
with one position $\mathbf{r}$.
What determines which spin pairs with $\mathbf{r}_1$?
Two factors: (i) the choice of coordinate system and
(ii) the phosphates' pre-Posner positions and momenta.
See App.~\ref{eq:Subtle_HilPos} for details.}
% < /f >
In every term,
the first spin quantum number, $m_1$,
would correspond to the position $\mathbf{r}_1 = ( h_+ , \phi )$.
Different terms correspond to different values $m_1 = 0 , 1$.

Projection by $\PiPos$ applies the map
\begin{align}
   \label{eq:Pos_form_map}
   & \ket{ m_{1} }  \otimes  \ldots  \otimes
   \ket{ m_{6 } }
   \mapsto  \\ \nonumber & 
   % % %
   \ket{ 
      ( m_{1},  \mathbf{r}_1 )  
      ( m_{2},  \mathbf{r}_2 )  
      ( m_{3},  \mathbf{r}_3 ) ;
      ( m_{4},  \mathbf{r}_4 )  
      ( m_{5},  \mathbf{r}_5 )  
      ( m_{6},  \mathbf{r}_6 )  }  .
\end{align}
The left-hand side (LHS) represents
an element of the computational basis $\Basis_\comp$ for the space 
$( \Hil_\nuc )^{\otimes 6}$ of
the pre-Posner physical qubits. 
The right-hand side (RHS) represents
an element of the computational basis 
$\Basis_\comp^\Posner$ for the space 
$\HilPos$ of the in-Posner physical qubits.

Each pre-Posner state consists of a unique assignment
of $m$-values to nuclei,
a unique distribution of six fixed $m$-values
across six kets.
Similarly, each Posner state consists of a unique assignment
of $m$-values to positions,
a unique distribution of six fixed $m$-values
across six $\mathbf{r}$-values.
Sixty-four pre-Posner $\Basis_\comp$ states exist.
Hence 64 $\Basis_\comp^\Posner$ basis elements
must exist.
A counting argument in App.~\ref{section:Dim_anti}
confirms this conclusion.

Let us combine the map~\eqref{eq:Pos_form_map}
with the simple code~\eqref{eq:Simple_code}.
The result is another simple code.
This code maps between 
(i) elements of the computational basis $\Basis_\comp^\Posner$ 
for the Posner space and
(ii) elements of the computational basis $\Basis^\Log_\comp$
for the logical space:
\begin{align}
   \label{eq:Simple_code2}
   & \ket{ 
      ( m_{1},  \mathbf{r}_1 )  
      ( m_{2},  \mathbf{r}_2 )  
      ( m_{3},  \mathbf{r}_3 ) ;
      ( m_{4},  \mathbf{r}_4 )  
      ( m_{5},  \mathbf{r}_5 )  
      ( m_{6},  \mathbf{r}_6 ) }
   \nonumber \\ & 
   % % %
   =  \ket{ m_1 m_2 \ldots m_6  }  \, .
\end{align}
Equation~\eqref{eq:Simple_code2} shows how
the QI, initially stored in pre-Posner spin states,
is encoded faithfully in spin-and-orbital states.
We will often replace the physical state's label
(the LHS)
with the logical state's label (the RHS),
to streamline notation.

\subsection{Charge-protected encodings for 
quantum information stored in Posner molecules}
\label{section:Code_constraints}

The computational-basis elements~\eqref{eq:Simple_code2}
are states of spin-and-orbital DOFs.
(Each computational-basis element is 
a multifermion state.
Every multifermion state is, by the spin-statistics theorem, 
an antisymmetric state of
the fermions' spins and positions.)
The Posner's dynamics conserve the spins' states for long times,
Fisher hypothesizes~\cite{Fisher15}.
The dynamics might not conserve the orbital DOFs' states.
Hence the dynamics might not conserve
the states~\eqref{eq:Simple_code2}.

But we posit, guided by~\cite{Fisher15,Fisher_17_Quantum}, 
that the Posner's dynamics conserve certain charges:
(i) the generator $\GC$ of
a permutation operator $\CThree$ (Sec.~\ref{section:C_symm}) and
(ii) the total spin operator's $\zlab$-component,
$S^{ \zlab }_{1 \ldots 6}$
(Sec.~\ref{section:S_z_tot_symm}).
Eigenstates shared by these charges
(Sec.~\ref{section:Eigenbasis})
may be conserved.

The dynamics likely will not map
an eigenstate $\ket{ \psi }$,
associated with eigenvalues $\tau_\psi$ and 
$m_{1 \ldots 6}^{ ( \psi ) }
 :=  \sum_{j = 1}^6  m_j^{ (\psi) }$
of the charges,
into an eigenstate $\ket{ \phi }$
associated with different eigenvalues 
$\tau_\phi$ and $m_{1 \ldots 6}^{ ( \phi ) }$.
These eigenstates may serve as long-lived codewords.
Charge preservation helps ``protect'' such codes. % \footnote{
% < f >
% Comparing charge-protected codes
% to ground-space codes is elucidating.
% QEC codes can be cast as 
% Hamiltonians' ground-state spaces.
% Suppose that the physical system remains in equilibrium
% at a low temperature $T \gtrsim 0$.
% Entropy tends to keep the system's state in the code.
% This protection resembles, and differs from,
% the protection of 

We identify a quantum error-detecting code 
partially protected by $\CThree$.
A repetition code is partially protected by 
$S_{ 1 \ldots 6}^{ \zlab }$.
Section~\ref{section:QEC} introduces these codes.

\subsubsection{Conserved charge 1: The generator $\GC$ of
the permutation operator $\CThree$}
\label{section:C_symm}

Consider rotating a Posner counterclockwise,
about the symmetry axis $\hat{z}_\In$,
through an angle $2 \pi / 3$.
The molecule's post-rotation structure (arrangement of atoms)
looks identical to
the original structure~\cite{Treboux_00_Posner,Kanzaki_01_Posner,Yin_03_Posner,Swift_17_Posner}.
The spins (represented loosely by the $m_a$'s) undergo
a counterclockwise cyclic permutation.

Let the operator $\CThree$ represent 
this spin permutation.
$\CThree$ cyclically permutes 
the $h_+$ spins [the first three $m$-values in~\eqref{eq:Simple_code2}] 
while identically permuting the $h_-$ spins 
(the final three $m$-values).
$\CThree$ transforms
the $\Basis_\comp^\Posner$ elements~\eqref{eq:Simple_code2} as
\begin{align}
   \label{eq:CThree_Pos_basis}
   &  \CThree \,  :  \,
   \ket{ m_1 m_2 m_3 m_4 m_5 m_6  }
   \nonumber \\ & \qquad  \equiv
   \ket{  ( m_{1},  \mathbf{r}_1 )  
      ( m_{2},  \mathbf{r}_2 )  
      ( m_{3},  \mathbf{r}_3 ) ;
      ( m_{4},  \mathbf{r}_4 )  
      ( m_{5},  \mathbf{r}_5 )  
      ( m_{6},  \mathbf{r}_6 ) }
   % % %
   \nonumber \\ & \quad  \mapsto 
   % % % 
   \ket{  ( m_3,  \mathbf{r}_1 )  
      ( m_{1},  \mathbf{r}_2 )  
      ( m_{2},  \mathbf{r}_3 ) ;
      ( m_{6},  \mathbf{r}_4 )  
      ( m_{4},  \mathbf{r}_5 )  
      ( m_{5},  \mathbf{r}_6 ) }
    \nonumber \\ & \qquad
   % % %
  \equiv  \ket{  m_3 m_1 m_2 m_6 m_4 m_5  }  \, .
\end{align}
The $\Basis_\comp^\Posner$ elements~\eqref{eq:Simple_code2}
are not $\CThree$ eigenstates.

But $\CThree$ eigenstates can be constructed.
We adopt F\&R's notation for the eigenvalues,
\begin{align}
   \label{eq:CThree_eig}
   & \omega^\tau  \, ,  \quad \text{wherein}  \quad
   \omega  :=  e^{ i 2 \pi / 3 }  
   % % %
   \; \text{and} \;  \\ \nonumber &
   \tau  =  0 , 1 , 2   \quad
   \text{or, equivalently,}  \quad  \tau = 0,  \pm 1  \, .
\end{align}
F\&R call $\tau$ a three-level ``pseudospin.''
We call $\tau$ the eigenvalue of the observable $\GC$
that generates $\CThree$.\footnote{
% < f >
\label{footnote:Pseudospin}
A pseudospin is a physical DOF
that transforms according to a certain rule.
$\tau$ is, rather, the eigenvalue of an observable.
Suppose that $\tau$ were a three-level quantum pseudospin.
$\tau$ would occupy a quantum state 
in some three-dimensional effective Hilbert space $\Hil_{\rm pseudo}$.
No such space can be associated uniquely with a Posner,
to our knowledge.
Rather, the Posner Hilbert space $\HilPos$ has dimensionality 64.
$\HilPos$ equals a direct sum of 
the three $\GC$ eigenspaces:
$\HilPos = \Hil_{\tau = 0}  \oplus  \Hil_{ \tau = 1 }  \oplus \Hil_{ \tau = 2 }$.
Each subspace is degenerate.
Hence no subspace can serve as one element in
a basis for any $\Hil_{\rm pseudo}$.
One could conjure up a $\Hil_{\rm pseudo}$ by
choosing one state $\ket{ \tau{=}0 } \in \Hil_{ \tau = 0}$,
one $\ket{ \tau{=}1 } \in \Hil_{ \tau = 1}$, 
and one $\ket{ \tau{=}2 } \in \Hil_{ \tau = 2}$;
then constructing
$\Hil_{\rm pseudo}  =  {\rm span}\Set{ 
\ket{ \tau{=}0 } ,  \ket{ \tau{=}1 } ,  \ket{ \tau{=}2 }  }$.
We do so in Sections~\ref{section:Qutrit_code} and~\ref{section:Stick_apps}.
But the choice of $\ket{ \tau{=}0 }$ is nonunique,
as is the choice of $\ket{ \tau{=}1 }$, as is the choice of $\ket{ \tau{=}2 }$.
Hence no unique three-level Hilbert space corresponds to a Posner,
to our knowledge.
Hence $\tau$ appears not to label a unique quantum pseudospin.}
% < /f >
% Reference: email chain "Circuit of MF's narrative" -- email sent by NYH on Oct 23, 2017, at 8:37 PM
The general form of a $\CThree$ eigenstate
appears in~\cite{Fisher_17_Quantum}.
F\&R use second quantization.

We translate into QI.
We also extend~\cite{Fisher_17_Quantum} 
by characterizing the eigenspaces of $\CThree$
and by identifying a useful basis for each eigenspace
(Sec.~\ref{section:Eigenbasis}).\footnote{
% < f >
A related characterization, and alternative bases,
appeared in~\cite{Swift_17_Posner},
shortly after the present paper's initial release.}
% < /f >
The $\tau = 0$ eigenspace has degeneracy 24;
the $\tau = 1$ eigenspace, degeneracy 20;
and the $\tau = -1$ eigenspace, degeneracy 20.
The $\tau = 0$ eigenspace will play an important role in
Posner resource states for universal quantum computation
(Sec.~\ref{section:AKLT}).
% The other charge assumed to be conserved,
% $S^{ \zinter }_{1 \ldots 6}$, shares this eigenbasis.

%
%
%

\subsubsection{Conserved charge 2: 
The total-spin operator $S^{ \zlab }_{1 \ldots 6}$}
\label{section:S_z_tot_symm}

Fisher conjectures that Posners' phosphorus nuclear spins 
have long coherence times~\cite{Fisher15}.
We infer that the Posner Hamiltonian 
$H_\Posner$ conserves
$S^{ \zlab }_{1 \ldots 6}  
=  \sum_{ j = 1 }^6  
\id^{ \otimes (j - 1) }  \otimes  S^{ \zlab }_{j}  
\otimes  \id^{ \otimes (6 - j) }$.
The total magnetic spin quantum number,
$m_{1 \ldots 6 }  =  \sum_{j = 1}^6 m_j$,
remains constant.
Appendix~\ref{section:Z_Cons} supports this argument with
conjectured interactions between
a Posner's phosphorus nuclear spins.
Interactions with the environment are expected
to conserve $S^{ \zlab }_{1 \ldots 6}$ approximately (for long times).

We decompose $\HilPos$ into composite-spin subspaces
in App.~\ref{section:Group_thry}.
That appendix also reviews 
the addition of quantum angular momentum.

\subsubsection{Eigenbasis shared by the conserved charges}
\label{section:Eigenbasis}

We introduced the computational basis 
$\Basis_\comp$ for $\HilPos$ in Eq.~\eqref{eq:Slater}.
Most $\Basis_\comp$ elements transform nontrivially under $\CThree$
[Eq.~\eqref{eq:CThree_Pos_basis}].
The Posner dynamics conserve $\CThree$.
So, too, would the dynamics ideally conserve quantum codewords.
We therefore seek a useful $\CThree$ eigenbasis
from which to construct QEC codes.

The $\CThree$ eigenspaces have degeneracies.
Which basis should we choose for each eigenspace?
A basis shared with $S^{ \zlab }_{1 \ldots 6}$,
the other conserved charge.

Yet $\CThree$ and $S^{ \zlab }_{1 \ldots 6}$ do not form
a complete set of commuting observables (CSCO)~\cite{Cohen_Tannoudji_91_Quantum}.
Many eigenbases of $\CThree$ 
are eigenbases of $S^{ \zlab }_{1 \ldots 6}$.
Another operator is needed to break the degeneracy,
to complete the CSCO.
We choose the spin-squared sum\footnote{
% < f >
Swift \emph{et al.} choose
the total-spin operator $\mathbf{S}_{1 \ldots 6}^2$
and the Hamiltonian.
% They probably use also $S^{ \zinter }_{1 \ldots 6}$, no?
% But mustn't $S^{ \zinter }_{1 \ldots 6}$ change as
% the Posner flips upside-down?
Their choice can be used to define 
an alternative computational basis.
Our choice clarifies the preparation of 
the universal quantum-computation resource state
in Sec.~\ref{section:AKLT}.
% Reference: email chain ÒTotal-spin eigenbasisÓ Ñ email sent by me on 12/1/17
% Reference: email chain ÒTotal-spin eigenbasisÓ Ñ email sent by EC on 12/3/17
}
% < /f >
\begin{align}
      \label{eq:S_tot_op}
      % % %
      \mathbf{S}_{123}^2  +  \mathbf{S}_{456}^2  
      % % %
      &  \equiv  ( \mathbf{S}_1  +  \mathbf{S}_2  +  \mathbf{S}_3 )^2
      +  ( \mathbf{S}_4  +  \mathbf{S}_5  +  \mathbf{S}_6 )^2  \\
      % % %
      & \equiv     \left( 
      \sum_{ a = 1 }^3  \id^{ \otimes ( a - 1 ) }  \otimes
      \mathbf{S}_a  \otimes  \id^{ \otimes ( 3 - a ) }
      \right)^2 
      \nonumber \\ & \quad 
      +  \left(   \sum_{ a = 4 }^6  \id^{ \otimes ( a - 4 ) }  \otimes
      \mathbf{S}_a  \otimes  \id^{ \otimes ( 6 - a ) }
      \right)^2    \, .
\end{align}
The CSCO consists of $\mathbf{S}^2_{123}$;  $\mathbf{S}^2_{456}$;
the $\CThree$ analog that permutes just qubits 1, 2, and 3;
the $\CThree$ analog that permutes just qubits 4, 5, and 6;
$S^{ \zlab }_{1 2 3}$;  and  $S^{ \zlab }_{4 5 6}$.
% Reference: email chain ÒTotal-spin eigenbasisÓ Ñ email sent by me on 12/1/17
% Reference: email chain ÒTotal-spin eigenbasisÓ Ñ email sent by EC on 12/3/17

Geometry and measurement-based quantum computation
(Sec.~\ref{section:Posner_AKLT})
motivate the choice of $\mathbf{S}_{123}^2  +  \mathbf{S}_{456}^2$:
A Posner contains two triangles of spins
(Fig.~\ref{fig:h_axis}).
The positions in the $h_+$ triangle are labeled
$\mathbf{r}_1$, %  =  ( \phi , h_+)$, 
$\mathbf{r}_2$, and %  = ( \phi + 2 \pi / 3 , h_+ )$, and 
$\mathbf{r}_3$. %  =  ( \phi + 4 \pi / 3 , h_+ )$.
Hence the first three tuples in Eq.~\eqref{eq:Simple_code2}
correspond to the $h_+$ triangle.
Hence the magnetic spin quantum numbers
$m_1$, $m_2$, and $m_3$ may be viewed as
occupying the $h_+$ triangle.
These spins' joint state is equivalent to a three-qubit logical state,
$\ket{ m_1 m_2 m_3 }$.
An analogous argument concerns $h_-$.
Hence the antisymmetric state~\eqref{eq:Simple_code2}
is equivalent to
a product of two three-logical-qubit states:\footnote{
% < f >
In greater detail,
\begin{align}
   \label{eq:Trio_help1}
   & \ket{ ( m_{1},  \mathbf{r}_1 )  
      ( m_{2},  \mathbf{r}_2 )  
      ( m_{3},  \mathbf{r}_3 ) ;
      ( m_{4},  \mathbf{r}_4 )  
      ( m_{5},  \mathbf{r}_5 )  
      ( m_{6},  \mathbf{r}_6 ) }
   \\ & \qquad
   % % %
   \label{eq:Trio_help2}
   \equiv  \ket{ m_1 m_2 m_3 m_4 m_5 m_6  }
   \\ &  \qquad
   % % %
   \label{eq:Trio_help3}
   \equiv  \ket{ m_1 } \ket{ m_2 } \ket{ m_3 } \ket{ m_4 } \ket{ m_5 } \ket{ m_6 }
   \\ &  \qquad
   % % %
   \label{eq:Trio_help4}
   =  ( \ket{ m_1 } \ket{ m_2 } \ket{ m_3 } ) ( \ket{ m_4 } \ket{ m_5 } \ket{ m_6 } )
   \\ &  \qquad
   % % %
   \label{eq:Trio_help5}
   \equiv  \ket{ m_1 , m_2 , m_3 }    \ket{ m_4 , m_5 , m_6 }  
   \\ &  \qquad
   % % %
   \label{eq:Trio_help6}
   \equiv  \ket{ m_1 m_2 m_3 }  \ket{ m_4 m_5 m_6 }  \, .
\end{align}
Equation~\eqref{eq:Trio_help2} is equivalent to
Eq.~\eqref{eq:Simple_code2}.
Equation~\eqref{eq:Trio_help3} is equivalent to 
Eq.~\eqref{eq:Simple_code}.
Equation~\eqref{eq:Trio_help4} follows from
the tensor product's associativity.
Equation~\eqref{eq:Trio_help5} consists of a rewriting
with new notation.
Equation~\eqref{eq:Trio_help6} is analogous to Eq.~\eqref{eq:Simple_code}.
} 
% < /f >
\begin{align}
   \label{eq:Trios}
   & \ket{ ( m_{1},  \mathbf{r}_1 )  
      ( m_{2},  \mathbf{r}_2 )  
      ( m_{3},  \mathbf{r}_3 ) ;
      ( m_{4},  \mathbf{r}_4 )  
      ( m_{5},  \mathbf{r}_5 )  
      ( m_{6},  \mathbf{r}_6 ) }
   \nonumber \\ & \qquad
   % % %
   \equiv  \ket{ m_1 m_2 m_3 }  \ket{ m_4 m_5 m_6 }  \, .
\end{align}
The trios function logically as independent units.
Such trios can be used to prepare
universal quantum-computation resource states
(Sec.~\ref{section:Posner_AKLT}).
Hence the spin-operator trios
in Eq.~\eqref{eq:S_tot_op}.

%
% Table: 8 eigenstates of 1 spin trio
%
\begin{table*}[t] 
\begin{center} 
\begin{tabular}{|M{.8cm}|M{4cm}|M{.6cm}|M{.75cm}|M{.4cm}|N}
% The sixth column the fifth column's entries from being positioned bizarrely.
% To take advantage of the sixth column, include a fifth & in each row.
% Reference: https://tex.stackexchange.com/questions/68732/vertical-alignment-in-table-m-column-row-size-problem-in-last-column
% \begin{tabularx}{.45\textwidth}{lXXXXXr}
% \begin{tabular}{|c|c|c|c|c|}
    \hline
        State
   &  Decomposition
   &   $S_{123}$
   &  $m_{{123}}$
   &  $\tau$
   &
   \\[5pt]  \hline  \hline
   % % %
       $\ket{ 0 0 0 }$
   &  $\ket{ 0 0 0 }$
   &  $3/2$
   &  $3/2$
   &  0
   &
   \\[5pt]  \hline
       $\ket{W}$
   &  $\frac{1}{ \sqrt{3} } ( \ket{ 1 0 0 }  +  \ket{ 0 1 0 }  +  \ket{ 0 0 1 } )$
   &  $3/2$
   &  $1/2$
   &  0
   &
   \\[5pt]  \hline  
       $\ket{ \bar{W} }$
   &  $\frac{1}{ \sqrt{3} } ( \ket{ 0 1 1 }  +  \ket{ 1 0 1 }  +  \ket{ 1 1 0 } )$
   &  $3/2$
   &  $ - 1/2$
   &  0
   &
   \\[5pt]  \hline  
       $\ket{ 1 1 1 }$
   &  $\ket{ 1 1 1 }$
   &  $3/2$
   &  $- 3 / 2$
   &  0
   &
   \\[5pt]  \hline  \hline
       $\ket{ \omega }$
   &  $\frac{1}{ \sqrt{3} } 
         ( \ket{ 1 0 0 }  +  \omega^2 \ket{ 0 1 0 }  +  \omega \ket{ 0 0 1 } )$
   &  $1/2$
   &  $1/2$
   &  $1$
   &
   \\[5pt]  \hline  
       $\ket{ \bar{\omega} }$
   &  $\frac{1}{ \sqrt{3} } 
         ( \ket{ 0 1 1 }  +  \omega^2 \ket{ 1 0 1 }  +  \omega \ket{ 1 1 0 } )$
   &  $1/2$
   &  $- 1 /2$
   &  $1$
   &
   \\[5pt]  \hline  \hline
       $\ket{ \omega^2 }$
   &  $\frac{1}{ \sqrt{3} } ( \ket{ 1 0 0 } + \omega \ket{ 0 1 0 } + \omega^2 \ket{ 0 0 1 } )$
   &  $1/2$
   &  $1/2$
   &  $2$
   &
   \\[5pt]  \hline  
       $\ket{ \overline{ \omega^2 } }$
   &  $\frac{1}{ \sqrt{3} } ( \ket{ 0 1 1 } + \omega \ket{ 1 0 1 } + \omega^2 \ket{ 1 1 0 } )$
   &  $1/2$
   &  $-1/2$
   &  $2$
   &
   \\[5pt]  \hline  
\end{tabular}
\caption{\caphead{Symmetric basis for a trio of qubits:}
A Posner molecule consists of two triangles of spins (Fig.~\ref{fig:Geo}).
In accordance with Eq.~\eqref{eq:Trios},
each triangle functions as
a trio of logical qubits.
The three physical qubits correspond to 
an eight-dimensional Hilbert space, $\mathbb{C}^6$.
A useful basis is an eigenbasis shared by
the conserved charges,
$\CThree$ (a permutation operator) and
$S^{ \zlab }_{123}$
(the $z$-component, relative to the lab's $\hat{z}_\lab$-axis,
of the total spin). 
These operators share many bases.
The eigenbasis shared also by $\mathbf{S}_{123}^2$ proves useful
in the preparation of universal quantum-computation resource states 
(Sec.~\ref{section:AKLT}).
$\tau$ describes, here, how a triangle transforms under
the permutation represented by $\CThree$.
}
\label{table:Trio_states}
\end{center}
\end{table*}

Each qubit trio corresponds to a Hilbert space $\mathbb{C}^{6}$.
Let us focus on qubits 1-3, for concreteness.
$\CThree$, $S^{ \zlab }_{123 }$, and $\mathbf{S}_{123}^2$
share the basis in Table~\ref{table:Trio_states}.
Each basis element is symmetric with respect to 
cyclic permutations of the three logical qubits.

Tensoring together two one-triangle states
yields a state of a Posner's phosphorus nuclear spins:
$\ket{ 0 0 0 } \ket{ 0 0 0 },  \ket{ 0 0 0 }  \ket{ W }  , \ldots , 
\ket{ \overline{ \omega^2 } }  \ket{ \overline{ \omega^2 } }$.
Sixty-four such states exist.
We classify them with quantum numbers 
in App.~\ref{section:C_eigenspaces}.

We have pinpointed an eigenbasis
shared by the conserved charges.
The Posner dynamics are expected not to map
states in one charge sector to states in another.
Hence different-sector states suggest themselves as
quantum codewords.
We present partially charge-protected QECD codes next.

\subsubsection{Quantum error-detecting and -correcting codes
accessible to Posner molecules}
\label{section:QEC}

% References: ``Loose ends'' -- initiated on 9-26-17'';
% ``Neatening C eigenstates, branch 2'' -- around 10-1-17

We exhibit two codes formed from states accessible to Posners.
Each codeword is an eigenstate 
of a conserved charge, 
$\CThree$ or $S^{ \zlab }_{1 \ldots 6 }$.
Each code's codewords correspond to 
distinct eigenvalues of the charge.
Hence the Posner dynamics
likely do not map
any codeword into any other.

This section demonstrates the compatibility of 
(i) protection of phosphorus nuclei by Posner dynamics with
(ii) protection of QI with quantum error correction.
Presenting codewords that (i) enjoy (partial) protection by Posner dynamics
and (ii) satisfy the quantum error-detection and -correction criteria~\cite{Knill_97_Theory,Bennett_96_Mixed,Kribs_05_Unified,Kribs_05_Operator,Nielsen_07_Algebraic,Preskill_99_QEC}
suffices.
We do not specify operations via which 
errors can be detected or corrected.
In the language of QI theory, we present an existence proof
and a partial construction.
Augmenting the construction remains an opportunity
discussed in Sec.~\ref{section:Outlook}.

Section~\ref{section:Qutrit_code} introduces a quantum error-detecting code.
One Posner, we show, can encode one logical qutrit.
The code detects one arbitrary physical-qubit error.
Section~\ref{section:Pos_code_rep} shows how to implement
a repetition code with Posner states.
The code corrects two bit flips.

More Posner codes, we expect, await discovery.
For example, one conserved charge
partially protects each of our codes.
The Posner's dynamics should prevent
any codeword from evolving into any other.
But the molecular dynamics could map one codeword
outside the codespace.
Our two codes therefore illustrate protection by 
the Posner's conserved quantities.
Full protection by conserved quantities 
is left for future research.
This opportunity and others are detailed in Sec.~\ref{section:Outlook}.

We have already discussed an encoding
of logical states in physical systems 
(Sec.~\ref{section:Encodings}).
Earlier, the logical Hilbert space $\Hil_\Log$ shared
the physical Hilbert space's dimensionality, 64.
Section~\ref{section:Encodings} concerned
a bijective, injective map between the spaces.
QECD encodes a small logical space
in a larger physical space.
% The physical space's redundancy protects the logical information.
Notation will reflect the distinction between 
Sec.~\ref{section:Encodings} and QECD:
Script subscripts $\Logg$ (as in $\Hil_\Logg$) will replace 
the Roman $\Log$
(as in $\Hil_\Log$).
QECD is reviewed in App.~\ref{section:Backgrnd_QEC}.

\paragraph{Qutrit error-detecting code
formed from Posner-molecule states:}
\label{section:Qutrit_code}
% Reference: email chain Òcode success!Ó Ñ begun 10/20/17
One Posner, we show, can encode one logical qutrit.
The code detects arbitrary single-physical-qubit errors.
The physical qubits are
the spin-and-orbital DOFs of Sec.~\ref{section:How_to_enc_transf}.

The code has the form 
\begin{align} 
   \label{eq:Qutrit_code}
   \Hil_\Logg^\qutrit  =  {\rm span} \Set{   
   \ket{ 0_\Logg } ,   \ket{ 1_\Logg } ,   \ket{ 2_\Logg }   }  \, ,
\end{align}
wherein
\begin{align}
   & \ket{ 0_\Logg }  =  \frac{1}{ \sqrt{2} } 
   ( \ket{ W }  \ket{ \bar{W} }  -  \ket{ \bar{W} }  \ket{W} ) \, , \\
   & \ket{ 1_\Logg }  =  \frac{1}{ \sqrt{2} } 
   ( \ket{ \omega^2 }  \ket{ \overline{ \omega^2 } }  
     -  \ket{ \overline{ \omega^2 } }  \ket{ \omega^2 } ) \, , 
   \quad \text{and}  \quad  \\
   & \ket{ 2_\Logg }  = \frac{1}{ \sqrt{2} } 
   ( \ket{ \omega }  \ket{ \bar{ \omega } }  
   -  \ket{ \bar{ \omega } }  \ket{ \omega } )  \, .
\end{align}
Each logical state $\ket{ j_\Logg }$ occupies the $\tau = j$ subspace.

The codewords satisfy
the two quantum error-detection criteria~\cite{Knill_97_Theory,Bennett_96_Mixed,Kribs_05_Unified,Kribs_05_Operator,Nielsen_07_Algebraic,Preskill_99_QEC}.
% References cited near Eq. 5 on p. 3 of Bacon_05_Operator, in the context of QEC
First, the states are locally indistinguishable:
\begin{align}
   & \langle j_\Logg | \sigma^x |  j_\Logg  \rangle
   =  \langle j_\Logg | \sigma^y |  j_\Logg  \rangle
   = 0  \, , \quad \text{and} \\
   % % %
   & \langle  j_\Logg  | \sigma^z |  j_\Logg  \rangle
   =  \frac{1}{12} \, 
\end{align} 
for all $j$.
That is, the codewords satisfy the diagonal criterion.
(See App.~\ref{section:Backgrnd_QEC} for background.)
Second, the codewords satisfy the off-diagonal criterion,
\begin{align}
   \langle j_\Logg | \sigma^\alpha | k_\Logg \rangle  =  0
   \qquad  \forall j \neq k  \, ,
   \quad \forall \alpha  = x , y , z \, ,
\end{align}
by direct calculation.

\paragraph{Repetition code
formed from Posner-molecule states:}
\label{section:Pos_code_rep}
%
% A reference: email chain "Loose ends" -- messages sent on, and perhaps around, 10/8/17
%
The repetition code originated in classical error correction~\cite{Watrous_06_Lecture}.
Each logical bit is cloned until $n$ copies exist:
$0  \mapsto  \underbrace{0 0 \ldots 0}_n$,
and $1  \mapsto  \underbrace{1 1 \ldots 1}_n$.
Suppose that errors flip under half the bits.
For example, $000000$ may transform into $011000$.
One decodes the bit string by counting the zeroes, counting the ones,
and following majority rule.
More physical bits end as 0s than as 1s in our example.
A logical zero, the receiver infers, was likely sent.

The repetition code can be translated into quantum states.
For example, let
$\Hil_\Logg^\rep  =  \Set{
    \ket{ 0_\Logg }  ,  \ket{ 1_\Logg } }  \, ,$
wherein
$\ket{ 0_\Logg }  =  \ket{ 000000 }$ and
$\ket{ 1_\Logg }  =  \ket{ 111111 }  \, .$
(As we are defining a new code,
we are defining $\ket{ 0_\Logg }$ and $\ket{ 1_\Logg }$ anew.)
This code corrects two $\sigma^x$ errors. 
But each codeword is unentangled.\footnote{
% < f >
More precisely, the element $\ket{000000}$
of the computational basis for the logical space 
(Sec.~\ref{section:How_to_enc_transf})
is unentangled.
The spin-and-orbital state represented by $\ket{000000}$
[by Eq.~\eqref{eq:Simple_code2}] is entangled.}
% < /f >
Hence $\Hil_\Logg^\rep$ fails to satisfy
the off-diagonal error-detection criterion,
\begin{align}
   \langle j_\Logg | \sigma^\alpha  \sigma^\beta | k_\Logg \rangle  =  0
   \qquad  \forall j \neq k  \, ,
\end{align}
whenever $\alpha \neq x$ and/or $\beta \neq x$.

\subsection{The model of Posner quantum computation}
\label{section:Abstract_logic}

% We have described the physical DOFs in which
% Posners could store QI.
% Via what logical operations could Posners process QI?

Fisher has conjectured that
several physical processes
occur in biofluids~\cite{Fisher15}.
We reverse-engineer two more.
We abstract away the physics,
identifying the computations that the processes effect.
We call the computations \emph{Posner operations}.\footnote{
% < f >
We occasionally call the physical processes, too,
``Posner operations.''}
% < /f >
The operations form 
a model of quantum computation,
\emph{Posner quantum computation}.

The model's operations, we will show, can be used 
(i) to teleport QI incoherently
and (ii) to prepare, efficiently, 
universal resource states for 
measurement-based quantum computation
(Sections~\ref{section:AKLT}-\ref{section:AKLT}).
Whether the model's operations can realize universal quantum computation 
remains an open question (Sec.~\ref{section:diV}).

Posner quantum computation is defined
in Sec.~\ref{section:Abstract2}.
The model is analyzed in Sec.~\ref{section:Analyze_abstract}.
We discuss the model's ability to entangle qubits
and the control required to perform QI-processing tasks.
Fisher's narrative~\cite{Fisher15} is also cast
as a quantum circuit.

\subsubsection{Definition of Posner quantum computation}
\label{section:Abstract2}

Terminological notes are in order.
When discussing physical processes, we discuss 
phosphorus nuclear spins, spin-and-orbital DOFs, and Posners.
When discussing logical DOFs, we discuss qubits.
A circuit-diagram element represents each operation
(Figures~\ref{fig:operation1}-\ref{fig:operation7}):

\begin{enumerate}[leftmargin=*]

   \item  \label{item:Bell}
   \textbf{Singlet-state preparation} (Fig.~\ref{fig:operation1}):
   \emph{Arbitrarily many singlets $\ket{ \Psi^- }$ can be prepared.}
   Singlets are prepared when an enzyme hydrolyzes diphosphates into
   entangled phosphate pairs (Sec.~\ref{section:When_Pos_form}).\footnote{
   % < f >
   Biofluids might prepare phosphorus nuclear spins
   in nonsinglet states.
   For example, one phosphate might detach from ATP,
   leaving adenosine diphosphate (ADP).
   Identifying the phosphate's quantum state
   would require physical modeling outside this paper's scope.
   Therefore, we restrict our focus to singlets.}
   % < /f >

   % Figure: Operation 1: Singlet-state preparation
   \begin{figure}[tb]
   \centering
   \includegraphics[width=.15\textwidth, clip=true]{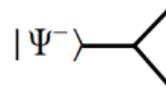}
   \caption{\caphead{Circuit-diagram element that represents
   singlet-state preparation (operation~\ref{item:Bell}).} }
   \label{fig:operation1}
   \end{figure}

These singlets are prepared differently than
in conventional quantum circuits. 
Conventionally, one prepares two qubits in the state $\ket{ 0 }^{ \otimes 2 }$;
performs a Hadamard\footnote{
% < f >
The \emph{Hadamard gate} $H$ transforms one qubit~\cite{NielsenC10}.
In terms of Pauli operators, $H = \frac{1}{ \sqrt{2} }  \:  ( \sigma_x  +  \sigma_z )$.
The gate has a geometric interpretation
expressed in terms of the Bloch sphere:
The state rotates through $180^\circ$ about the axis
$\frac{1}{ \sqrt{2} }  \:  ( \hat{x}  +   \hat{z} )$.}
% < /f >
on the first qubit;
and performs a CNOT,\footnote{
% < f >
\label{footnote:CNOT}
The \emph{CNOT}, or \emph{controlled-not}, gate
transforms two qubits~\cite{NielsenC10}.
One qubit is called the \emph{control},
and one is called the \emph{target}.
If the control occupies the state $\ket{0}$,
the CNOT preserves the target's state.
If the first qubit occupies $\ket{1}$,
the target evolves under $\sigma_x$.
The CNOT has the form
$\ketbra{0}{0}  \otimes  \id  +  \ketbra{1}{1}  \otimes  \sigma_x$.}
% < /f >
controlling on the first qubit: 
$\text{CNOT} ( H \otimes \id ) \ket{00}  =  \ket{ \Psi^- }$.

In contrast, Fisher posits that 
enzymes prepare singlets
by projective measurements~\cite{Fisher15}.
We formalize Fisher's statement as follows.
A diphosphate's phosphorus nuclear spins
occupy some state $\rho_{\rm diphos}$. 
The diphosphate enters a pyrophosphatase enzyme.
The enzyme measures the PVM 
$\Set{ \ketbra{ \Psi^- }{ \Psi^- } ,  \id  -  \ketbra{ \Psi^- }{ \Psi^- } }$.

Suppose that the diphosphate separates into
two disconnected phosphates.
The spins' state has been projected with 
$\ketbra{ \Psi^- }{ \Psi^- }$.

Suppose, instead, that the diphosphate leaves the enzyme uncleaved.
The second possible measurement outcome has obtained. 
The diphosphate cannot form a Posner molecule
with other ions.
Hence the diphosphate cannot participate in quantum cognition.
Hence the diphosphate plays no role in
Posner quantum computation.
Hence the PVM's $\id  -  \ketbra{ \Psi^- }{ \Psi^- }$ outcome
plays no role.
Any diphosphate that remains uncleaved
``is discarded,'' in QI language.
The $\ketbra{ \Psi^- }{ \Psi^- }$ outcome is classically postselected on.
Classical postselection provides no superquantum computational power;
see footnote~\ref{footnote:Postselect}.

In summary, Posner operations include
the preparation of $\ket{ \Psi^- }$.
Quantum-cognition systems prepare $\ket{ \Psi^- }$ 
by measuring a PVM nondestructively,
then postselecting classically on the ``yes'' outcome.
``No''-outcome ions 
do not participate in later chemical events of interest. 

   % Figure: Operation 2: Rotations of independent qubits
   % \begin{figure}[tb]
   % \centering
   % \includegraphics[width=.2\textwidth, clip=true]{operation2}
   % \caption{\caphead{Circuit-diagram element that represents
   % a rotation of a qubit not in any Posner
   % (operation~\ref{item:One_qubit_rot}):}
   % $\hat{n}$ denotes the axis rotated about.
   % $\theta$ denotes the angle rotated through.}
   % \label{fig:operation2}
   % \end{figure}
   
   % Figure: Operation 3: Hextuple formation
   \begin{figure}[tb]
   \centering
   \includegraphics[width=.4\textwidth, clip=true]{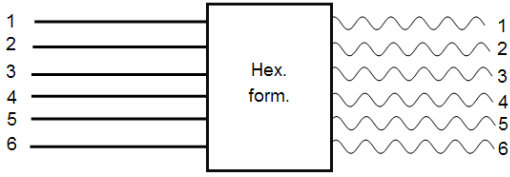}
   \caption{\caphead{Circuit-diagram element that represents
   hextuple formation (operation~\ref{item:Form_Pos}):} 
   Straight lines represent qubits not in hextuples.
   Each wavy line represents a qubit in a hextuple
   (that is not paired with any other hextuple
   as a result of operation~\ref{item:Meas}).
   As the lines' labels show,
   the circuit element is defined
   as preserving the qubits' ordering.}
   \label{fig:operation3}
   \end{figure}
   
   %
   % \item  \label{item:One_qubit_rot}
   % \textbf{Rotations of independent qubits}
   % (Fig.~\ref{fig:operation2}):
   % \emph{Any qubit can rotate through 
   % any angle $\theta$ about any axis $\hat{n}$,
   % via a unitary $U_{ \hat{n} } (\theta)$.}
   % $\hat{n}$ is defined relative to the lab frame.
   % Qubits rotate as phosphates experience magnetic fields
   % generated by firing neurons (Sec.~\ref{section:diV}).
   
   %
   \item  \label{item:Form_Pos}
   \textbf{Hextuple formation} (Fig.~\ref{fig:operation3}):
   \emph{Qubits can group together in hextuples (groups of six).
   Hextuple formation evolves the logical qubits trivially, 
   under the operator $\id$.
   But hextuple formation associates the qubits with a geometry
   and with an observable $\GC$.}
   
   Logical qubits form hextuples as ions bind together, forming Posners.
   A logical qubit can occupy, at most, one hextuple.
   Section~\ref{section:How_to_enc_transf} explains why
   hextuple formation fails to change logical qubits' states:
   The spins' state changes,
   suggesting that the logical qubits' state changes. 
   But the logical information's physical encoding changes, too.
   
   Hextuple creation impacts the logical system in three ways:
   (i) Each hextuple has an observable $\GC$.
   (ii) Hextuple creation induces a geometry 
   that influences operation~\ref{item:Six_rot}.
   (iii) The six logical qubits' Hilbert space transforms
   from $\mathbb{C}^{12}$ to the isomorphic $\HilPos$.
   Let us detail these three effects.
   
   First, creating a hextuple creates an observable $\GC$
   (Sec.~\ref{section:C_symm}).
   $\GC$ has eigenvalues $\tau = 0, 1 , 2$ 
   (equivalently, $\tau = 0, \pm 1$).
   $\tau$ impacts operation~\ref{item:Meas}.
   
   Second, hextuple creation induces a geometry.
   Each logical qubit is assigned to a cube face,
   in accordance with Sec.~\ref{section:How_to_enc_transf}.
   The six qubits can be distributed across the six faces
   in any of $6!$ ways.
   Physically, different assignments follow from
   different pre-Posner orbital states (App.~\ref{eq:Subtle_HilPos}).
   The six qubits form two triangles, called \emph{trios} below, 
   in accordance with Fig.~\ref{fig:Geo}.
   This geometry limits the single-qubit unitaries
   that can evolve the six qubits (operation~\ref{item:Six_rot}).
   The geometry also influences our construction of
   universal quantum-computation resource states
   (Sec.~\ref{section:AKLT}).
   
   Third, hextuple creation changes the system's Hilbert space
   from $( \mathbb{C}^2 )^{ \otimes 6 }$ to
   $\HilPos$ (Sec.~\ref{section:How_to_enc_transf}).
   $\HilPos$ is isomorphic to $( \mathbb{C}^2 )^{ \otimes 6 }$,
   as the map~\eqref{eq:Simple_code2} is injective and bijective.
   % Reference: email chain ``Hilbert spaces'' -- begun on 9/6/17
   Hence we will keep referring to the logical space as 
   $( \mathbb{C}^2 )^{ \otimes 6 }$.

   \item  \label{item:In_hex} 
   \textbf{Hextuple unitaries:}
   \emph{Any hextuple can undergo any sequence of any instances 
   of the following operations:}
   \begin{enumerate}[leftmargin=18pt]
   
   % Figure: Operation 4a: Hextuple permutation
   \begin{figure}[tb]
   \centering
   \includegraphics[width=.4\textwidth, clip=true]{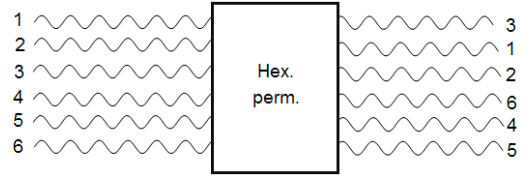}
   \caption{
   \caphead{Circuit-diagram element that represents
   hextuple permutation (operation~\ref{item:Swap}).}}
   \label{fig:operation4a}
   \end{figure}

   \item  \label{item:Swap} 
   \textbf{Hextuple permutation} (Fig.~\ref{fig:operation4a}):
   \emph{The qubits can undergo 
   the permutation $(231)(564)$
   [Eq.~\eqref{eq:CThree_Pos_basis}].} 
   The qubits are permuted as the Posner rotates
   about its symmetry axis, $\hat{z}_\In$,
   through an angle $2 \pi / 3$.
   Posners rotate while tumbling in the fluid.
   % Reference: Checked with Guanyu Zhu that 
   % physical rotation of molecule implements SWAP 
   % (see ``SWAP implementation'' email chain initiated on 7/26/17)
   
   % Figure: Operation 4b: Hextuple-coordinated single-qubit rotations
   \begin{figure}[tb]
   \centering
   \includegraphics[width=.25\textwidth, clip=true]{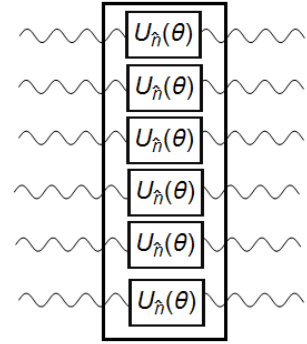}
   \caption{
   \caphead{Circuit-diagram element that represents
   hextuple-coordinated single-qubit rotations
   (operation~\ref{item:Six_rot}).}
   $\hat{n}$ denotes the axis rotated about.
   $\theta$ denotes the angle rotated through.}
   \label{fig:operation4b}
   \end{figure}

   \item   \label{item:Six_rot} 
   \textbf{Hextuple-coordinated single-qubit rotations}
   (Fig.~\ref{fig:operation4b}):
   \emph{The qubits in a hextuple can undergo 
   identical single-qubit rotations simultaneously.}
   The qubits are expected to rotate,
   consisting essentially of spin magnetic moments in  
   the magnetic field generated as neurons spike.
   
   Fisher's narrative allows for, though does not include,
   this operation.
   We approximate the magnetic field as
   roughly constant over the Posner's length scale,
   $\sim 1$ nm~\cite{Luo_11_Modeling}.
   % Reference: Luo_11_Modeling -- p. 15 -- Fig. 3
   The formalism can easily be generalized to accommodate 
   short-distance field fluctuations, however.
   The unitary
   $[ U_{ \hat{n} } (\theta ) ]^{ \otimes 6 }$ 
   rotates the qubits through an angle $\theta$
   about an axis $\hat{n}$ 
   relative to the lab frame.
   Angles of up to $\theta \approx \pi$ might be reached.
   Smaller angles are expected to be typical, however.
   Section~\ref{section:diV} and
   App.~\ref{section:One_Qubit_Gates} detail
   the rotation mechanism and scales. 
   
   \end{enumerate}
   
   % Figure: Operation 7: Hextuple break-up
   \begin{figure}[tb]
   \centering
   \includegraphics[width=.3\textwidth, clip=true]{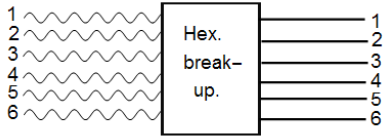}
   \caption{\caphead{Circuit-diagram element that represents
   hextuple break-up
   (operation~\ref{item:pH}):}
   Each coil represents a qubit in a dodectuple.       
   Each straight line represents a qubit 
   not grouped with any other qubits.}
   \label{fig:operation7}
   \end{figure}

   \item  \label{item:pH}
   \textbf{Hextuple break-up}
   (Fig.~\ref{fig:operation7}):
   \emph{One hextuple can break down into its constituents: 
   The qubits can cease to correspond to geometries
   or to observables $\GC$.}

   Different regions of the body have different pH's and
   different magnesium-ion (Mg$^{2+}$) concentrations.
   A Posner can migrate to a region packed with
   H$^+$ and/or with Mg$^{2+}$.
   These ions can bind to $\PO$, as $\Ca\Two$ can.
   The higher the H$^+$ and Mg$^{2+}$ concentrations,
   the more H$^+$ and Mg$^{2+}$ ions dislodge
   Posners' $\Ca\Two$ ions~\cite{Fisher_17_Personal}.
   The dislodging hydrolyzes the molecules.
   % Reference: "Fisher's long write-up" -- p. 13
   
   Fisher's narrative allows for, though does not include, hextuple break-up.
   Operation~\ref{item:pH} can be used to prepare Posners, efficiently, in
   resource states that can power 
   universal measurement-based quantum computation (Sec.~\ref{section:AKLT}).

   % Figure: Operation 5: Posner-binding meas. + pos. outcome
   \begin{figure}[tb]
   \centering
   \includegraphics[width=.45\textwidth, clip=true]{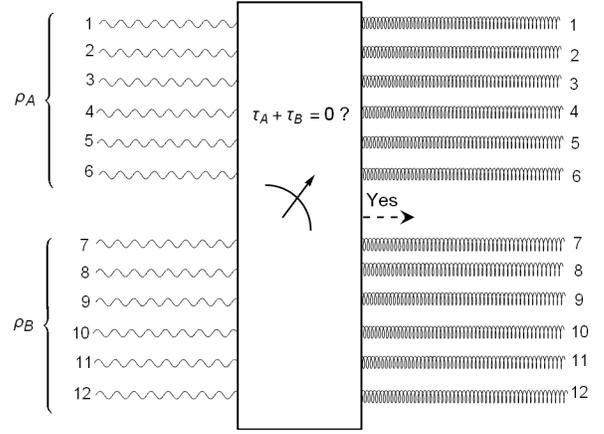}
   \caption{\caphead{Circuit-diagram element that represents
   a Posner-binding measurement (operation~\ref{item:Meas})
   that yields a positive outcome:} 
   Wavy lines represent qubits in hextuples
   that are not in a dodectuple (a pair of hextuples).
   Coils represent qubits in a dodectuple.}
   \label{fig:operation5positive}
   \end{figure}
   
   % Figure: Operation 5: Posner-binding meas. + neg. outcome
   \begin{figure}[tb]
   \centering
   \includegraphics[width=.45\textwidth, clip=true]{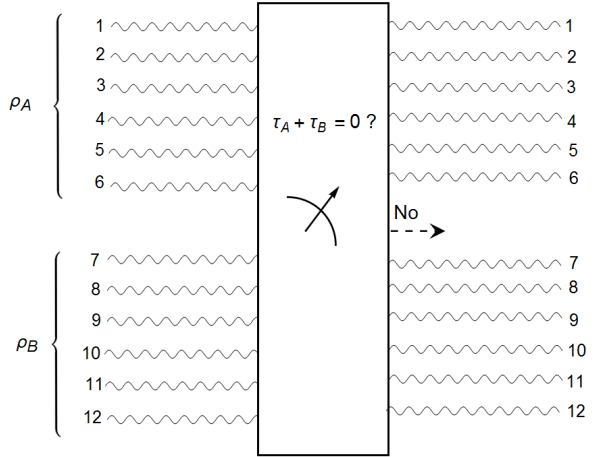}
   \caption{\caphead{Circuit-diagram element that represents
   a Posner-binding measurement (operation~\ref{item:Meas})
   that yields a negative outcome.}}
   \label{fig:operation5negative}
   \end{figure}

   \item  \label{item:Meas}
   \textbf{Posner-binding measurement}
    (Figures~\ref{fig:operation5positive} and~\ref{fig:operation5negative}):
   \emph{Let $A$ and $B$ denote two hextuples
   formed via operation~\ref{item:Form_Pos}.
   Whether the hextuples' $\GC$ eigenvalues 
   sum to zero can be measured nondestructively: 
   $\tau_A  +  \tau_B  =  0$.}
   
   First, we discuss the measurement's physical manifestation.
   Then, we mathematize the operation with a PVM.
   
   The measurement manifests in the binding, or failure to bind,
   of two Posners.
   Fisher conjectures as follows~\cite{Fisher15},
   supported by quantum-chemistry calculations~\cite{Swift_17_Posner}:
   Two Posners, $A$ and $B$, can bind together.
   They bind upon approaching each other
   such that their directed symmetry axes (Fig.~\ref{fig:h_axis})
   lie side-by-side and point oppositely.
   One Posner forms a mirror image of the other.
   % Reference: Swift_17_Posner -- p. 5, Fig. 4a, caption
   
   Such Posners bind, Fisher conjectures~\cite{Fisher15},
   when and only when $\tau_A + \tau_B = 0$.
   F\&R support the conjecture with a Berry-phase argument.
   They formalize the conjecture as 
   a ``quantum dynamical selection rule''~\cite{Fisher_17_Quantum}.
   Hence if $A$ and $B$ approach with the right orientation,
   whether they bind depends entirely on
   whether $\tau_A + \tau_B = 0$.
   The molecules' bound-together-or-not status 
   serves as a classical measurement record.
   So does the environment, as in Posner creation (Sec.~\ref{section:One_Pos_set_up}):
   Posner binding releases about 1 eV of heat~\cite{Fisher15,Swift_17_Posner}.
   
   Let us formalize the measurement, using the mathematics of QI.
   We define a projector on $\left( \HilPos \right)^{ \otimes 2}$:
   \begin{align}
      & \PiStick
      \label{eq:ProjTau2} :=
      \left(  \Pi_{ \tau_A  =  0 }  \otimes  \Pi_{ \tau_B  =  0 }  \right)
       +  \left( \Pi_{ \tau_A  =  \pm 1 }  \otimes 
                   \Pi_{ \tau_B  =  \mp 1 }  \right) \, .
   \end{align}
   The PVM
   \begin{align}
      \label{eq:C_PVM}
      \Set{ \PiStick ,  \id  -  \PiStick  }
   \end{align}
   can be measured.
   Suppose that the first outcome obtains (that the Posners bind).
   The two-Posner state $\rho$ updates as 
   \begin{align} 
      \rho \mapsto   \frac{  \PiStick  \,  \rho  \,  \PiStick  }{
      \Tr \left(  \PiStick  \,  \rho  \,  \PiStick \right) }  \, .
   \end{align}
   The twelve qubits form a \emph{dodectuple}.
   Suppose, instead, that the second outcome obtains 
   (that the Posners fail to bind).
   The joint state updates as
   \begin{align}
      \rho  \mapsto  \frac{  
      \rho  -  \{  \PiStick ,  \rho  \}  +  \PiStick  \,  \rho  \,  \PiStick }{
      1  -  \Tr  \left(  \PiStick  \,  \rho  \right)  }  \, .
   \end{align}
   The anticommutator of operators $O$ and $O'$ is denoted by
   $\{ O  ,  O'  \}$.

   \item  \label{eq:Bound_Ops}
   \textbf{Dodectuple operations:}
   \emph{Suppose that hextuples $A$ and $B$ have been measured
   with the PVM~\eqref{eq:C_PVM}.
   Suppose that outcome $\PiStick$ has obtained.
   The twelve logical qubits can undergo operation~\ref{item:Twelve_rot},
   followed by~\ref{item:Separate} or~\ref{item:Disperse_bind}.}
   \begin{enumerate}[leftmargin=18pt]

      % Figure: Operation 6a: Dodectuple permutation
       % \begin{figure}[tb]
       % \centering
       % \includegraphics[width=.4\textwidth, clip=true]{operation6a}
       % \caption{
       % \caphead{Circuit-diagram element that represents
       % dodectuple permutation 
       % (operation~\ref{item:Swap2}).}}
       % \label{fig:operation6a}
       % \end{figure}
   
      % 
      % \item  \label{item:Swap2}  
      % \textbf{Dodectuple permutation}
      % (Fig.~\ref{fig:operation6a}):
      % \emph{The qubits in $A$ can undergo 
      % the cyclic permutation $(231)(564)$
      % (operation~\ref{item:Swap})
      % while the qubits in $B$ undergo the anticyclic permutation
      % $(132)(465)$.}
      % The qubits are permuted as the Posner pair
      % rotates, through an angle $2 \pi / 3$, 
      % about its joint symmetry axis.
      % One Posner rotates clockwise, relative to some lab-frame axis $\hat{n}$;
      % the other Posner rotates counterclockwise.
      
      % Figure: Operation 6b: Dodectuple-coordinated single-qubit unitaries
       \begin{figure}[tb]
       \centering
       \includegraphics[width=.18\textwidth, clip=true]{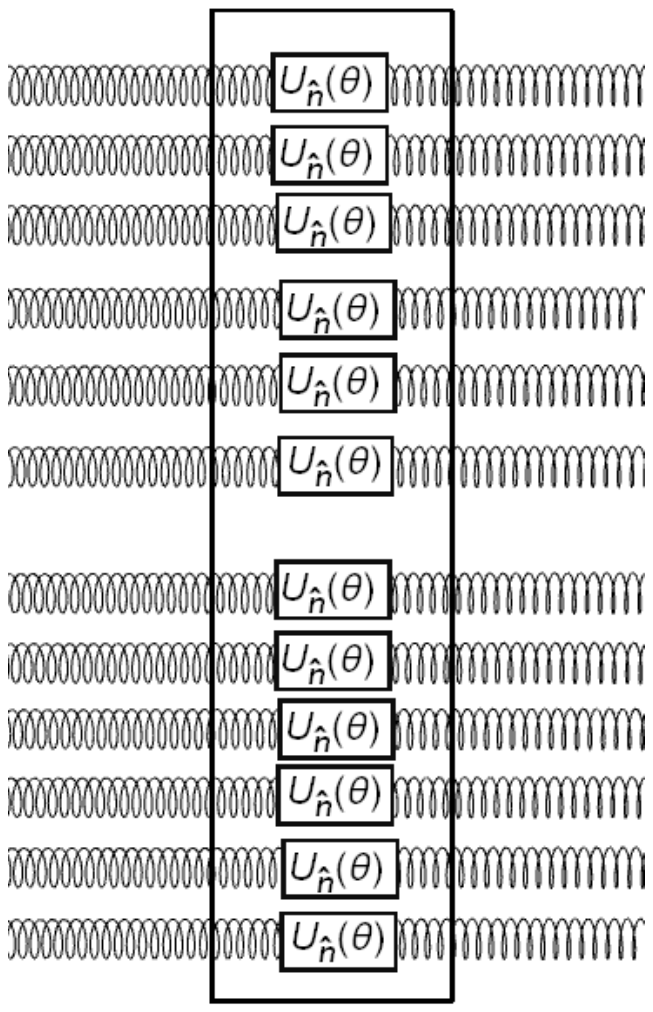}
       \caption{
       \caphead{Circuit-diagram element that represents
       dodectuple-coordinated single-qubit unitaries
       (operation~\ref{item:Twelve_rot}).}
       $\hat{n}$ denotes the axis rotated about.
       $\theta$ denotes the angle rotated through.}
       \label{fig:operation6b}
       \end{figure}

      \item  \label{item:Twelve_rot}
      \textbf{Dodectuple-coordinated single-qubit unitaries}
      (Fig.~\ref{fig:operation6b}):
      \emph{The two hextuple's qubits can undergo 
      approximately identical single-qubit rotations: 
      $[ U_{ \hat{n} }(\theta) ]^{ \otimes 12 }$.}
      The qubits rotate as neuron firings generate magnetic fields.
      See the comments about operation~\ref{item:Six_rot}.
      
      % Figure: Operation 6c: Dodectuple -> 2 hextuples
       \begin{figure}[tb]
       \centering
       \includegraphics[width=.4\textwidth, clip=true]{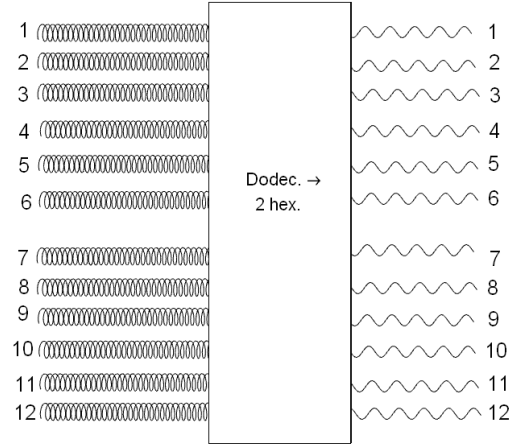}
       \caption{\caphead{Circuit-diagram element that represents
       the separation of a dodectuple into two hextuples
       (operation~\ref{item:Separate}):}
       Each coil represents a qubit in a dodectuple. 
       Each wavy line represents a qubit in a hextuple
       that is not in a dodectuple.}
       \label{fig:operation6c}
       \end{figure}

      \item  \label{item:Separate}
      \textbf{Dodectuple $\to$ 2 hextuples}
      (Fig.~\ref{fig:operation6c}):
      \emph{A dodectuple can separate into 
      independent hextuples,
      the hextuples that joined together.}
      The Posners can drift apart.
      They can return to undergoing
      operations~\ref{item:Six_rot} and ~\ref{item:pH}.
      
      % Figure: Operation 6d: Dodectuple break-up
       \begin{figure}[tb]
       \centering
       \includegraphics[width=.4\textwidth, clip=true]{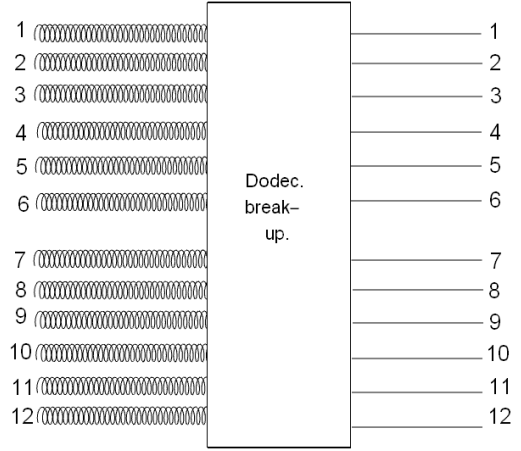}
       \caption{\caphead{Circuit-diagram element that represents
       dodectuple break-up
       (operation~\ref{item:Disperse_bind}):}
       Each coil represents a qubit in a dodectuple.       
       Each straight line represents a qubit not grouped with
       any other qubits.}
       \label{fig:operation6d}
       \end{figure}

      \item  \label{item:Disperse_bind}
      \textbf{Dodectuple break-up}
     (Fig.~\ref{fig:operation6d}):
      \emph{The hextuples can break down into their constituents:
      The qubits can cease to correspond to 
      meaningful geometries or to observables $\GC$.}
      The qubits thereafter behave independently.
      They can, again, undergo 
      operation~\ref{item:Form_Pos}.
      
      The hextuples break down as the Posners hydrolyze.
      Fisher conjectures that bound-together Posners 
      hydrolyze more often
      than separated Posners~\cite{Fisher15},
      as reviewed in this paper's Sec.~\ref{section:Backgrnd_MF}.
      % Reference: My write-ups --> "What happens where" --> p. 11: "Posner molecules that have bound into pairs should be easier to melt than isolated molecules."
   
   \end{enumerate}
   
\end{enumerate}

\subsubsection{Analysis of Posner quantum computation}
\label{section:Analyze_abstract}

We have dissected Fisher's narrative into physical processes,
then abstracted out the computations that the processes effect.
Fisher's narrative~\cite{Fisher15}
can now be cast as a quantum circuit,
depicted in Fig.~\ref{fig:Circuit_MF_story}.
As shown there, the qubits spatially near a qubit $A$
can change from time step to time step.
The qubits approach and separate as
the phosphorus nuclei traverse the fluid.
Spatially local interactions (such as molecular binding) 
implement the circuit elements.

Four features of Posner operations merit analysis.
Two operations entangle logical qubits.
The entanglement generated is discussed in 
Sec.~\ref{section:Ent_capacity}.
% Section~\ref{section:Circuit_depth} contains a criterion
% for realizing nonconstant-depth circuits:
% Posners must bind, hydrolyze, and rebind.
Section~\ref{section:Control} concerns control:
To perform the QI-processing tasks introduced 
in Sections~\ref{section:Stick_apps}-\ref{section:AKLT}, 
one might need fine control over Posners.
Biofluids might not exert such control.
But assuming control
facilitates first-step QI analyses.
One operation merits its own section:
The measurement~\eqref{eq:C_PVM} is compared
with a Bell measurement,
and applied in QI-processing tasks, in Sec.~\ref{section:Stick_apps}.

% Figure: Circuit that represents Fisher's narrative
\begin{figure*}[p] 
\centering
\includegraphics[width=0.76\textwidth, clip=true]{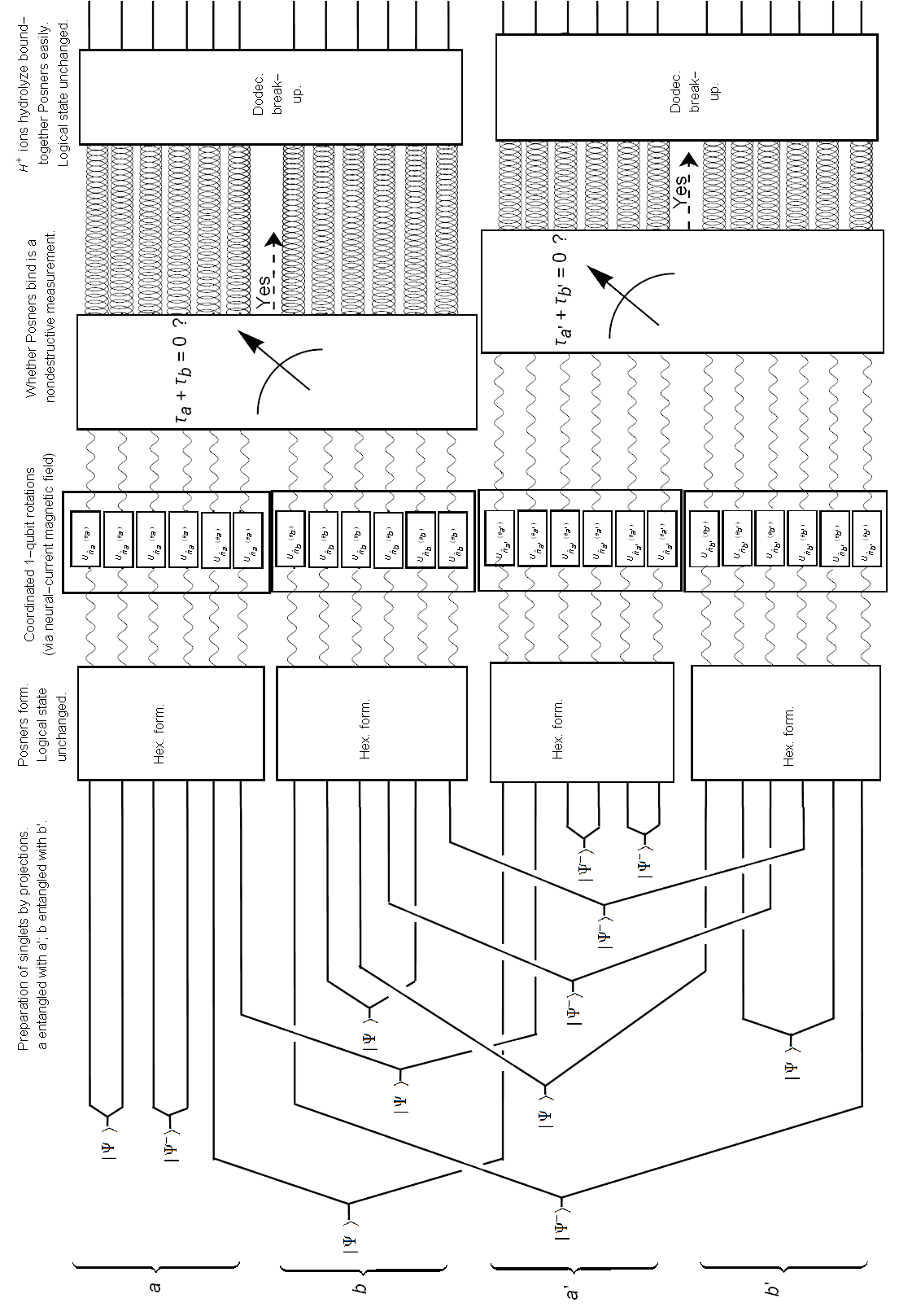}
\caption{\caphead{Circuit representation of Fisher's quantum-cognition narrative:} 
Fisher conjectures that certain chemical processes occur,
in a certain sequence, in the body~\cite{Fisher15}.
The sequence is reviewed in this paper's introduction.
We abstracted out the computations effected by the chemical processes,
in Sec.~\ref{section:Abstract2}.
The abstraction enables us to recast Fisher's narrative 
as a quantum circuit.
Time progresses from left to right in the figure 
(from the bottom to the top of the page).
The sets of six qubits are labeled $a$, $b$, $a'$, and $b'$, 
as in~\cite[p. 5, Fig. 3]{Fisher15}.
The circuit elements are defined in Sec.~\ref{section:Abstract2}.
}
\label{fig:Circuit_MF_story}
\end{figure*}
\paragraph{Entanglement generation:}
\label{section:Ent_capacity}
Entanglement enables quantum computers
to solve certain problems quickly.\footnote{
% < f >
More precisely, contextuality does~\cite{Spekkens_08_Negativity,Howard_14_Contextuality}.}
% < /f >
Two Posner operations create entanglement:
Bell-pair creation (operation~\ref{item:Bell}) and
the Posner-binding measurement
(operation~\ref{item:Meas}).

Bell pairs serve as units of entanglement
in QI~\cite{NielsenC10}.
We present two implications of Bell-pair creation
for Posners.
First, Bell-pair creation (operation~\ref{item:Bell}),
with the Posners' geometry,
can efficiently prepare a state
that fuels universal measurement-based quantum computation
(Sec.~\ref{section:AKLT}).
Second, distributing Bell pairs across Posners
can affect their binding probabilities (Sec.~\ref{section:Bio_Bell}).

The role played by Bell pairs in QI processing is well-known.
Less obvious is how much,
and which kinds of, 
entanglement $\PiStick$ creates and destroys.
We characterize this entanglement in two ways
(Sec.~\ref{section:Analyze_Stick}).
The PVM~\eqref{eq:C_PVM}, we show,
transforms a subspace as 
a coarse-grained Bell measurement.
Bell measurements facilitate
quantum teleportation~\cite{Bennett_93_Teleporting}.
The PVM~\eqref{eq:C_PVM} facilitates \emph{incoherent teleportation}:
A state's weights are teleported;
the coherences are not.
% The dephasing comes from
% the PVM's simulation of a coarse-grained Bell measurement.

One might expect Posner binding to render
Posner quantum computation universal:
Conventional wisdom says,
nearly any entangling gate, plus all single-qubit unitaries,
form a universal gate set~\cite{Deutsch_95_Universality,Lloyd_95_Almost,Brylinski_01_Universal,Freedman_02_Diameters,Harrow_08_Exact}.
% References taken from Harrow_08_Exact -- p.1 -- top of page
Posner operations include entangle qubits
(via operation~\ref{item:Meas})
and rotate qubits
(via operations~\ref{item:Six_rot} and~\ref{item:Twelve_rot}).
(Arbitrary rotations through angles of up to $\pi$ might be realized.
Typical angles are expected to be smaller.
See App.~\ref{section:One_Qubit_Gates}.)

But the conventional wisdom appears inapplicable to Posner operations,
for three reasons.
First, conventional-wisdom gates
evolve the system unitarily.
The Posner-binding measurement~\eqref{eq:C_PVM} does not.
(Hence our shift to measurement-based quantum computation
in Sec.~\ref{section:AKLT}.)
Second, many universality proofs decompose
a desired entangling gate 
into implementable gates.
The Posner-binding measurement seems unlikely to decompose.

Third, conventional-wisdom entangling gates
are defined in terms of qubits' states.
The Posner-binding measurement 
is defined in terms of $\tau$.
$\tau$ is an eigenvalue of an observable $\GC$
of a hextuple of qubits.
One must deduce how the measurement transforms 
any given qubit. 
Does this indirect entangler of qubit states,
with single-qubit rotations 
(operations~\ref{item:Six_rot} and~\ref{item:Twelve_rot}),
form a universal set?
The answer merits further study.

Finally, one might wonder how the permutations~\ref{item:Swap}
alter entanglement.
Each permutation decomposes into two-party swaps, as
$(231) = (321)(132)$.
Swaps shift entanglement amongst subsystems,
rather than creating entanglement.
% Swaps ``preserve bipartite entanglement
% and cannot create multipartite entangled output states''~\cite{Hosur_16_Chaos}.

%
%
%
% \subsubsection{Circuit depth}
% \label{section:Circuit_depth}

% Consider attempting to finite-depth circuits---to 
% perform many computations sequentially---with Posner operations.
% For example, one might attempt
% to implement universal quantum computation.
% One must be able to entangle qubits,
% and to rotate qubits independently,
% late in the computation.

% Let us focus on the biological-circuit components
% that follow state preparation
% (on the operations beyond~\ref{item:Bell}).
% Qubits can undergo entangling operations
% only when qubits are in Posners
% (via operation~\ref{item:Meas}).
% Qubits can rotate independently
% only when phosphates are outside of Posners
% (via operation~\ref{item:One_qubit_rot}).
% Hence nonconstant-depth circuits would require
% (i) Posner binding and hydrolysis
% (operations~\ref{item:Meas} and~\ref{item:Disperse_bind}) and/or
% (ii) one-Posner hydrolysis (operation~\ref{item:pH}).

% Suppose that all the Posners fail to hydrolyze.
% Posner operations can realize only depth-4 circuits.
% Realizing constant-depth circuits does not suffice
% for realizing universal quantum computation~\cite{Eldar_15_Local}.
% Hence Posner disintegration
% is necessary for realizing universal quantum computation.
% Whether Posner operations suffice
% remains an open question.

%
%
%
\paragraph{Control required to perform 
quantum-information-processing tasks with Posner molecules:}
\label{section:Control}
In Sections~\ref{section:Stick_apps}--\ref{section:Bio_Bell},
we concatenate Posner operations
to form QI-processing protocols.
Implementing the protocols may require 
fine control over the chemical processes
that effect the computations.
The body might seem unlikely to realize fine control.
We illustrate with two examples.
Then, we justify the assumption of fine control.

Consider, as a first example, 
running an arbitrary quantum circuit.
Arbitrary qubits must rotate through arbitrary angles $\theta$,
about arbitrary axes $\hat{n}$, arbitrarily precisely.
In the quantum-cognition setting, 
logical qubits rotate as Posners experience
magnetic fields generated by neural currents
(via operations~\ref{item:Six_rot} and~\ref{item:Twelve_rot}).
The field experienced depends on the Posner's location,
which depends on the Posner's collisions with other particles.
Fluid particles collide randomly.
Random collisions appear unlikely to facilitate
the precise rotations required for a given circuit.

The $\tau_A + \tau_B = 0$ measurement (operation~\ref{item:Meas})
provides a second example.
Consider Posners $A$ and $B$ 
that approach each other.
One might wish to infer, from the Posners' binding or lack thereof,
whether $\tau_A + \tau_B = 0$.
But the inference is justified only if
$A$ and $B$ were oriented such that
whether they would bind 
depended only on whether $\tau_A + \tau_B$ vanished.

Suppose that one Posner's $\hat{z}_\In$-axis 
stood tip-to-tail with
the other Posner's $\hat{z}_\In$-axis,
rather than side-by-side.
The Posners would fail to bind.
But one could not infer that $\tau_A + \tau_B  \neq  0$.
Only finely tuned two-Posner encounters
reflect whether $\tau_A + \tau_B = 0$.
Only finely tuned encounters constitute measurements.\footnote{
% < f >
\label{footnote:Postselect}
If two Posners bind, then $\tau_A + \tau_B = 0$;
the inference is justified.
But binding does not constitute merely a measurement.
Binding constitutes a measurement followed by classical postprocessing.
By \emph{classical postprocessing}, we mean the following. 
Consider performing some protocol in each of several trials.
Let the protocol involve a measurement.
Consider the data collected throughout trials.
Consider discarding some of the data, keeping only the data
collected during the trials in which
the measurement yielded some outcome $x$.
One has classically postselected on $x$.
If two Posners bind, then 
(i) whether $\tau_A + \tau_B = 0$ is measured and
(ii) the ``yes'' outcome is classically postselected on.
If two Posners bind, step (i) alone is not implemented;
a measurement alone is not performed.

Classical postprocessing differs from 
the postselection in, e.g.,~\cite{Aaronson_05_Quantum}.
The latter postselection affords computational power
unlikely to grace quantum systems.
In contrast, classical postprocessing happens in today's laboratories.
}
% < /f >

But assuming perfect control can facilitate QI-theoretic analyses.
Many QI protocols are phrased in the language of ``agents.''
One imagines intelligent agents, Alice and Bob, 
who wish to process QI.
One specifies and analyzes protocols in terms of the agents' intents and actions.
Alice and Bob are often assumed to perform 
certain operations with perfect control.
Examples of such ``allowed operations'' include 
local operations and classical communications~\cite{Horodecki_09_Quantum}.

Control partitions 
(i) what can be achieved in principle from
(ii) what can be achieved easily 
with today's knowledge and techniques.
Item (ii) shifts with our understanding and technology.
Item (i) is permanent and is the focus of much QI theory.

A few decades ago, for example, experimentalists had trouble 
performing CNOT gates. % \footnoteref{footnote:CNOT}
Many groups have mastered the gate by now.
These groups implement protocols devised before
CNOTs appeared practical.

Similarly, precise phosphate rotations appear impractical.
But some precise-rotation mechanism could be discovered.
Also, by assuming perfect control, we derive a limit on 
what Posners can achieve without perfect control.
We ascertain what QI processing is possible in principle.

\subsection{The Posner-binding measurement and
applications thereof to quantum information processing}
\label{section:Stick_apps}

The measurement~\eqref{eq:C_PVM} entangles two Posners' states.
Yet the measurement projectors,
$\Pi_{AB}$ and $\id - \Pi_{AB}$,
entangle states in different ways.
How much either projector entangles is not obvious.
Neither is the PVM's potential for processing QI. 

This section sheds light on these unknowns.
We compare the PVM to a \emph{Bell measurement},
a standard QI operation (Sec.~\ref{section:Analyze_Stick}).
The next two sections detail applications of the PVM:
The PVM facilitates incoherent teleportation
(Sec.~\ref{section:Teleportation}).
Also, the PVM can be used to
project Posners
onto their $\tau = 0$ eigenspaces
(Sec.~\ref{section:Tau_zero_3}).

\subsubsection{Comparison of the Posner-binding measurement
with a Bell measurement}
\label{section:Analyze_Stick}

First, we review Bell states and measurements~\cite{NielsenC10}.
A Bell measurement prepares an entangled state
of two qubits.
Four maximally entangled states
span the two-qubit Hilbert space, $\mathbb{C}^4$.
The orthonormal \emph{Bell basis} is 
\begin{align}
   \label{eq:Bell_basis}
   \{  & \ket{ \Phi^+ }  :=  \frac{1}{ \sqrt{2} }  
   (  \ket{ 0 0 }  +  \ket{ 1 1 }  )  \\
   & \ket{ \Phi^- }  :=  \frac{1}{ \sqrt{2} }  
   (  \ket{ 0 0 }  -  \ket{ 1 1 }  )  \\
   & \ket{ \Psi^+ }  :=  \frac{1}{ \sqrt{2} }
   (  \ket{ 0 1 }  +  \ket{ 1 0 } )  \\
   & \ket{ \Psi^- }  :=  \frac{1}{ \sqrt{2} }
   (  \ket{ 0 1 }  -  \ket{ 1 0 } )  \}  \, .
\end{align}

A \emph{Bell measurement} is represented by the PVM 
\begin{align}
   \label{eq:Bell_meas}
   \{  \ketbra{ \Phi^+ }{ \Phi^+ }  ,  \ketbra{ \Phi^- }{ \Phi^- }  ,
   \ketbra{ \Psi^+ }{ \Psi^+ }  ,  \ketbra{ \Psi^- }{ \Psi^- }  \}  \, .
\end{align}
Many QI protocols involve Bell measurements.
Examples include quantum teleportation~\cite{Bennett_93_Teleporting}, 
superdense coding (the effective transmission of two bits 
via the direct transmission of just one bit, with help from entanglement)~\cite{Bennett_92_Communication},
and teleportation-based quantum computation~\cite{Gottesman_99_Demonstrating,Knill_01_Scheme,Nielsen_03_Quantum,Zhou_00_Methodology}.

Posner binding simulates a coarse-grained Bell measurement.
The Bell-state projectors~\eqref{eq:Bell_basis}
are defined on $\mathbb{C}^2$.
In contrast, the Posner Hilbert space $\HilPos$
is isomorphic to $\mathbb{C}^6$.
We therefore define an effective qubit subspace.
Let $\ket{ \onet }$ denote an arbitrary 
$\tau = 1$ eigenstate of $\CThree$;
and $\ket{ \twot }$, an arbitrary
$\tau = 2$ eigenstate.
$\ket{ \onet }$ and $\ket{ \twot }$ serve analogously
to $\ket{ 0 }$ and $\ket{ 1 }$ in
${\rm span} \Set{ \ket{ \onet } ,  \ket{ \twot } }$.
% The state $\ket{ \plust }  :=  \frac{1}{ \sqrt{2} }
% ( \ket{ \onet }  +  \ket{ \twot } )$
% serves analogously to $\ket{ + }$. 

%
\begin{proposition}
\label{prop:Bell_meas}
Let $A$ and $B$ denote two Posners.
The measurement~\eqref{eq:C_PVM}
transforms the effective two-qubit space
\begin{align}
   {\rm span} \Set{ \ket{ \onet , \onet }  ,  \ket{ \onet , \twot },
\ket{ \twot , \onet },  \ket{ \twot , \twot } }
\end{align}
identically to the coarse-grained Bell measurement
\begin{align}
   \label{eq:Coarse_Bell}
   \Set{ \ketbra{ \Phi^+ }{ \Phi^+ }  +  \ketbra{ \Phi^- }{ \Phi^- } ,
            \ketbra{ \Psi^+ }{ \Psi^+ }  +  \ketbra{ \Psi^- }{ \Psi^- } }  \, .
\end{align}
\end{proposition}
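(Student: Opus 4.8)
The plan is to check the claimed identity directly on the four-dimensional effective subspace, evaluating the projector $\PiStick$ of Eq.~\eqref{eq:ProjTau2} on each of the four basis vectors $\ket{\onet,\onet}$, $\ket{\onet,\twot}$, $\ket{\twot,\onet}$, $\ket{\twot,\twot}$. First I would record how the single-Posner charge projectors act there. Since $\ket{\onet}$ is a $\tau=1$ eigenstate of $\CThree$ and $\ket{\twot}$ a $\tau=2$ (equivalently $\tau=-1$) eigenstate, the projector $\Pi_{\tau=0}$ annihilates both, while $\Pi_{\tau=1}$ fixes $\ket{\onet}$ and kills $\ket{\twot}$ and $\Pi_{\tau=-1}$ does the reverse. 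Reading $\Pi_{\tau_A=\pm1}\otimes\Pi_{\tau_B=\mp1}$ as the shorthand $\Pi_{\tau_A=1}\otimes\Pi_{\tau_B=-1}+\Pi_{\tau_A=-1}\otimes\Pi_{\tau_B=1}$, this already shows that the $\Pi_{\tau_A=0}\otimes\Pi_{\tau_B=0}$ term of $\PiStick$ contributes nothing on the effective subspace.

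Next I would apply $\PiStick$ to the four basis states. On $\ket{\onet,\twot}$ only the summand $\Pi_{\tau_A=1}\otimes\Pi_{\tau_B=-1}$ survives, and it acts as the identity, so $\PiStick\ket{\onet,\twot}=\ket{\onet,\twot}$; symmetrically $\PiStick\ket{\twot,\onet}=\ket{\twot,\onet}$. On $\ket{\onet,\onet}$ and $\ket{\twot,\twot}$ every summand annihilates the state, since $\tau_A+\tau_B\neq0$ in both cases. Hence, restricted to ${\rm span}\Set{\ket{\onet,\onet},\ket{\onet,\twot},\ket{\twot,\onet},\ket{\twot,\twot}}$, the operator $\PiStick$ is the orthogonal projector onto ${\rm span}\Set{\ket{\onet,\twot},\ket{\twot,\onet}}$ and $\id-\PiStick$ the orthogonal projector onto ${\rm span}\Set{\ket{\onet,\onet},\ket{\twot,\twot}}$. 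In particular the effective subspace is invariant under the measurement, so the post-measurement states remain inside it and the comparison with a genuine two-qubit measurement is well-posed.

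Finally I would carry this picture through the identification $\ket{\onet}\leftrightarrow\ket{0}$, $\ket{\twot}\leftrightarrow\ket{1}$. Under it the binding subspace ${\rm span}\Set{\ket{\onet,\twot},\ket{\twot,\onet}}$ becomes ${\rm span}\Set{\ket{01},\ket{10}}$, whose projector equals $\ketbra{\Psi^+}{\Psi^+}+\ketbra{\Psi^-}{\Psi^-}$ by Eq.~\eqref{eq:Bell_basis}, and the non-binding subspace becomes ${\rm span}\Set{\ket{00},\ket{11}}$, with projector $\ketbra{\Phi^+}{\Phi^+}+\ketbra{\Phi^-}{\Phi^-}$. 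Thus the PVM $\Set{\PiStick,\id-\PiStick}$ acts on the effective two-qubit space exactly as the coarse-grained Bell measurement~\eqref{eq:Coarse_Bell}, up to the immaterial order in which the two outcomes are listed; both simply record whether the two computational-basis values agree. I do not expect any real obstacle: the only points that need care are the bookkeeping convention $\twot\equiv(-1)_\tau$ when summing charges and the fact that the $\tau=0$ block of $\PiStick$ drops out here, after which the statement reduces to a check on four basis vectors.
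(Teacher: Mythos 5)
Your proposal is correct and follows essentially the paper's own route: restrict $\PiStick$ to the effective two-qubit space, observe that it acts as the projector onto ${\rm span}\Set{\ket{\onet,\twot},\ket{\twot,\onet}}$ (with $\id-\PiStick$ projecting onto ${\rm span}\Set{\ket{\onet,\onet},\ket{\twot,\twot}}$), relabel $\onet\mapsto 0$, $\twot\mapsto 1$, and identify these with $\ketbra{\Psi^+}{\Psi^+}+\ketbra{\Psi^-}{\Psi^-}$ and $\ketbra{\Phi^+}{\Phi^-}$-type sums by direct substitution. Your extra bookkeeping on how each charge-sector summand of $\PiStick$ acts, and the note that the $\tau=0\otimes\tau=0$ term drops out, merely spells out what the paper leaves implicit.
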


\begin{proof}
The projector~\eqref{eq:ProjTau2} transforms
the two-qubit space as
\begin{align}
   \PiStick  & =  \ketbra{ \onet , \twot }{ \onet , \twot }
   +  \ketbra{ \twot , \onet }{ \twot , \onet }  \, .
\end{align}
Let us relabel $\onet$ as 0 and $\twot$ as 1.
The projector becomes
\begin{align}
   \label{eq:Bell_sim1}
   \PiStick    & =  
   \ketbra{ \Psi^+ }{ \Psi^+ }  +  \ketbra{ \Psi^- }{ \Psi^- }  \, .
\end{align}
Direct substitution into the RHS yields the LHS.

Consider the complementary projector in the measurement~\eqref{eq:C_PVM}.
$\id - \PiStick$ transforms the effective two-qubit space as
\begin{align}
   \id  -  \PiStick  =  \ketbra{ \onet , \onet }{ \onet , \onet }
   +  \ketbra{ \twot , \twot }{ \twot , \twot }  \, .
\end{align}
Relabeling and direct substitution show that
\begin{align}
    \id  -  \PiStick  =
    \ketbra{ \Phi^+ }{ \Phi^+ }  +  \ketbra{ \Phi^- }{ \Phi^- }  \, .
\end{align}

\end{proof}

Let us quantify the coarse-graining in Proposition~\ref{prop:Bell_meas}.
Let $\ket{ \chi }$ denote
an arbitrary two-qubit state.
Consider measuring $\ket{ \chi }$ in the Bell basis.
One of four possible outcomes obtains.
The outcome can be encoded in
$\log_2 (4)  =  2$ bits.
You could encode, in one bit,
whether a $\Phi$ outcome or a $\Psi$ outcome obtained.
You could encode, in the second bit,
whether a $+$ outcome or a $-$ outcome obtained.

Imagine knowing the first bit's value and forgetting the second bit's.
The state most reasonably attributable to the system
would be
$( \ketbra{ \Phi^+ }{ \Phi^+ }  +  \ketbra{ \Phi^- }{ \Phi^- } )  
\ket{ \chi }$ or
$( \ketbra{ \Psi^+ }{ \Psi^+ }  +  \ketbra{ \Psi^- }{ \Psi^- } )
\ket{ \chi}$,
depending on the first bit.
This state would be the state most reasonably attributable to the system
if, instead,~\eqref{eq:Coarse_Bell} were measured 
and the outcome were known.

The information in the measurement outcome
can be quantified differently.
Appendix~\ref{section:Quant_Bind_Outcome} contains details.

\subsubsection{Application 1 of binding Posner molecules:
Incoherent teleportation}
\label{section:Teleportation}

Quantum teleportation transmits a state $\ket{ \psi }$
from one system to another~\cite{Bennett_93_Teleporting}.
Consider agents Alice and Bob who 
live in the same town.
Suppose that Bob moves to another country.

Let Alice hold a qubit $A$
that occupies a state $\ket{ \psi }  =  c_0 \ket{ 0 }  +  c_1 \ket{1}$.
Alice may wish to send Bob $\ket{ \psi }$.
Mailing $A$ would damage the state.
Alice should not measure $A$,
call Bob on the telephone, and tell him the outcome.
Bob would receive too little information
to reconstruct $\ket{ \psi }$ in his lab.

Suppose that, before Bob moved away,
he and Alice created a Bell state, e.g., $\ket{ \Psi^-}$.
Let $B$ and $C$ denote the entangled qubits.
Suppose that Bob takes $C$ across the world.
Alice should perform 
a Bell measurement~\eqref{eq:Bell_meas} of $AB$.
One of four possible outcomes will obtain.
Alice should tell Bob which, via telephone.
Her call communicates $\log_2(4) = 2$ bits.
Bob should transform $C$ with a unitary
whose form depends on the news.
$C$ will come to occupy the state $\ket{ \psi }$.
$A$ will occupy a different state.
Alice will have teleported $\ket{ \psi }$ to Bob.

We introduce a variation on quantum teleportation,
\emph{incoherent teleportation}.
The protocol illustrates the power of Posner binding.
The protocol relies on
entanglement, classical information, and Posner binding.

Posner binding resembles a coarse-grained Bell measurement,
as shown in Sec.~\ref{section:Analyze_Stick}.
Hence Posner binding fails to teleport
all the information teleportable with 
a Bell measurement.
The coherences in $\ket{ \psi }$ are not sent.
A classical random variable,
which results from decohering $\ket{ \psi }$, is.

The set-up and notation are introduced in Sec.~\ref{section:Tele_setup}.
The protocol is introduced in Sec.~\ref{section:Tele_protocol}
and analyzed in Sec.~\ref{section:Teleport_analysis}.

\paragraph{Set-up and notation:}
\label{section:Tele_setup}

Let $\ket{ \jt }$ denote an arbitrary $\tau = j$ eigenstate of $\CThree$,
for $j = 0 , 1 , 2$.
The $\ket{ \jt }$'s form the computational basis
for the qutrit space
${\rm span}\{ \ket{ \zerot } , \ket{ \onet },  \ket{ \twot }  \}$.
This basis serves, in incoherent teleportation, 
similarly to the $\sigma_z$ eigenbasis
in conventional teleportation.

Consider restricting the projector~\eqref{eq:ProjTau2}
to the space of two qutrits:
\begin{align}
   \label{eq:Proj_Qutrit}
   \PiStick'  & :=  \ketbra{ \zerot ,  \zerot }{ \zerot , \zerot }
   +  \ketbra{ \onet , \twot }{ \onet , \twot }
   \nonumber \\ & \qquad
   +  \ketbra{ \twot , \onet }{ \twot , \onet }  \, .
\end{align}  
Let $\ket{ \plust }  :=  \frac{1}{ \sqrt{3} }  
(  \ket{ \zerot }  +  \ket{ \onet }  +  \ket{ \twot }  )$.

\paragraph{Incoherent-teleportation protocol:}
\label{section:Tele_protocol}

Let $A$, $B$, and $C$ denote three Posners.
Suppose that $B$ and $C$ begin in $\ket{ \plust , \plust  }$,
then bind together.\footnote{
% < f >
One might worry that the spin state would decohere
before the Posners bound.
But chemical binding consists of electronic dynamics.
$\ket{ \plust , \plust  }$ is a state of nuclear spins.
Nuclear dynamics tend to unfold
much more slowly than electronic dynamics.
The Born-Oppenheimer approximation reflects
this separation of time scales.
Hence the nuclear state appears unlikely to decohere
before the Posners bind.}
% Reference: email chain "Nuclear vs. electronic times" -- begun 9/13/17
% < /f >
The joint state becomes
\begin{align}
   \label{eq:Qutrit_Bell}
   \PiStickBC  \ket{ \plust , \plust }
   =  \frac{1}{ \sqrt{3} }  \: (  \ket{ \zerot , \zerot }  +  \ket{ \onet , \twot }
   +  \ket{ \twot ,  \onet } )  \, .
\end{align}
In the first term's absence,~\eqref{eq:Qutrit_Bell}
would be a triplet.
A triplet is a Bell pair, a maximally entangled state
that can fuel quantum teleportation.
\eqref{eq:Qutrit_Bell}, we will show, fuels incoherent teleportation.

Suppose that, after~\eqref{eq:Qutrit_Bell} is prepared,
Posners $B$ and $C$ drift apart.
(In quantum-computation language, Alice and Bob share a Bell pair.)
Let $B$ approach $A$.
Let $A$ occupy an arbitrary state
\begin{align}
   \label{eq:ToTeleport}
   \ket{ \psi }  =  c_0 \ket{ \zerot }  +  c_1  \ket{ \onet }  +  c_2  \ket{ \twot }  \, .
\end{align}
The complex coefficients satisfy the normalization condition
$\sum_{j = 0}^2  | c_j |^2  =  1$.
(In quantum-computation language, $\ket{ \psi }$ is the unknown state
that contains information that Alice will teleport to Bob.)
The three Posners occupy the joint state
\begin{align}
   \label{eq:Tel_3_state}
   \ket{ \chi }  :=  
   \ket{ \psi }  \left(  \PiStickBC    \ket{ \plust , \plust }  \right)  \, .
\end{align} 

Suppose that Posners $A$ and $B$ bind together.
(During the analogous quantum-teleportation step,
Alice performs a Bell measurement of her qubits.)
The three-Posner state becomes
\begin{align}
   \label{eq:Tel_3_state2}
   &  \PiStick   \ket{ \chi }
   /  \bra{ \chi }  \PiStick   \ket{ \chi }
   \\ & 
   % % %
   \label{eq:Tel_3_state2a}
   =  c_0  \ket{ \zerot , \zerot , \zerot }
   +  c_1  \ket{ \onet ,  \twot ,  \onet }
   +  c_2  \ket{ \twot , \onet , \twot }  \\
   % % %
   & =:  \ket{ \chi' }  \, .
\end{align}

Posner $C$ occupies (Bob holds) the reduced state
\begin{align}
   \rho_C  & :=
   % % %
   \Tr_{AB} ( \ketbra{ \chi' }{ \chi' } )  \\
   % % %
   \label{eq:C_state}
   & =  | c_0 |^2  \ketbra{ \zerot }{ \zerot }
   +  | c_1 |^2  \ketbra{ \onet }{ \onet }
   % \nonumber \\ & \quad
   +  | c_2 |^2  \ketbra{ \twot }{ \twot }  \, .
\end{align}
Posner $C$'s state encodes information about $\ket{ \psi }$,
the square moduli of the coefficients in Eq.~\eqref{eq:ToTeleport}.
Yet $C$ has never interacted with $A$ directly.
Information has teleported from $A$ to $C$,
with help from $\ket{ \plust , \plust }$
and from Posner binding.

Posners $A$ and $B$ had a probability
\begin{align}
   p_{\Pi}  =  \Tr  \LParen  \PiStick  \,
   \Tr_C  (  \ketbra{ \chi }{ \chi }  )  \RParen
\end{align}
of binding together.
(An analogous probability can be introduced
into quantum teleportation:
Let Alice have a nonzero probability of
failing to perform her Bell measurement.)

Suppose, instead, that $A$ and $B$ fail to bind together.
The projector
\begin{align}
   \label{eq:OneMinusPi}
   & \id  -  \PiStick
   % % %
   =  \left(  \Pi_{ \tau_A = 0 }  \otimes  \Pi_{ \tau_B  =  1 }  \right)
   +  \left(  \Pi_{ \tau_A = 0 }  \otimes  \Pi_{ \tau_B  =  2 }  \right)
   \nonumber \\ & \qquad \qquad
   +  \left(  \Pi_{ \tau_A = 1 }  \otimes  \Pi_{ \tau_B  =  0 }  \right)
   +  \left(  \Pi_{ \tau_A = 1 }  \otimes  \Pi_{ \tau_B  =  1 }  \right)
   \nonumber \\ & \qquad \qquad
   +  \left(  \Pi_{ \tau_A = 2 }  \otimes  \Pi_{ \tau_B  =  0 }  \right)
   +  \left(  \Pi_{ \tau_A = 2 }  \otimes  \Pi_{ \tau_B  =  2 }  \right)
\end{align}
projects the state of $AB$.
The three-Posner state $\ket{ \chi }$
[Eq.~\eqref{eq:Tel_3_state}] updates to
\begin{align}
   \left[  \left( \id  -  \PiStick  \right)  \otimes  \id  \right]  \ket{ \chi }
   % % %
   & =  \frac{1}{2}  [
   c_0  (  \ket{ \zerot , \onet , \twot }  
                  +  \ket{ \zerot , \twot , \onet } )
   \nonumber \\ & % \qquad \qquad
   +  c_1  (  \ket{ \onet , \zerot , \zerot }
                  +  \ket{ \onet , \onet , \twot } )
   \nonumber \\ & % \qquad \qquad
   +  c_2  ( \ket{ \twot , \zerot , \zerot }
                  +  \ket{ \twot , \twot ,  \onet } )  ]
   \nonumber  \\
   % % %
   & =:  \ket{ \chi'' }  \, .
\end{align}
Posner $C$ occupies (Bob holds) the reduced state
\begin{align}
   \label{eq:C_state2}
   & \Tr_{AB}  ( \ketbra{ \chi'' }{ \chi'' } )
   =  \frac{1}{2} [  
   ( | c_1 |^2  +  | c_2 |^2 )  \ketbra{ \zerot }{ \zerot }
   \\ \nonumber & \qquad  \quad
   +  ( | c_2 |^2  +  | c_0 |^2 )  \ketbra{ \onet }{ \onet }
   +  ( | c_0 |^2  +  | c_1 |^2 )  \ketbra{ \twot }{ \twot }  ]  \, .
\end{align}
Again, $C$ contains information about $\ket{ \psi }$,
despite never having interacted directly with $A$.

Suppose that Bob measures $\GC$,
the observable that generates the unitary $\CThree$.
Bob samples from a random variable
whose values 0, 1, and 2 are distributed
according to $( p'_0  =  | c_1 |^2  +  | c_2 |^2 ,  
p'_1  = | c_2 |^2  +  | c_0 |^2 ,  
p'_2  = | c_0 |^2  +  | c_1 |^2 )$.
% Reference: email chain ``Neatening C eigenstates'' -- messages sent on 9/19/17 and 9/23/17

%
%
%
\paragraph{Analysis of incoherent teleportation:}
\label{section:Teleport_analysis}

Five points merit analysis.
First, we quantify the classical information teleported.
Second, we characterize the QI not teleported.
Third, we compare the resources 
required for incoherent teleportation
to the resources required for quantum teleportation.
Incoherent teleportation, we show fourth, 
implements superdense coding---the
effective sending of much classical information
via the direct sending of little classical information,
with help from entanglement.
Fifth, we explain how to prepare $\ket{ \plust , \plust }$ and $\ket{ \psi }$
with Posner operations.

\paragraph{Quantification of the information teleported:}
Posners $A$ and $B$ teleport a trit to $C$.
A \emph{trit} is classical random variable
that can assume one of three possible values.
Imagine preparing a Posner in the state $\ket{ \psi }$ [Eq.~\eqref{eq:ToTeleport}]
and measuring $\GC$.
The measurement has a probability $p_0  =  | c_0 |^2$
of yielding 0, 
a probability $p_1  =  | c_1 |^2$ of yielding 1,  and
a probability $p_2  =  | c_2 |^2$ of yielding 2.
So does a $\GC$ measurement of $C$, 
if $A$ binds to $B$ [Eq.~\eqref{eq:C_state}].
The distribution has been teleported from $A$ to $C$.

Suppose that $A$ fails to bind to $B$.
Measuring Posner $C$ has a probability 
$p'_0  =  \frac{1}{2}  (  | c_1 |^2  +  | c_2 |^2  )$ of yielding 0,
a probability $p'_1  =  \frac{1}{2}  (  | c_2 |^2  +  | c_0 |^2 )$ 
of yielding 1, and
a probability $p'_2  =  \frac{1}{2}  (  | c_0 |^2  +  | c_1 |^2 )$ 
of yielding 2  [Eq.~\eqref{eq:C_state2}]. 
The measurement of Posner $C$ is equivalent to
an encoded generalized measurement of $\ket{ \psi }$.

A \emph{positive-operator-valued measure} (POVM)
$\Set{ M_1 , M_2 ,  \ldots  ,  M_\ell }$
represents a generalized quantum measurement~\cite{NielsenC10}.
The measurement elements are positive operators
$M_k > 0$.
They satisfy the completeness condition
$\sum_k  M_k^\dag  M_k  =  \id$.
The $M_k$'s need not be projectors,
unlike PVM elements.

Consider the POVM 
\begin{align}
   & \Bigg\{  \ketbra{ \overline{ \zerot } }{ \overline{ \zerot } }
   =  \frac{1}{ \sqrt{2} }  ( \ketbra{ \onet }{ \onet }  +  \ketbra{ \twot }{ \twot } ),
   \\ & \; \;
   \ketbra{ \overline{ \onet } }{ \overline{ \onet }  }
   =  \frac{1}{ \sqrt{2} }  ( \ketbra{ \twot }{ \twot }  +  \ketbra{ \zerot }{ \zerot } ) ,
   \\ & \; \;
   \ketbra{ \overline{ \twot } }{ \overline{ \twot }  }
   =  \frac{1}{ \sqrt{2} }  ( \ketbra{ \zerot }{ \zerot }  +  \ketbra{ \onet }{ \onet } )   
   \Bigg\}  \, .
\end{align}
Measuring this POVM is equivalent to measuring
the encoded observable
$\overline{ \mathcal{G} }_{ \CThree }  :=
\sum_j  j_\tau  \ketbra{ \overline{ j_\tau } }{ \overline{ j_\tau } }$.
Measuring the $\overline{ \mathcal{G} }_{ \CThree }$ of $\ket{ \psi }$
has a probability $p'_j$ of yielding
the encoded outcome $\overline{\jt}$.

Suppose that Posners $A$ and $B$ fail to bind.
A measurement of 
the $\GC$ of $C$ simulates
an encoded measurement of 
the $\GC$ of $\ket{ \psi }$.

\paragraph{Classicality of the teleported information:}
% The information is classical:
Only the square moduli $| c_j |^2$ are teleported.
The coefficients' phases are not.
Hence incoherent teleportation achieves less than 
quantum teleportation does.

Section~\ref{section:Analyze_Stick} clarifies why: 
Quantum teleportation involves Bell measurements.
Incoherent teleportation involves measurements of
whether $\tau_A + \tau_B = 0$.
The $\tau_A + \tau_B = 0$ measurement simulates
a coarse-grained Bell measurement.

\paragraph{Comparison of resources required for incoherent teleportation
with resources required for quantum teleportation:}
% Yet incoherent teleportation requires less of $C$.
In quantum teleportation,
qubit $C$ undergoes a local unitary
conditioned on the Bell measurement's outcome.
Our Posner $C$ needs no such conditional correcting.

Yet part of our story depends on 
the Posner-binding measurement's outcome:
the interpretation of the outcome of a $\GC$ measurement of Posner $C$.
Suppose that $A$ binds to $B$.
A $\GC$ measurement of Posner $C$ simulates
a measurement of the $\GC$ of $\ket{ \psi }$.
Suppose, instead, that $A$ fails to bind to $B$.
A $\GC$ measurement of Posner $C$ simulates
a measurement of the 
$\overline{ \mathcal{G} }_{ \CThree }$ of $\ket{ \psi }$.

\paragraph{Incoherent teleportation as superdense coding:}
Incoherent teleportation offers less power,
we have seen, than quantum teleportation.
Yet incoherent teleportation offers more power
than classical communication.
Suppose that Alice has incoherently teleported $\ket{ \psi }$.
Bob may wish to know which probability distribution he holds,
$\Set{ p_0 , p_1 , p_2 }$ or $\Set{ p'_0 , p'_1 , p'_2 }$.
Alice should send Bob a bit directly:
a zero if $A$ bound to $B$
and a one otherwise.\footnote{
% < f >
How could such classical communication 
manifest in biological systems?
In ordinary QI protocols,
classical communication manifests as telephone calls.
Today's phones do not fit in human brains.
But one can envision classical channels in a biofluid.
For example, if $A$ and $B$ bind, 
they shove water molecules away together.
If $A$ and $B$ fail to bind, 
water propagates away from them differently.
The patterns in the fluid's motion
may be distinguished.
The fluid-motion pattern would encode the bit.}
% < /f >

Alice would directly send Bob a bit,
while effectively sending a trit,
with help from entanglement and Posner binding.
A trit is equivalent to $\log_2(3) > 1$ bits.
Hence Alice packs much classical information (a trit)
into a small classical system (a bit).

Much classical information packs into a small classical system,
with help from a Bell pair and a Bell measurement,
in \emph{superdense coding}~\cite{Bennett_92_Communication}.
Conventional superdense coding packs
two bits into one.
Our protocol packs information less densely.

\paragraph{Preparing $\ket{ \psi }$ and $\ket{ \plust , \plust }$:}
Incoherent teleportation involves two coherent quantum states,
$\ket{ \psi }$ and $\ket{ \plust , \plust }$.
Instances of these states can be prepared with Posner operations.
We illustrate with an example in App.~\ref{section:Tele_state_app}.
To construct each state,
one arranges singlets in each Posner.
One then rotates one spin per Posner
through an angle $\frac{\pi}{4}$ about the $y_\lab$-axis.\footnote{
% < f >
A single-qubit rotation can occur as follows:
Suppose that qubits $A$ and $B$ form a singlet
(having occupied the same diphosphate).
$A$ enters a Posner $P$, while $B$ enters a Posner $P'$.
The qubits in $P$ undergo a rotation;
the qubits in $P'$ do not.
$P$ then hydrolyzes.
Meanwhile, the pyrophosphatase enzyme
that split $A$ from $B$ 
has been splitting other diphosphates.
$A$ forms a Posner $P''$ with newly split diphosphates.
}
% < /f >
Alternative preparation protocols might exist.

% References: 
% (1) Folder created by Elizabeth: "projectingPosners" --> teleportationExplicitTauPlus
% (2) Old reference: email chain "Posner-state decomposition" -- messages sent on 9/1/17
% (3) ``Neatening C eigenstates" -- emails exchanged on 9/13/17

%
%
%
\subsubsection{Application 2 of binding Posner molecules:
Projecting Posner molecules onto 
their $\tau = 0$ subspaces}
\label{section:Tau_zero_3}

The AKLT state can be prepared via
projections onto subspaces associated with
the spin quantum number $s = \frac{3}{2}$.
Posners can come to occupy a variation AKLT$'$ on
the AKLT state.
The Posners must be projected
onto their $\tau = 0$ subspaces
(Sec.~\ref{section:AKLT}).
Posner-binding measurements can effect these projections.

\begin{proposition}
\label{proposition:Tau_zero_3}
Let $A, B, C, \ldots, M$ label $m$ Posner molecules.
The following sequence of events projects each Posner's state
onto the $\tau = 0$ eigenspace:
\begin{enumerate}
   \item    \label{item:AB0}
   $A$ and $B$ bind together, then drift apart.
   
   \item     \label{item:BC0}
   $B$ and $C$ bind together, then drift apart.
   
   \item     \label{item:CA0}
   $C$ and $A$ bind together, then drift apart.
   $A$, $B$, and $C$ have been projected onto their $\tau = 0$ subspaces.
   
   \item   \label{item:Rest}
   Each remaining Posner ($D, \ldots, M$) binds to 
   a projected Posner, then drifts away.

\end{enumerate}
\end{proposition}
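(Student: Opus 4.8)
The plan is to collapse the whole sequence of events into a product of commuting projectors and then solve a tiny system of linear constraints over $\mathbb{Z}/3\mathbb{Z}$. First I would fix the reading of the statement: ``$X$ and $Y$ bind together, then drift apart'' means the Posner-binding measurement (operation~\ref{item:Meas}) has been performed, with the two Posners in the orientation for which binding is equivalent to $\tau_X+\tau_Y\equiv 0$, and the $\PiStick$ (``yes'') outcome has obtained; the complementary branch is classically discarded (footnote~\ref{footnote:Postselect}), and the subsequent separation (operation~\ref{item:Separate}) merely restores the individual hextuple structures and their charges $\GC$ without changing the state. Under this reading, binding $X$ to $Y$ applies to the global state the projector $\Pi^{XY}_0 := \sum_k \Pi_{\tau_X = k}\otimes\Pi_{\tau_Y = -k}$ onto the subspace where $\tau_X+\tau_Y\equiv 0\pmod 3$, followed by renormalization.

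Next I would note that the charges $\GC^A,\GC^B,\ldots,\GC^M$ act on distinct tensor factors, hence mutually commute, so every projector $\Pi^{XY}_0$ in the protocol is diagonal in one fixed common eigenbasis of $\{\GC^X\}$; moreover $\PiStick$ is built solely from spectral projectors of $\GC^A$ and $\GC^B$, so it does not disturb the individual charges, it only correlates them. Consequently the four steps' projectors all commute, their ordered composition equals an order-independent projector $\Pi_{\rm tot}$ onto the intersection of the associated subspaces, and the final (conditioned) state is $\Pi_{\rm tot}\rho\,\Pi_{\rm tot}/\Tr(\Pi_{\rm tot}\rho)$. Since $\Pi_{\tau_X=0}$ commutes with $\Pi_{\rm tot}$ and, once the constraints below are solved, $\Pi_{\rm tot}\le \Pi_{\tau_X=0}$ for each $X$, the reduced state of each Posner is supported on $\Hil_{\tau_X=0}$ — which is the meaning of ``projected onto the $\tau=0$ eigenspace'' here, even though the Posners may remain mutually entangled.

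Then comes the arithmetic core. After steps~\ref{item:AB0}--\ref{item:CA0} the support lies in $\{\tau_A+\tau_B\equiv 0\}\cap\{\tau_B+\tau_C\equiv 0\}\cap\{\tau_C+\tau_A\equiv 0\}$ (all mod $3$). The first two relations give $\tau_A\equiv -\tau_B\equiv \tau_C$, and the third then yields $2\tau_A\equiv 0\pmod 3$; since $2$ is invertible mod $3$, $\tau_A\equiv 0$, hence $\tau_A=\tau_B=\tau_C=0$ (each eigenvalue lies in $\{0,\pm 1\}$). So $A$, $B$, $C$ are projected onto their $\tau=0$ eigenspaces. For step~\ref{item:Rest} I would induct: if a Posner $Y$ binds to a partner $Z$ already supported on $\tau_Z=0$, then $\Pi^{YZ}_0$ multiplied by $\id\otimes\Pi_{\tau_Z=0}$ leaves only the $k=0$ term, namely $\Pi_{\tau_Y=0}\otimes\Pi_{\tau_Z=0}$, so $Y$ is projected onto $\tau_Y=0$ while $Z$ stays at $\tau_Z=0$. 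Applying this to $D,E,\ldots,M$ in turn (each bound to one already-processed Posner) projects every Posner onto its $\tau=0$ eigenspace.

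The main obstacle is bookkeeping rather than computation: making rigorous that conditioning on the binding outcomes turns the sequence into a genuine product of projectors (this needs the commutativity of the charges on distinct Posners and the observation that $\PiStick$ is a sum of products of $\GC^A$- and $\GC^B$-spectral projectors), and being explicit that ``each Posner is projected onto $\tau=0$'' refers to the support of its reduced state. Once those points are in place, the $\mathbb{Z}/3$ step is a one-liner.
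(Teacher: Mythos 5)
Your proposal is correct and follows essentially the same route as the paper: each successful binding is the projector $\PiStick$ of Eq.~\eqref{eq:ProjTau2}, the three pairwise constraints on $(\tau_A,\tau_B,\tau_C)$ force $\tau_A=\tau_B=\tau_C=0$ (the paper gets this by expanding the product of projectors, Eqs.~\eqref{eq:Proj_prod1}--\eqref{eq:Proj_prod2}, with the frustrated-triangle picture of App.~\ref{section:Frust_int}, which is just your mod-3 congruence argument in different clothing), and the remaining Posners are handled by the same one-step induction in which the $\tau\neq 0$ term of Eq.~\eqref{eq:Proj_AD} annihilates the already-projected partner. Your explicit remarks on commutativity of the charge projectors and on conditioning on the ``yes'' outcome are harmless clarifications, not a different method.
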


\begin{proof}
First, we prove that steps~\ref{item:AB0}-\ref{item:CA0}
project $A$, $B$, and $C$ onto
their $\tau = 0$ subspaces.
Then, we address step~\ref{item:Rest}.

A projector of the form~\eqref{eq:ProjTau2}
represents each binding.
A product $\Pi_{123}$ of projectors represents
the sequence~\ref{item:AB0}-\ref{item:CA0} of bindings:
\begin{align}
   \label{eq:Proj_prod1}
   \Pi_{123}  & =
   \Big[  \left(  \Pi_{ \tau_A = 0 }  \otimes  \Pi_{ \tau_B  =  0  }  
             \otimes  \id^{ \otimes (m - 2) }  \right)
             \\ \nonumber & \qquad
            +  \left(  \Pi_{ \tau_A = \pm 1 }  \otimes  \Pi_{ \tau_B  =  \mp 1  }  
                \otimes  \id^{ \otimes (m - 2) }  \right)  \Big]
   \\ & \nonumber \quad \times
   \Big[  
   \left( \id  \otimes  \Pi_{ \tau_B  =  0 }  \otimes  \Pi_{ \tau_C  =  0 }
           \otimes  \id^{ \otimes ( m - 3 ) }  \right)
   \\ & \nonumber \qquad  \quad
   +   \left(  \id  \otimes  \Pi_{ \tau_B  =  \pm 1 }  \otimes  \Pi_{ \tau_C  =  \mp 1 }
                 \otimes  \id^{ \otimes ( m - 3 ) }  \right)
   \Big]
   \nonumber \\ \nonumber  & \quad \times
   \Big[  
   \left(  \Pi_{ \tau_A  =  0 }  \otimes  \id  \otimes  \Pi_{ \tau_C  =  0 }
            \otimes  \id^{ \otimes ( m - 3 ) }  \right)
   \\ \nonumber & \qquad  \quad
   +  \left(  \Pi_{ \tau_A  =  \pm 1 }  \otimes  \id  \otimes  \Pi_{ \tau_C  =  \mp 1 }
                \otimes  \id^{ \otimes ( m - 3 ) }  \right)
   \Big]
   \nonumber \\
   % % %
   \label{eq:Proj_prod2} 
   & =  \Pi_{ \tau_A = 0 }  \otimes   \Pi_{ \tau_B = 0 }  \otimes
   \Pi_{ \tau_C = 0 }  \otimes
   \id^{ \otimes ( m - 3 ) }   \, .
\end{align}
Equation~\eqref{eq:Proj_prod1} can be understood
in terms of a frustrated lattice,
as explained in App.~\ref{section:Frust_int}.

Step~\ref{item:Rest} of Proposition~\ref{proposition:Tau_zero_3} 
is proved as follows.
Suppose that Posners $A$ and $D$ bind together, then drift apart.
The joint state of $AD$ is acted on by
\begin{align}
   \label{eq:Proj_AD}
    \left(  \Pi_{ \tau_A = 0 }  \otimes  \Pi_{ \tau_D  =  0  }  \right)
    +  \left(  \Pi_{ \tau_A = \pm 1 }  \otimes  \Pi_{ \tau_D  =  \mp 1  }  \right)  \, .
\end{align}
The state of $A$ was projected onto the $\tau_A = 0$ subspace
during steps~\ref{item:AB0}-\ref{item:CA0}. 
Hence the final term in Eq.~\eqref{eq:Proj_AD}
annihilates the $AD$ state.
Hence $\Pi_{ \tau_D  =  0  }$ projects the state of $D$.
\end{proof}
\noindent
Proposition~\ref{proposition:Tau_zero_3} 
will provide a subroutine in the following section's protocol.

\subsection{Efficient preparation of Posner molecules in 
universal quantum-computation resource states}
\label{section:AKLT}

% References:
   % Email chain started on 8/11/17 -- title: "AKLT projection"
   % Email chain entitled -- "Superexchange"
   % 10-7-17 email, in chain ``Random graphs,'' from NYH

% Figure: Posner AKLT state -- lattice
\begin{figure}[tb] 
\centering
\includegraphics[width=.35\textwidth, clip=true]{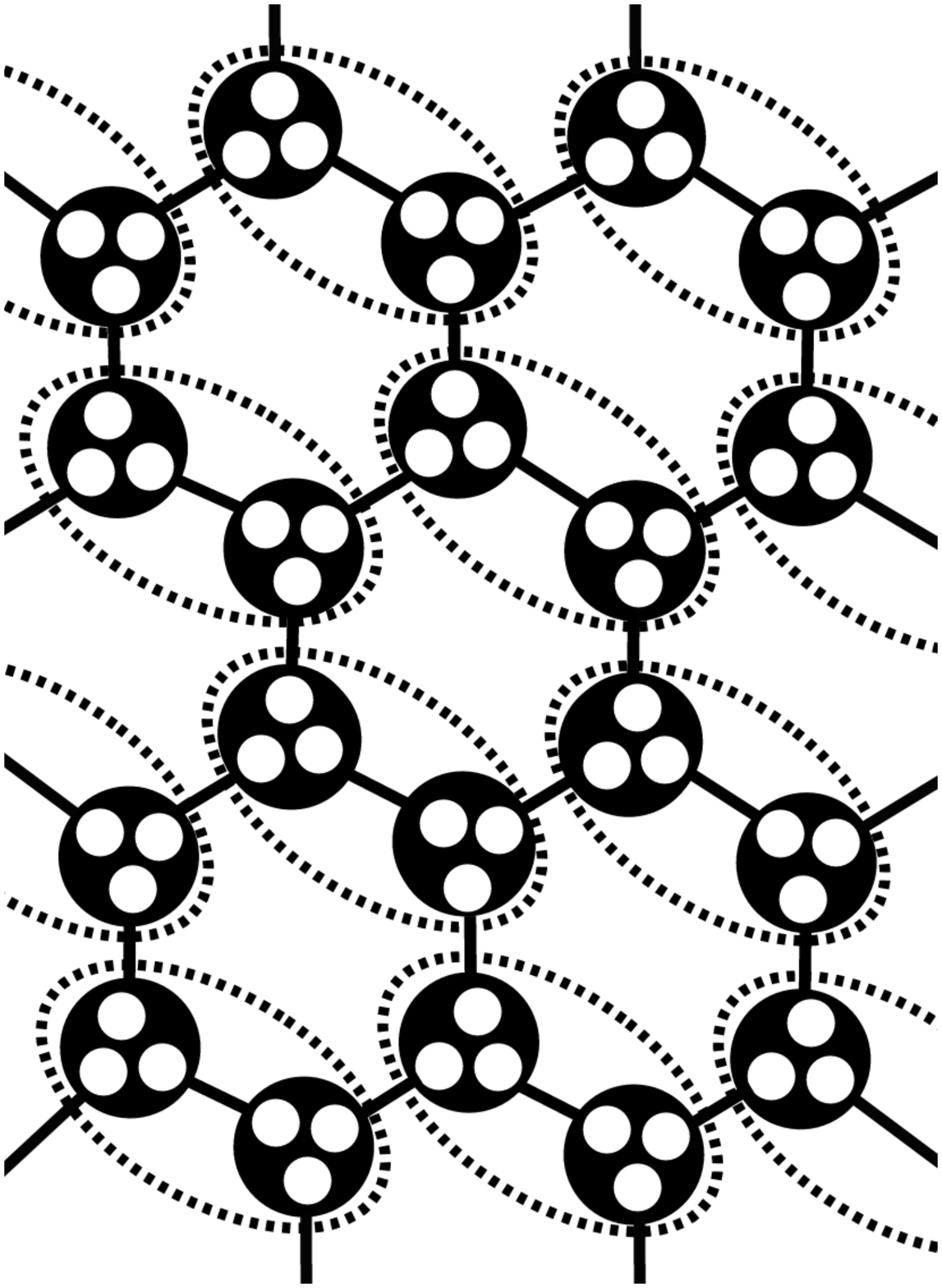}
\caption{\caphead{AKLT$'$ state:}
Spins on a honeycomb lattice 
can occupy the  Affleck-Lieb-Kennedy-Taski (AKLT) state
$\AKLTHon$~\cite{Affleck_87_Rigorous}.
$\AKLTHon$ serves as a universal resource in
measurement-based quantum computation (MBQC)~\cite{Miyake_11_Quantum,Wei_12_Two}.
So does the similar state $\AKLTPrime$,
which Posner operations can prepare efficiently.
Each dashed oval encloses 
the spins in a Posner molecule.
Each molecule consists of two trios of phosphorus nuclear spins.
Each large black dot represents a trio.
Each small white dot represents a spin.
Each thin black line connects the two spins in a singlet.
This figure resembles Fig.~3a of~\cite{Wei_12_Two},
as $\AKLTPrime$ resembles $\AKLTHon$.
This figure does not illustrate
the spatial arrangement of Posners
in $\AKLTPrime$.
Rather, the figure illustrates 
the entanglement in $\AKLTPrime$.
}
\label{fig:Posner_AKLT_Lattice}
\end{figure}

How complicated an entangled state can 
Posner operations (Sec.~\ref{section:Abstract_logic})
prepare efficiently?
Many measures quantify multipartite entanglement.
We study computational resourcefulness.
Posners operations, we show, can efficiently prepare a state 
that fuels universal MBQC:
By operating on the state locally,
one can efficiently simulate
a universal quantum computer.

The Posner state is a variation on 
an \emph{Affleck-Lieb-Kennedy-Tasaki (AKLT) state}.
AKLT first studied a one-dimensional (1D) chain
of spin-1 particles.
They constructed a nearest-neighbor antiferromagnetic Hamiltonian~\cite{Affleck_87_Rigorous,Affleck_88_Valence,Kennedy_88_Existence}.
The ground state, $\AKLTOne$, has a known form.
A constant gap, independent of the system size,
separates the lowest two energies.

$\AKLTOne$ has many applications in quantum computation~\cite{Gross_07_Novel,Gross_07_Measurement,Verstraete_04_Valence,Ostlund_95_Thermo,Fannes_92_Communications,Klumper_91_Equivalence,Verstraete_04_Renormalization,Brennen_08_Measurement,Miyake_10_Quantum,Bartlett_10_Quantum,Miyake_11_Quantum,Wei_12_Two}.
% The above citations appear in Wei_12_2D, on p. 1, on the RHS.
For example, $\AKLTOne$ was the first state recognized
as a matrix product state (MPS)~\cite{Ostlund_95_Thermo,Fannes_92_Communications,Klumper_91_Equivalence}.
% The above citations appear in Wei_12_2D, on p. 1, on the RHS.
MPSs can efficiently be represented approximately
by classical computers. 
Also, using $\AKLTOne$, one can simulate
arbitrary single-qubit rotations.
One performs local operations,
including adaptive single-qubit measurements,\footnote{
% < f >
Measurements are \emph{adaptive} if 
earlier measurements' outcomes dictate
which measurements are performed later.}
% < /f >
on the state~\cite{Gross_07_Novel,Gross_07_Measurement,Brennen_08_Measurement}.

Two-dimensional (2D) analogs of $\AKLTOne$ have been defined.
Spin-$\frac{3}{2}$ particles
on a honeycomb lattice
can occupy the state $\AKLTHon$~\cite{Affleck_87_Rigorous}.
Local operations on $\AKLTHon$ can efficiently simulate
universal quantum computation~\cite{Miyake_11_Quantum,Wei_12_Two}.
We will draw on the proof by Wei \emph{et al.}~\cite{Wei_12_Two}.
Reference~\cite{Miyake_11_Quantum} contains results related to
the results in~\cite{Wei_12_Two}.

Wei \emph{et al.} prove the universality of $\AKLTHon$ as follows.
Local POVMs, they show,
reduce $\AKLTHon$ to 
an encoded 2D graph state $\ket{ \overline{ \Graph ( \mathcal{A} ) } }$.\footnote{
% < f >
A \emph{graph state} is defined in terms of a graph $\Graph$.
Each vertex corresponds to a spin.
Consider the Hamiltonian
\begin{align}
   H_\Graph  =  \sum_{i  \in  \Graph}  
   \left(  \sigma_i^x  
   \bigotimes_{k  \in  \text{NB}(i) }  \sigma_k^z  \right)  \, .
\end{align}
$i$ indexes the vertices in $\Graph$.
The nearest neighbors of $i$ are indexed by
$k \in  \text{NB}(i)$.
$H_\Graph$ has a unique ground state,
called a \emph{graph state}~\cite{Briegel_01_Persistent,Hein_06_Entanglement}.}
% The second paper is a review.
% < /f >
The graph $\Graph$ is random,
depending on the set $\mathcal{A}$ of measurement outcomes.
Also the encoding depends on $\mathcal{A}$.
The overline in $\ket{ \overline{ \Graph ( \mathcal{A} ) } }$
represents the encoding.
Wei \emph{et al.} prescribe local measurements of a few qubits.
The measurements convert
$\ket{ \overline{ \Graph ( \mathcal{A} ) } }$ into 
a cluster state on a 2D square lattice
(if $\mathcal{A}$ is a typical set).\footnote{
% < f >
A \emph{cluster state} is a graph state
associated with a regular lattice $\Graph$~\cite{Briegel_01_Persistent,Raussendorf_01_One,Hein_06_Entanglement}.}
% QI Meets QM Ñ p. 77 of book, p. 93 of PDF
% < /f >
Such cluster states serve as resources in universal MBQC~\cite{Briegel_01_Persistent,Raussendorf_03_Measurement,VandenNest_06_Universal,Miyake_11_Quantum}: 
% Source of the above references: Wei_12_Two Ñ p. 6 Ñ bottom LHS
By measuring single qubits adaptively,
one can efficiently simulate a universal quantum computer.

We introduce a variation on $\AKLTHon$.
We call the variation the \emph{AKLT$'$ state}
and denote the state by $\AKLTPrime$.
Figure~\ref{fig:Posner_AKLT_Lattice} illustrates the state.
$\AKLTPrime$ is prepared similarly to $\AKLTHon$,
resembles $\AKLTHon$ locally,
and fuels universal MBQC similarly.

This section is organized as follows.
Section~\ref{section:Intro_AKLT} reviews
the set-up and the state construction of Wei \emph{et al.}
% Appendix~\ref{section:AKLT_MBQC} reviews
% how Wei \emph{et al.} apply the state in MBQC.
$\AKLTPrime$ is defined in Sec.~\ref{section:Posner_AKLT}.
How to construct $\AKLTPrime$ efficiently
from phosphorus nuclear spins, 
using Posner operations, is detailed.
Section~\ref{section:Bob_AKLT} describes
the reduction of $\AKLTPrime$ to a 2D cluster state,
known to fuel universal MBQC.
The protocol is analyzed in Sec.~\ref{section:Analyze_AKLT_prep}.

$\AKLTPrime$ holds interest not only as a computational resource,
but also in its own right.
The state is analyzed in Sec.~\ref{section:AKLT_analysis}.
For instance, AKLT$'$ is shown to be a PEPS.

% In Sec.~\ref{section:Q_suprem}, we argue that $\AKLTPrime$ can be used to achieve 
% \emph{practical quantum supremacy}~% \cite{Preskill_12_Quantum,Boixo_16_Characterizing}:
% A quantum computer can prepare $\AKLTPrime$ efficiently (quickly).
% But certain measurements of $\AKLTPrime$
% (less taxing than the measurements required to implement MBQC)
% evade today's best classical simulations.

%
%
%
\subsubsection{Set-up by Wei \emph{et al.}}
\label{section:Intro_AKLT}

Wei \emph{et al.} consider a 2D honeycomb lattice,
illustrated in Fig.~3a of~\cite{Wei_12_Two}.
(Figure~\ref{fig:Posner_AKLT_Lattice} has nearly the same form.)
A black dot represents each site.
At each site sit three spin-$\frac{1}{2}$ DOFs.
White dots represent these DOFs, called \emph{virtual spins}.

Let $s_{123}$ and $m_{123}$ denote a site's 
total spin quantum number and
total magnetic spin quantum number.
These numbers can assume the values
$(s_{123},  m_{123} )  =  ( \frac{1}{2} ,  \pm  \frac{1}{2} ),
( \frac{3}{2} ,  \pm  \frac{ 3 }{ 2 } )$,  and
$( \frac{3}{2} ,  \pm  \frac{ 1 }{ 2 } )$.
The qubit trio can behave as one \emph{physical spin}
of $s_{123} = \frac{1}{2}$ or $\frac{3}{2}$.

$\AKLTHon$ may be prepared as follows~\cite{Affleck_87_Rigorous}:
\begin{enumerate}[leftmargin=*]

   \item
   Consider two nearest-neighbor sites.
   Choose a virtual spin in each site.
Form a singlet $\ket{ \Psi^- }$ between these spins.
% (Each black line in Fig.~3a of~\cite{Wei_12_Two}
% represents a singlet.)
Perform this process on every pair of nearest-neighbor sites.

   \item  \label{item:Wei_Proj}
   Project each physical spin (each site) 
   onto its $s_{123}  =  \frac{3}{2}$ subspace.
   
\end{enumerate}
\noindent
$\AKLTHon$ is \emph{trivalent}:
Each site links, via singlets, to
three other sites.

\subsubsection{Preparing Posner molecules in $\AKLTPrime$}
\label{section:Posner_AKLT}

% Figure: Posner AKLT state -- circuit
\begin{figure*}[tb]
\centering
\includegraphics[width=.99\textwidth, clip=true]{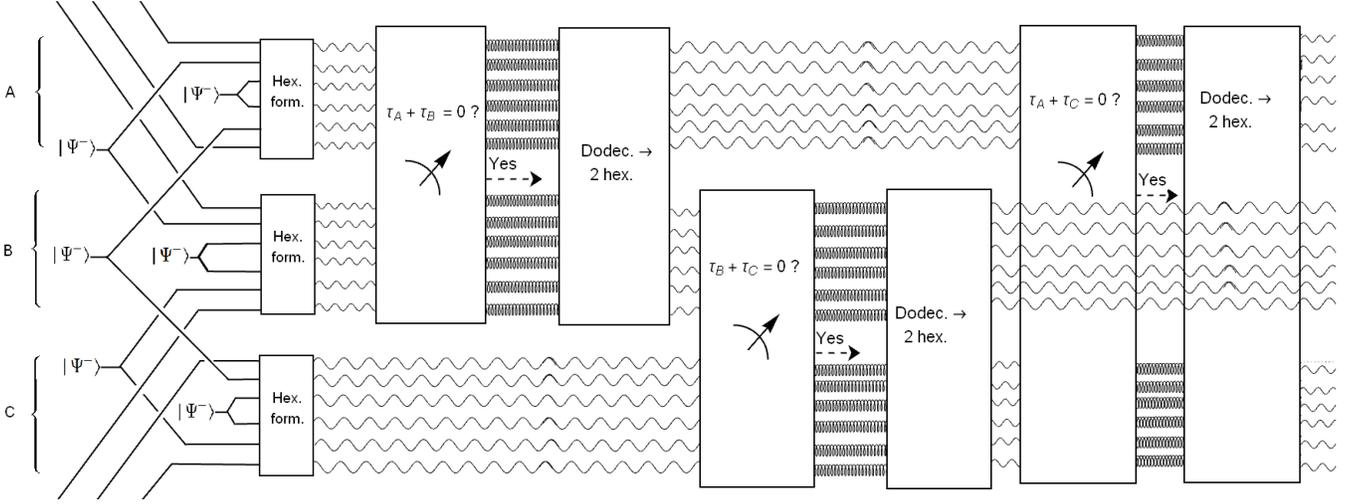}
\caption{\caphead{Part of a circuit that efficiently prepares  
the AKLT$'$ state $\AKLTPrime$:}
The circuit elements represent Posner operations,
as explained in Sec.~\ref{section:Abstract2}.
Some thin black lines extend off the diagram.
These lines represent singlets that terminate in Posners not drawn here.
}
\label{fig:Posner_AKLT_Circuit}
\end{figure*}

Posner operations (Sec.~\ref{section:Abstract_logic})
can nearly prepare $\AKLTHon$.
Whether Posner operations can project trios
onto their $s_{123} = \frac{3}{2}$ subspaces
remains unknown.
But Posner operations can project onto
a molecule's $\tau = 0$ subspace. 

The $\tau = 0$ subspace decomposes into a direct sum
of tensor products 
of two three-qubit subspaces.
The first three-qubit subspace is labeled
by $s_{123}$, the total spin quantum number of 
the qubit triangle at $\zinter = h_+$.
The second three-qubit subspace is labeled
by $s_{456}$.
The $\tau = 0$ subspace has the form
$\left( \frac{3}{2}  \otimes  \frac{3}{2}  \right)  \oplus
\left(  \frac{1}{2}  \otimes  \frac{1}{2}  \right)^{ \oplus 2 }$.
(See Appendix~\ref{section:Group_thry} 
and~\ref{section:C_eigenspaces}
for a derivation.
See~\cite{Shankar94} for background and notation.)
The first term represents
the space $s_{123} \otimes s_{456}  =  \frac{3}{2}  \otimes  \frac{3}{2}$
of two spin-$\frac{3}{2}$ particles.
Wei \emph{et al.}  project onto this space,
in step~\ref{item:Wei_Proj}.

Projecting onto the larger $\tau = 0$ space
yields $\AKLTPrime$.
We now detail how Posners can come to occupy $\AKLTPrime$.
The steps are explained in physical terms
(of molecules, binding, etc.).
% emphasizing that molecules can be prepared in $\AKLTPrime$.
Figure~\ref{fig:Posner_AKLT_Circuit} recasts the protocol
in computational terms, as a quantum circuit:
\begin{enumerate}[leftmargin=*]

   \item
   Pyrophosphatase enzymes
   cleave some number $\Sites$ of diphosphates.
   $\Sites$ singlets $\ket{ \Psi^- }$ are prepared
   (via operation~\ref{item:Bell}).
   
   \item
   The phosphates group together in trios.
   Singlets connect the trios as 
   thin black lines connect
   the white dots in Fig.~\ref{fig:Posner_AKLT_Lattice}.
   
   \item  \label{item:Proj_AKLTPrime}
   Each trio, with a nearest-neighbor trio, 
   forms a Posner molecule
   (via operation~\ref{item:Form_Pos}).
    
    \item \label{item:RefreshPos}
    Posners approach each other 
    with the prebinding orientation (see operation~\ref{item:Meas}),
    then drift apart,
    as described in Proposition~\ref{proposition:Tau_zero_3}.
    For simplicity, we focus on the case in which 
    each Posner $P$ approaches only Posners $P'$
    that are nearest neighbors of $P$ in the hexagonal lattice
    (Fig.~\ref{fig:Posner_AKLT_Lattice}).
    But this assumption is unnecessary.
    
    Suppose that each approach leads to binding.
    Proposition~\ref{proposition:Tau_zero_3} is realized.
    Every Posner's state is projected onto the
    $\tau = 0$ subspace.
    
    But two approaching Posners might fail to bind.
    The success probability\footnote{
    % < f >
    \label{footnote:Calc_prob}
    This probability is calculated as follows.
    The $\Sites$ Posners occupy some pure state $\ket{ \psi }$. 
    Consider the two Posners' joint reduced state, $\rho_{P P'}$. 
    The Posners share one singlet.
    The Posner pair contains ten other phosphorus nuclear spins.
    Let $a$ denote an arbitrary one of these other spins.
    $a$ forms a singlet with a spin in some other Posner, $P''$.
    $P''$ is traced out from $\ket{ \psi }$
    in this calculation of $\rho_{P P'}$.
    Hence $\rho_{P P'}$ equals a tensor product of
    ten maximally mixed qubit states 
    $\frac{ \id_2 }{ 2 }$ and $\ket{ \Psi^- }$:
    $\rho_{P P'}  =  \left( \frac{ \id_2 }{ 2 }  \right)^{ \otimes 5 }
    \otimes  \ketbra{ \Psi^- }{ \Psi^- }  \otimes
    \left( \frac{ \id_2 }{ 2 }  \right)^{ \otimes 5 }$.
    Posners $P$ and $P'$ have a probability
    $\approx \Tr \left(  \Pi_{ P P' }  \rho_{P P'}  \right)  
    =   \frac{ 43 }{128 }  \approx 0.336$
    of binding.
    % Reference for the above number: email chain "Checking AKLT'/MBQC" -- message from EC -- second message in the chain -- sent 1/30/18
    [$\Pi_{ P P' }$ is defined as in Eq.~\eqref{eq:ProjTau2}.]
    This approximation does not account for
    all correlations amongst sites~\cite{Wei_12_Two}
    but is expected to capture the greatest contribution to the probability.
    Exact calculations are left as an opportunity for future study.}
    % < /f >
    $\frac{ 43 }{128 }  \approx 0.336$.
    % Reference for the above number: email chain "Checking AKLT'/MBQC" -- message from EC -- second message in the chain -- sent 1/30/18
    Suppose that Posners $P$ and $P'$ fail to bind.
    Suppose that $P$ has already been projected onto
    its $\tau_P = 0$ subspace.
    $P'$ can be ``refreshed'':
    Let $P'$ drift into a region of
    high pH and/or high $\Mg\Two$ concentration.
    $P'$ likely hydrolyzes (undergoes operation~\ref{item:pH}).
    Two of the phosphorus nuclear spins used to form 
    a singlet internal to the Posner.
    These spins form a singlet no longer,
    due to the binding failure. 
    % Reference: email chain "Revision update" -- see emails sent on 3/4/18 -- 8:36 AM and 9:04 AM
    These two spins can drift away; a fresh singlet can replace them.
    Four other phosphorus nuclear spins remain.
    They continue to form singlets with spins in other Posners.\footnote{
    % Reference for the line above: email chain "Checking AKLT'/MBQC" -- message from EC -- second message in the chain -- sent 1/30/18
    % < f >
    \label{footnote:Reduced_state}
    This claim can be checked via direct calculation.
    Computational resources limited our calculation to
    the reduced state of 13 spins.
    Whether longer-range correlations affect the results
    is left for future study.
    See footnote~\ref{footnote:Calc_prob}.}
    % Reference: GChat from EC -- sent on 10/17/17, early in the afternoon
    % email from EC -- sent late on 10/17/17 -- in chain "optimizing angles"
    % < /f >
    % (One spin remains in a singlet with a $P$ spin.)
    The two new, and four old, phosphates can form a Posner $\tilde{P}'$,
    via operation~\ref{item:Form_Pos}.
    
    $\tilde{P}'$ occupies the state that $P$ occupied
    before the binding failure.
    $\tilde{P}'$ can approach $P$ 
    with the prebinding orientation.
    If the binding fails, $\tilde{P}'$ can be refreshed again.

\end{enumerate}

\subsubsection{Reduction of $\AKLTPrime$ to a cluster state
known to fuel universal MBQC}
\label{section:Bob_AKLT}

Local operations can reduce $\AKLTHon$
to a cluster state on a 2D square lattice~\cite{Wei_12_Two}.
Such cluster states serve as
universal resources in MBQC~\cite{Briegel_01_Persistent,Raussendorf_03_Measurement,VandenNest_06_Universal}.
% Source of the above references: Wei_12_Two Ñ p. 6 Ñ bottom LHS
Section~\ref{section:Wei_to_cluster} reviews 
the reduction in~\cite{Wei_12_Two}.
Section~\ref{section:Toward_cluster} explains
the need to deviate from this reduction.
Section~\ref{section:Cluster} details the deviation.

\paragraph{Model: Reduction of $\AKLTHon$ to a cluster state:}
\label{section:Wei_to_cluster}
Wei \emph{et al.} prescribe two steps.
First, each site is measured with a POVM.
The measurements yield a state equivalent,
under local unitaries, to a random graph state.
% Reference for "under local unitaries": Wei_12_Two -- p. 7-8
Second, a few qubits are measured with local POVMs.

Let us detail the initial measurements.
Site $v$ is measured with the POVM
\begin{align}
   \label{eq:Fs}
   & \Bigg\{ F_{v, x}  =  
   \sqrt{ \frac{2}{3} }  ( \ketbra{ \pm \pm \pm }{ \pm \pm \pm } )  \, ,  
   \\ \nonumber & \quad
   F_{v, y}  =   \sqrt{ \frac{2}{3} }  
   ( \ketbra{ i, i, i }{ i, i, i }  +  \ketbra{ -i, -i, -i }{ -i, -i, -i } )  \, , 
   \\ \nonumber & \quad
   F_{v, z}  =  \sqrt{ \frac{2}{3} }  
   ( \ketbra{ 0 0 0 }{ 0 0 0 }  +  \ketbra{ 1 1 1 }{ 1 1 1 } )
   \Bigg\}  \, .
\end{align}
The $F_{v, z}$ projects onto the subspace
spanned by the $S^{ \zlab }_{123}$ eigenstates associated with
the magnetic spin quantum numbers $m_{123}  =  \pm  \frac{3}{2}$.
$F_{v, \alpha}$ projects onto
the subspace spanned by 
the analogous $S^{ \alpha_\lab }_{123}$ eigenstates,
for $\alpha = x, y$.

Each subspace has dimensionality two.
Hence the POVM reduces each site's Hilbert space
to a qubit space.
The $\sqrt{ \frac{2}{3} }$ leads to 
the completeness relation 
$\sum_{ \alpha = x , y , z }  F_{v, \alpha}^\dag  F_{v , \alpha}$.

$\mathcal{A}$ denotes
the set of POVM outcomes.
The POVMs yield a state $\ket{ \overline{ \Graph ( \mathcal{A} ) } }$.
$\Graph ( \mathcal{A} )$ denotes a random graph
whose form depends on $\mathcal{A}$.
$\ket{ \Graph ( \mathcal{A} ) }$ denotes 
the graph state associated with $\Graph ( \mathcal{A} )$.
The overline denotes an encoding dependent on $\mathcal{A}$.
The system occupies a state equivalent,
via the encoding,
to $\ket{ \Graph ( \mathcal{A} ) }$.

A random graph $\Graph ( \mathcal{A} )$
defines $\ket{ \Graph ( \mathcal{A} ) }$.
In contrast, a regular graph
defines a cluster state.
The cluster state on a 2D square lattice
serves as a universal resource in MBQC.
This cluster state can be distilled 
from $\ket{ \Graph ( \mathcal{A} ) }$,
if $\mathcal{A}$ is typical.
The distillation consists of 
a few single-qubit Pauli measurements~\cite{Wei_12_Two}.

\paragraph{Toward a reduction of $\AKLTPrime$ to a cluster state:}
\label{section:Toward_cluster}
The Wei \emph{et al.} system differs from
the Posner system in two ways.
First, Posner operations cannot necessarily simulate
(i) the POVM~\eqref{eq:Fs} or 
(ii) the Pauli measurements
in Sec.~\ref{section:Wei_to_cluster}.
Whether Posner operations can
remains an open question.
% Holdover footnote from old draft:
% See email sent to Elizabeth on 8/30/17: "1-qubit measurements in MBQC." 
% Might be able to implement he rest of Wei's prescription, including cleaning up the graph state into a cluster state.
Second, Wei \emph{et al.} invoke $\AKLTHon$.
Posner operations can prepare $\AKLTPrime$.

Posners therefore require a step absent from~\cite{Wei_12_Two}.
To facilitate the explanation,
we invoke the agent framework of QI
(Sec.~\ref{section:Control}).
Different experimentalists can perform different operations easily.
An agent Alice might run a biochemistry lab.
She might be able to effect Posner operations.
An agent Bob might be able to perform local POVMs
but not to create and arrange singlets.\footnote{
% < f >
How Bob performs the POVMs % is extraneous to the proof 
and falls outside this paper's scope. 
This section's purpose is to demonstrate that
Posner operations can prepare a state 
that can fuel universal MBQC.
We have shown how to prepare the state.
Now, we need prove only that AKLT$'$
can fuel universal MBQC in principle.
Specifying the procedure in terms of POVMs,
classical communication, etc. suffices.
We leave experimental implementations as an opportunity for further study,
which might facilitate the use of Posners
in engineered QI processing.}
% < /f >

Together, Alice and Bob could produce cluster states.
Alice would create $\AKLTPrime$ 
and pass the state to Bob.
Bob would perform local POVMs.
(He might ask Alice to refresh a few Posners.)
Together, the agents would form cluster states
that fuel universal MBQC.

\paragraph{Reduction of $\AKLTPrime$ to a cluster state:}
\label{section:Cluster}
Bob will perform the protocol in Sec.~\ref{section:Wei_to_cluster}.
But first, he measures each Posner's
$\mathbf{S}_{123}^2  \otimes  \mathbf{S}_{456}^2$.
Suppose that Posner $P$ yields the outcome
labeled by $\frac{3}{2}$
[yields the outcome $\hbar^2  \left(  \frac{3}{2}  \right)^2  \times 2
=  \frac{9}{2}  \,  \hbar^2$].
The measurement has succeeded.

Now, suppose that Posner $P$ yields the outcome
labeled by $\frac{1}{2}$.
The measurement has failed.
Bob returns $P$ to Alice.
Alice hydrolyzes $P$ via operation~\ref{item:pH}.
She refreshes the internal singlet,
% Reference: email chain "Revision update" -- see emails sent on 3/4/18 -- 8:36 AM and 9:04 AM
as in step~\ref{item:RefreshPos} in Sec.~\ref{section:Posner_AKLT}.
Let $\tilde{P}$ denote the refreshed Posner.\footnote{
% < f >
$P$ contained four spins apart from the internal singlet.
Each of these spins remains in a singlet
with a spin in another Posner.
% Newest reference for the line above: email chain "Checking AKLT'/MBQC" -- message from EC -- second message in the chain -- sent 1/30/18
% New reference: Meeting notes --> "GChat - errors in code - 11-27-17"
% Old reference: email chain "optimizing angles" -- message sent by EC on Nov. 2, 2017, at 4:56 PM
See the calculational comments in footnote~\ref{footnote:Reduced_state}.}
% < /f >
Bob measures the 
$\mathbf{S}_{123}^2  \otimes  \mathbf{S}_{456}^2$
of $\tilde{P}$.
He and Alice ``repeat until success''
(until obtaining the $\frac{3}{2}$ outcome).

Bob holds an $\AKLTHon$ state.
He now follows the prescription
of Wei \emph{et al.} (Sec.~\ref{section:Wei_to_cluster}).

\subsubsection{Analysis of $\AKLTPrime$ preparation}
\label{section:Analyze_AKLT_prep}

Posner operations, we have shown, 
can prepare $\AKLTPrime$ efficiently.
A few local measurements reduce
$\AKLTPrime$ to a cluster state on a 2D square lattice.
This cluster state can be used directly 
in universal MBQC.
Posners' universality is remarkable:
Most quantum states cannot power universal MBQC~\cite{Gross_09_Most}.
% Suppose that Posners could form only, say, 1D entangled networks.
% They might be used to simulate single-qubit unitaries via MBQC,
% like $\AKLTOne$.
% But $\AKLTOne$ is not a universal resource.
The Posners' singlets and their geometry
(the decomposition of Posners into triangles,
and the triangles' trivalence),
underlie the state's universality.\footnote{
% < f >
$\AKLTPrime$ is not the simplest universal Posner resource state.
Singlets on a trivalent lattice would suffice.
The $\tau = 0$ projections are unnecessary,
by the second equality in Eq.~(31) of~\cite{Wei_12_Two}.
Yet $\AKLTPrime$ merits defining, for three reasons.
First, preparing $\AKLTPrime$ is natural:
In $\AKLTHon$, pairs of sites are projected onto
their $s_{123} \otimes s_{456} 
= \frac{3}{2}  \otimes  \frac{3}{2}$ subspaces.
In $\AKLTPrime$, pairs of sites are projected onto
slightly larger subspaces.
Hence $\AKLTPrime$ resembles $\AKLTHon$ locally.
Second, suppose that Alice did not project the Posners onto
their $\tau = 0$ subspaces.
Bob would obtain more ``error'' outcomes, labeled by $\frac{1}{2}$'s,
in Sec.~\ref{section:Cluster}.
Third, $\AKLTPrime$ holds interest in its own right 
(Sec.~\ref{section:AKLT_analysis}).}
% References: (1) Meeting notes --> ``Crosson GChat 8-30-17.''
% (2) 9/3/17 Google hangout (not recorded)
% (3) Email chain "1-qubit measurements in MBQC" -- emails sent on 9/3/17
% < /f >

Opportunities for enhancing and simplifying our protocol exist:
\begin{enumerate}[leftmargin=*]

   \item 
   A precise structure---a honeycomb lattice---underlies $\AKLTPrime$.
In contrast, biomolecules drift randomly.
Biological singlets might not form 
honeycombs of their own accord.
Random graphs might likely arise.\footnote{
% < f >
\label{footnote:Hydroxy_hex}
Onuma and Ito offer hope that
Posners form regular lattices, however.
Hydroxyapatite is the mineral 
$\Ca_5 ( {\rm PO}_4 )_3  ( {\rm OH} )$.
A variation on hydroxyapatite forms much of bone.
Onuma and Ito proposed that hydroxyapatite crystals grow as
Posners form hexagonal lattices~\cite{Onuma98Posner}.}
% Reference: Onuma98Posner -- appendix, Fig. 5a
% < /f >
Such graphs underlie states that might power MBQC.
% Reference: Elizabeth (email chain "AKLT projection," 8/18/17)

Such graphs might have two or three dimensions.
Three-dimensional (3D) graph states
offer particular promise.
First, they have substantial connectivity,
% Reference: Wei skype chat on 11/2/17
needed for universality~\cite{Wei_12_Two}.
Second, 3D cluster states fuel fault-tolerant universal MBQC.
The scheme relies on 
toplogical quantum error correction~\cite{Raussendorf_06_Fault}.
% Additionally, the threshold for qubit loss is high. Reference: Wei skype chat on 11/2/17

% Figure: Square lattice atop hexagonal
\begin{figure}[tb]
\centering
\includegraphics[width=.37\textwidth, clip=true]{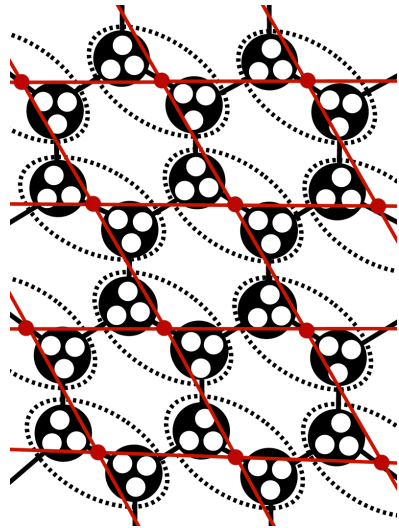}
\caption{\caphead{Coarse-graining the hexagonal lattice 
into a square lattice:}
The black lines form a hexagonal lattice
(see Fig.~\ref{fig:Posner_AKLT_Lattice}).
Three qubits (small white dots) occupy each site (large black dot).
Two neighboring sites form a Posner molecule
(encircled with a dashed hoop).
The lattice can be coarse-grained:
Each Posner's two sites can be
lumped together (into a red dot).
The coarse-grained lattice is square 
(as shown by the long, red lines).
This coarse-graining might facilitate a universality protocol
simpler than the one in Sec.~\ref{section:Cluster}.}
\label{fig:Hex_square}
\end{figure}
   \item
   Bob might avoid returning Posners to Alice.
The Posners' triangles (Fig.~\ref{fig:Geo}) form 
the sites in a hexagonal lattice (Fig.~\ref{fig:Posner_AKLT_Lattice}).
Consider coarse-graining two sites into one.
Triangle pairs are coarse-grained into Posners.
Each Posner forms a site in a square lattice
(Fig.~\ref{fig:Hex_square}).

Imagine Bob measuring the
$\mathbf{S}_{123}^2  \otimes  \mathbf{S}_{456}^2$
of a Posner $P$.
Suppose that the ``error'' outcome $\frac{1}{2}$ obtains.
Bob might discard $P$.
Alternatively, he might measure the 
$S^{ \zlab }_{1 \ldots 6}$ of $P$.
The measurement would ``terminate the lattice,''
forming a boundary.

% Reference about the 2/3 below:
% Meeting notes --> "GChat - errors in code - 11-27-17"
On average, $\frac{2}{3}$ of the Posners
yield the ``good'' outcome (the $\frac{3}{2}$ outcome).\footnote{
% < f >
This probability is estimated via 
the technique described in footnote~\ref{footnote:Calc_prob}.}
% < /f >
The 2D square lattice has a site-percolation threshold
of $p_* \approx  0.59$.\footnote{
% Reference about percolation threshold: https://en.wikipedia.org/wiki/Percolation_threshold Ñ> search for ÒsquareÓ Ñ see table
\emph{Site percolation} is a topic in graph theory and 
statistical mechanics.
Let $\Graph$ denote a graph of $\Sites$ sites.
Consider deleting each site $v$
with probability $1 - p$.
If $v$ is deleted, so are the edges that terminate on $v$.
Let $\Graph'$ denote the remaining graph.
Does a path of edges traverse $\Graph'$ from top to bottom?
If so, $\Graph'$ \emph{percolates}.
$p_*$ denotes the \emph{percolation threshold}.
If $p \geq p_*$, $\Graph'$ percolates
in the limit as $\Sites \to \infty$.
$\Graph'$ does not if $p < p_*$.
A phase transition occurs at $p = p_*$.}
% < /f >
Hence Bob's site-deletion probability exceeds the threshold:
$p  \approx  \frac{2}{3} >  0.59  \approx  p_*$.
A large, richly connected component
spans Bob's graph.
Such components underlie universality~\cite{Wei_12_Two}.

Bob returns to regarding triangles, rather than Posners, as vertices.
The lattice looks hexagonal but contains holes.
A large connected component spans also this graph.
Hence the Wei \emph{et al.} prescription
(Sec.~\ref{section:Wei_to_cluster}),
or a related prescription,
appears likely to transform the state into 
a universal cluster state.

To check, one might refer to~\cite{Browne_08_Phase,Wei_14_Hybrid,Wei_15_Universal}.
The authors consider faulty lattices:
Sites might be deleted, as by measurement.

% N.B. Percolation of a graph state doesnÕt suffice for MBQC universality.
% References: Browne_09_Phase; Wei_14_Hybrid Ñ p. 3 Ñ bottom RHS
% ÒFor universal quantum computation on graph states, the graph needs to possess a macroscopic number of traversing paths. This means that the random graphs generated by our procedure must reside within the supercritical phase. Sitting right at the percolation transition is not sufficient for universal MBQC. We have checked that for a size L large enough, $p_{\rm span}$ is unity for random graphs resulting from the procedure on the AKLT states in Figs. 1(a)-1(c). But checking just this is not sufficient for establishing universality, as it does not rule out the possibility that the graphs are sitting at the transition point or that the graphs may be close to 1D-like structures or star graphs (both being nonnuniversal for MBQC). [Etc.]Ó

   %
   \item
   % One reference: GChat on 10/24/17
   Universal quantum computation is unnecessary
   for achieving quantum supremacy~\cite{Preskill_12_Quantum}.
   % Posners might be used to achieve supremacy
   % via a simpler protocol:
   Suppose that $\AKLTPrime$ has been converted into
   a cluster state on a 2D square lattice.
   Consider measuring single qubits nonadaptively.
   A random distribution $\mathcal{P}$ is sampled.
   Classical computers are expected not
   to be able to sample from $\mathcal{P}$ efficiently~\cite{Farhi_16_Quantum}.
   
   % Holdover from earlier draft: Preparing singlets, then measuring single qubits nonadaptively, does not achieve supremacy. (To be checked by Elizabeth.)

\end{enumerate}
\noindent
Aside from these opportunities,
the calculational technique in footnote~\ref{footnote:Calc_prob}
may be rendered more precise.

\subsubsection{Analysis of the AKLT$'$ state}
\label{section:AKLT_analysis}

% A reference: Meeting notes $\to$ ``Crosson GChat 8-30-17"

% Figure: PEPS
\begin{figure}[tb]
\centering
\includegraphics[width=.45\textwidth, clip=true]{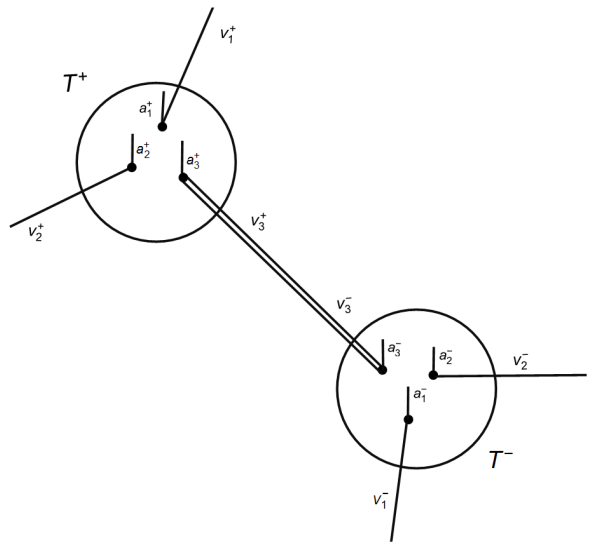}
\caption{\caphead{The AKLT$'$ state $\AKLTPrime$ as a 
projected entangled-pair state (PEPS):}
The two tensors, $T^+$ and $T^-$, 
are repeated to form the PEPS.
$T^+$ represents the state of one triangle in a Posner
(Fig.~\ref{fig:Geo});
$T'$ represents the other triangle's state.
Each tensor has three physical qubits,
labeled $a^+_j$ or $a^-_j$, wherein $j = 1 , 2 , 3$.
Each tensor has three virtual legs,
labeled $v^+_j$ or $v^-_j$,
wherein $j = 1, 2, 3$.
Each of $v^+_1$ and $v^+_2$,
and each of $v^-_1$ and $v^-_2$,
has bond dimension two.
$v^+_3$ has bond dimension six,
as does $v^-_3$.
An implicit Kronecker delta $\delta_{ v^+_3  \,  v^-_3 }$
constrains the virtual indices.}
\label{fig:PEPS}
\end{figure}

MBQC motivated the definition of $\AKLTPrime$.
Yet $\AKLTPrime$ holds interest in its own right.
$\AKLTPrime$ resembles the AKLT state $\AKLTHon$
on a honeycomb lattice.
AKLT states have remarkable properties.
We discuss analogous properties,
and opportunities to seek more analogous properties,
of $\AKLTPrime$.

% Reference for the following two sentences: Wei_12_Two -- p. 1, RHS
First, classical resources can compactly represent AKLT states approximately.
The 1D AKLT state $\AKLTOne$ is an MPS~\cite{Ostlund_95_Thermo,Fannes_92_Communications,Klumper_91_Equivalence}.
$\AKLTHon$ is a PEPS~\cite{Verstraete_04_Valence,Verstraete_04_Renormalization,Nishino_01_Two}. 
$\AKLTPrime$ is a PEPS, illustrated in Fig.~\ref{fig:PEPS}
and detailed in App.~\ref{section:PEPS}.

Hence $\AKLTPrime$ is the ground state of
some local, frustration-free Hamiltonian $H_{\AKLT'}$~\cite{Perez_07_PEPS}.
% Reference for the above line: Perez_07_PEPS -- p. 2, RHS
% Nearest-neighbor? See Wei_12_Two -- p. 1, RHS.
The ground state is unique~\cite{Molnar_17_Generalization}.
% Reference about uniqueness: email chain "PEPS" -- message sent on 11/7/17, at 2:32 PM; etc.
The relationship between $H_{\AKLT'}$ and
the Posner Hamiltonian $H_\Posner$ merits study.
So does whether $H_{\AKLT'}$ has a constant-size gap~\cite{Affleck_87_Rigorous,Affleck_88_Valence}.
% Motivation for the above references: Wei_15_Universal -- p. 1, RHS
If $H_{\AKLT'}$ has,
$\AKLTPrime$ can be prepared efficiently via cooling.

Third, $\AKLTPrime$ results from deforming $\AKLTHon$.
AKLT states have been deformed 
via another strategy~\cite{Niggemann_97_Quantum,Verstraete_04_Diverging,Darmawan_12_Measurement,Wei_15_Universal}:
% Source of the above references: Wei_15_Universal -- p. 11 -- LHS
Let $H_\AKLT$ denote the Hamiltonian
whose ground state is the AKLT state of interest.
$H_\AKLT$ is transformed with
a deformation operator $\mathcal{D}(a)$~\cite{Wei_15_Universal}.
The parameter $a$ is tuned, changing the ground state.

$\AKLTPrime$ follows from a different deformation.
We start not from a Hamiltonian, but from
the Hilbert space $( \mathbb{C}^6 )^{ \otimes 2 }$.
Singlets are arranged; then the state is projected
onto the $\tau = 0$ eigenspace $\Hil_{\tau = 0}$.
$\Hil_{\tau = 0}$ contains
the $\frac{3}{2}  \otimes  \frac{3}{2}$ subspace.
Projecting onto the latter subspace
would yield an AKLT state.
Enlarging the projector deforms the state.

Wei \emph{et al.} study an AKLT state's computational power 
as a function of $a$~\cite{Wei_15_Universal}.
Our state's computational power might be studied
as a function of the projected-onto space.

% Holdover from earlier draft:
% Miyake says that he can use an AKLT state, rather than singlets, 
% to do something interesting that Wei \emph{et al.} can't do 
% with their cluster state.
% Can Miyake's argument be adapted to $\AKLTPrime$?
% Reference: Google hangout on 8-31-17.)

%
%
%
\subsection{Entanglement's effect on molecular-binding rates}
\label{section:Bio_Bell}

Consider two Posners approaching each other 
with the prebinding orientation
described below operation~\ref{item:Meas}.
The Posners might bind together.
They could form subsystems in
a many-body entangled system.
Entanglement affects the binding probability,
Fisher proposes~\cite{Fisher15}.

Fisher illustrates with an example~\cite[p. 5, Fig. 3]{Fisher15}.
Let $a$, $a'$, $b$, and $b'$ denote Posners.
Let $a$ be entangled with $a'$,
and let $b$ be entangled with $b'$.
Suppose that $a$ has bound to $b$.
Suppose that $a'$ approaches $b'$ with the prebinding orientation.
$a'$ and $b'$ have a higher probability of binding, 
Fisher argues, than in the absence of entanglement.
We recast this narrative as a quantum circuit
in Fig.~\ref{fig:Circuit_MF_story}.

Fisher supports his proposal by
analyzing position-and-spin states.
We use, instead, the Posner-binding PVM~\eqref{eq:C_PVM}.
Checking Fisher's example lies beyond
our (classical) computational power.
But we check the principle behind his example quantitatively:
Entanglement, we show, can affect the probability
that two Posners bind
(Fig.~\ref{fig:Binding_rates} and Sec.~\ref{section:Bio_Bell_Ex}).
We identify a percent increase of $198\%$.
This technique can be scaled up to analyze
arbitrarily many Posners.
Random rotations, we find in Sec.~\ref{section:Rot_bind_prob}, 
can eliminate entanglement's effect on average binding rates.

\subsubsection{Illustration: Entanglement's effect on binding rates}
\label{section:Bio_Bell_Ex}

% Figures: binding rates
\begin{figure}[tb]
\centering
\begin{subfigure}{0.45\textwidth}
\centering
\includegraphics[width=.7\textwidth]{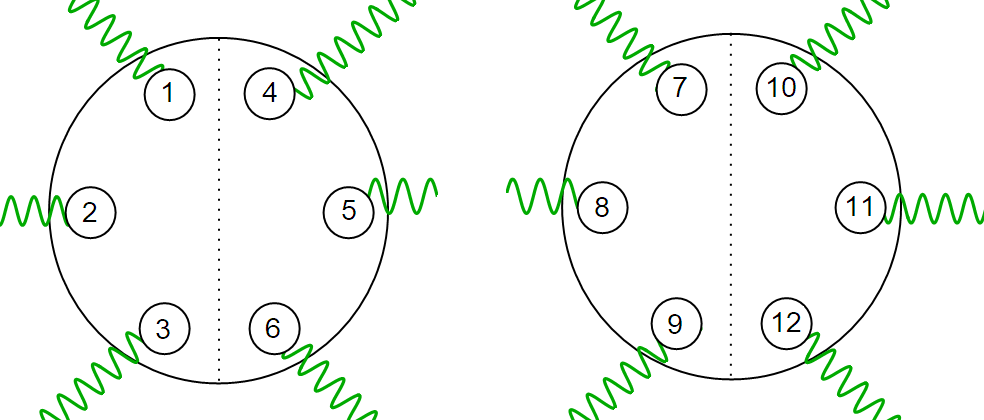}
\caption{}
\label{fig:entangledBindingBaseline}
\end{subfigure}
\begin{subfigure}{0.45\textwidth}
\centering
\includegraphics[width=.7\textwidth]{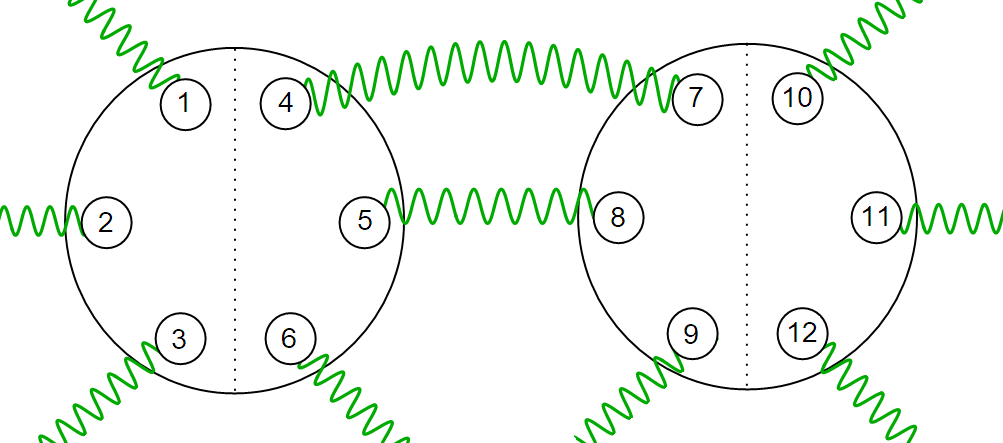}
\caption{}
\label{fig:entangledBindingTwoSinglets}
\end{subfigure}
\begin{subfigure}{0.45\textwidth}
\centering
\includegraphics[width=.7\textwidth]{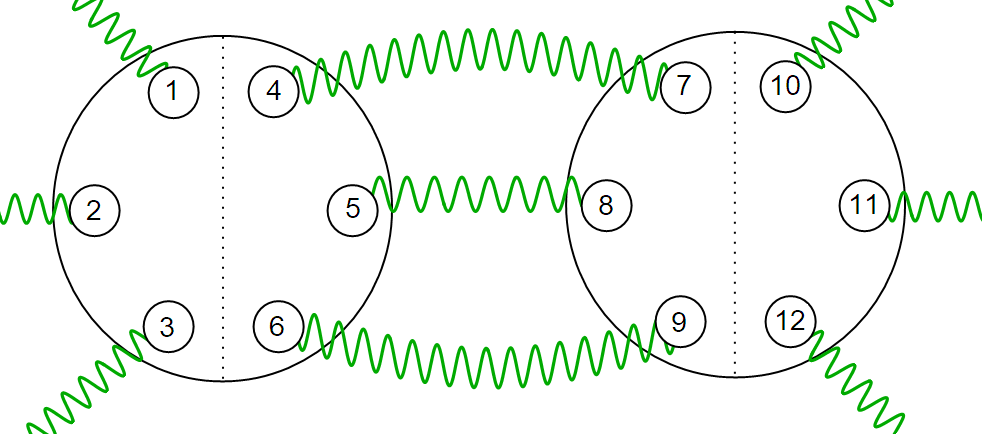}
\caption{}
\label{fig:entangledBindingThreeSinglets}
\end{subfigure}
\begin{subfigure}{0.45\textwidth}
\centering
\includegraphics[width=.7\textwidth]{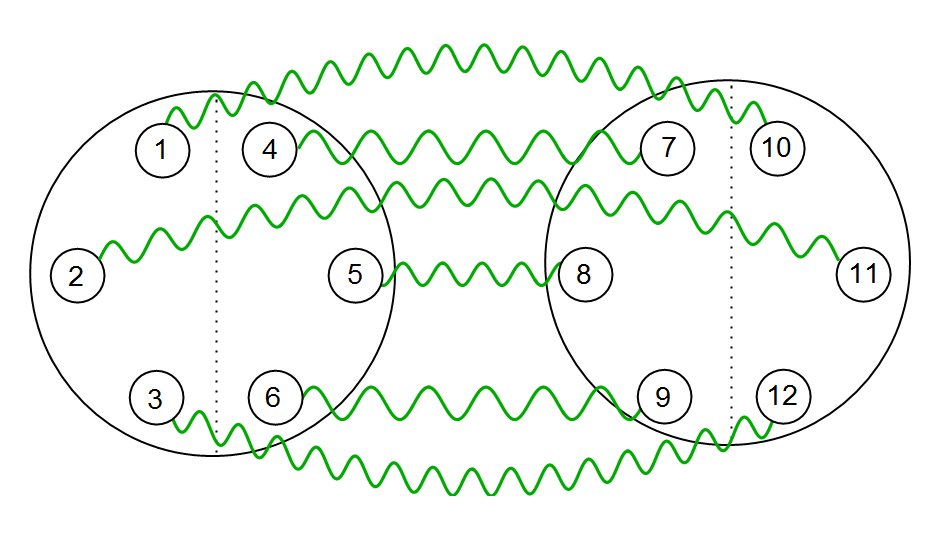}
\caption{}
\label{fig:entangledBindingSixSinglets}
\end{subfigure}
\caption{\caphead{Illustration of entanglement's effect
on molecular-binding rates:} 
The large black circles represent Posner molecules $A$ and $B$. 
Each molecule contains six phosphorus nuclear spins,
represented by small black circles.
The circles group together in trios
(see Fig.~\ref{fig:Geo}).
A dashed line represents the separation between
a Posner's trios.
Two small circles connected by a green, wavy line
form a singlet $\ket{\Psi^-}$ [Eq.~\eqref{eq:Singlet}].
Some spins form singlets with spins in external molecules.
Hence some green, wavy lines have ends that
do not terminate on spins in the diagram.
Figure~\ref{fig:entangledBindingBaseline} shows the base case,
a Posner pair that contains no singlets.
These Posners have a probability 
% $\frac{ 43 }{ 128 }  =  0.335938  \approx  0.336$ 
$\approx  0.336$ of binding together.
Figures~\ref{fig:entangledBindingTwoSinglets},~\ref{fig:entangledBindingThreeSinglets},
and~\ref{fig:entangledBindingSixSinglets}
show Posners that share 
two singlets, three singlets, and six singlets.
These pairs have binding probabilities 
% $0.34375  \approx  0.344$ 
$\approx  0.344$, $0.375$, and 1.
The entanglement patterns raise the probabilities
over the base case by
$\approx 2.38$\%, $\approx 11.6$\%, and $\approx 198$\%.}
\label{fig:Binding_rates}
\end{figure}
%

% New references:
   % (1) Email chain "v2 figures" -- initiated by EC on 1/5/18
   % (2) Dropbox --> entangledBinding --> entangledBinding2
%
% Outdated reference (from arXiv v1, from before bug in code was identified): Òoptimizing anglesÓ email chain Ñ the first message in the chain Ñ 10/17/17
%
% What happens if the Posners connect only one singlet?
% Reference: email chain "MF feedback + projecting Posners" -- 12/7/17 message from EC:
% "To confirm the point about one singlet between two Posners 
% not being enough to increase binding: 
% I went through all 6 cases in which 
% qubit 1 of Posner A 
% is in a singlet with 
% one of the qubits of Posner B, 
% and all 6 of these cases yielded the baseline binding probability."

Let $A$ and $B$ denote two Posners.
We illustrate with four cases.
\begin{enumerate}

   % 
   % Binding case 1: No singlets shared
   %
   \item  \textbf{Base case: no singlets:}
Suppose that the $AB$ system contains no singlets
(Fig.~\ref{fig:entangledBindingBaseline}).
Every phosphorus nuclear spin forms a singlet with
an external molecule.
The reduced state of $AB$ equals a product of maximally mixed states:
$\rho_{AB}  =  \frac{ \id_{12} }{ 64 }  \otimes  \frac{ \id_{12} }{ 64 }$.
The identity operator defined on $\mathbb{C}^k$
is denoted by $\id_k$.

Suppose that the Posners approach each other
with the prebinding orientation.
The Posners have a probability
\begin{align}
   \label{eq:Binding_rate_1}
   p_{AB} 
   =  \Tr \left(  \Pi_{ A B }  \,  \rho_{AB}  \right)
   =  \frac{ 43 }{ 128 } 
   \approx 0.336
\end{align}
of binding.

   %
   % Binding case 2: 2 singlets shared
   %
   \item  \textbf{2 singlets shared:}
   Consider any two qubits in the same triangle of $A$,
   e.g., qubits 4 and 5 in Fig.~\ref{fig:entangledBindingTwoSinglets}.
   Let each of these qubits share a singlet with
   the corresponding qubit in
   either triangle of $B$.
   For example, let 4 share a singlet with 7,
   and let 5 share a singlet with 8.
   The Posners occupy the state
   \begin{align}
      \label{eq:rhoABPrime}
      \rho'_{AB}  & =  \left(  \frac{ \id_2 }{ 2 }  \right)^{ \otimes 4 }
      \otimes  \:     _{4,7}\ketbra{ \Psi^- }{ \Psi^- }_{4,7}
      \nonumber \\ & \qquad 
      \otimes  \:     _{5,8}\ketbra{ \Psi^- }{ \Psi^- }_{5,8}  \otimes    \,
      \left(  \frac{ \id_2 }{ 2 }  \right)^{ \otimes 4 }  \, .
   \end{align}
   
   Suppose that the Posners approach each other 
   with the prebinding orientation.
   They have a probability 
   \begin{align}
      \label{eq:Binding_rate_2}
      p'_{AB}  
      % % %
      &  =  \Tr ( \Pi_{AB}  \,  \rho'_{AB} ) \\
      % % %
      &  =  0.34375  \approx  0.344
   \end{align} 
   of binding.
   Entanglement raises the binding probability by a fraction
   $\frac{ p'_{AB} }{ p_{AB} }  -  1
      \approx  0.238  \, ,$
or by $\approx 2.38\%$.

   %
   % Binding case 3: 3 singlets shared
   %
    \item    \textbf{3 singlets shared:}
    Let one qubit trio in $A$
    be maximally entangled with
    a qubit trio in $B$.
    Let one singlet link each $A$ qubit to
    the geometrically analogous qubit in the $B$ trio.
    For example, qubits 4, 5, and 6 can form singlets with
    qubits 7, 8, and 9, respectively 
    (Fig.~\ref{fig:entangledBindingThreeSinglets}). 
    The Posners would occupy the state
   \begin{align}
      \rho''_{AB}  & =  
      \left(  \frac{ \id_2 }{ 2 }  \right)^{ \otimes 3 }
      \otimes  \:     _{4,7}\ketbra{ \Psi^- }{ \Psi^- }_{4,7}  \,
      \otimes  \:     _{5,8}\ketbra{ \Psi^- }{ \Psi^- }_{5,8}  
      \nonumber \\ & \qquad
      \otimes  \:     _{6,9}\ketbra{ \Psi^- }{ \Psi^- }_{6,9}  \,
      \otimes
      \left(  \frac{ \id_2 }{ 2 }  \right)^{ \otimes 3 }  \, .
   \end{align}
   
   Suppose that the Posners approached each other 
   with the prebinding orientation.
   They would have a probability
   \begin{align}
      \rho''_{AB}  
      % % %
      & =  \Tr ( \Pi_{AB}  \,  \rho''_{AB}  )
      % % %
      =  0.375
   \end{align}
   of binding. The probability has risen by
   $\frac{ p''_{AB} }{ p_{AB} }  -  1
      \approx  0.116  \, ,$
   or by 11.6\%, from the base case.
   
    % 
   % Case 4: 6 singlets shared
   %
   \item  \textbf{Maximal entanglement: 6 singlets shared:}
   Suppose that each $A$ qubit forms a singlet with
   the corresponding $B$ qubit (Fig.~\ref{fig:entangledBindingSixSinglets}).
   The qubits occupy the state
   \begin{align}
      \rho'''_{AB}  =  \left( \ketbra{ \Psi^- }{ \Psi^- }  \right)^{ \otimes 6 }  \, .
   \end{align}
   The binding probability rises to
   \begin{align}
      p'''_{AB}  =  \Tr ( \Pi_{AB}  \,  \rho'''_{AB} )
      =  1  \, ,
   \end{align}
   in an approximately 198\% increase over the base case.
   Such maximal entanglement ensures that
   the Posners always bind.
   
   % Reference for what's below: "entangledBinding" folder --> "entangledBinding" subfolder --> "entangledBinding" PDF
   % Subtlety:
   % EC: Intriguingly, the probability that the Posners bind and are both in $\tau = 0$ is only 0.375. So this binding probability of 1 is pretty nontrivial.
   % NYH: Indeed. 
   % A picture of the Posners would invariant under 120$^\circ$ rotations.
   % So one might expect that
   % the state occupies the $(\tau_A = 0, \tau_B = 0)$ subspace.
   % Here's the explanation that comes to mind.
   % The expectation is based on
   % the expectation that
   % a Posner picture invariant under 120$^\circ$ rotations
   % represents a state
   % invariant under $C_A$ and under $C_B$.
   % The only such states occupy the the $\tau_A  = \tau_B = 0$ subspace.
   % But in fact, the state is invariant under, rather,
   % $C_A \otimes C_B$.
   % Such invariant states can have weight on the 
   % $( \tau_A , \tau_B ) = ( 0 , 0 ), (0, 1 ), (0, -1), (1, 0),$ and $(-1, 0)$ subspaces.]}

\end{enumerate}
\noindent 
Linking the Posners with just one singlet
appears not to raising the binding probability
above the baseline $p_{AB}$.
% Reference: email chain "MF feedback + projecting Posners" -- 12/7/17 message from EC

The technique illustrated here can be scaled up.
With more classical computational power, 
one can store larger quantum states.
The four-Posner conjecture illustrated in Fig.~\ref{fig:Circuit_MF_story}
can be checked.
So can entanglement's effect on the binding probabilities of
swarms of Posners.

\subsubsection{Random rotations can eliminate
entanglement's effect on average molecular-binding rates.}
\label{section:Rot_bind_prob}

% Reference: email chain ÒRebooted MBQCÓ Ñ messages from around 10/31/17

Consider twelve phosphorus nuclear spins in
a joint state $\rho'_{AB}$ [Eq.~\eqref{eq:rhoABPrime}]. 
Suppose that independent currents
randomly rotate the nuclei.
Each qubit $a$ evolves under some unitary 
$U_{ \hat{n}_a } ( \theta_a )$.
The rotation axis $\hat{n}_a$ and
the rotation angle $\theta_a$ 
may be distributed uniformly.

Suppose that six nuclei form Posner $A$,
while the other nuclei form Posner $B$.
The molecules occupy the joint state
\begin{align}
   \rho''_{AB}  \left( \Set{ \hat{n}_a }  ,  \Set{ \theta_a }  \right)
   % % %
   & =  \left[ \bigotimes_{a = 1}^{12}
   U_{ \hat{n}_a } ( \theta_a )  \right]
   \rho'_{AB}
   \left[ \bigotimes_{a = 1}^{12}
   U_{ \hat{n}_a } ( \theta_a )^\dag  \right]  \, .
\end{align}
Suppose that $A$ and $B$ approach each other
with the pre-binding orientation.
The Posners have a probability
\begin{align}
   p''_{AB} \left( \Set{ \hat{n}_a }  ,  \Set{ \theta_a }  \right)
   & =  \Tr  \LParen  \Pi_{AB}  \,  
  \rho''_{AB}  \left( \Set{ \hat{n}_a }  ,  \Set{ \theta_a }  \right)  
  \RParen
\end{align}
of binding.

On average over rotations,
the Posners have a probability
\begin{align}
   & \overline{p''}_{AB} 
   % % %
   =  \int_{ \Set{ \hat{n}_a }  ,  \Set{ \theta_a }  }
   \Tr  \Big(  \Pi_{AB}  \,  
   \rho''_{AB}  \left( \Set{ \hat{n}_a }  ,  \Set{ \theta_a }  \right)  
   \Big)  \\
   % % %
   & =  \Tr  \Bigg(  \Pi_{AB}  \Bigg[
   \left(  \frac{ \id_2 }{ 2 }  \right)^{ \otimes 3 }  \otimes
   \int_{ \Set{ \hat{n}_a }_{a=3}^{10}  ,  \Set{ \theta_a }_{a=3}^{10} } 
   \\ \nonumber &    \times
   \left[ \bigotimes_{a=3}^{10}
   U_{ \hat{n}_a } ( \theta_a )  \right]
   \ketbra{ \Psi^- }{ \Psi^- }  
   \left[ \bigotimes_{a=3}^{10}
   U_{ \hat{n}_a }^\dag ( \theta_a )  \right]  \otimes
   \left(  \frac{ \id_2 }{ 2 }  \right)^{ \otimes 3 }
    \Bigg]  \Bigg)  \\
    % % %
    & = \Tr  \left(  \Pi_{AB}  \left(  \frac{ \id_2 }{ 2 }  \right)^{ \otimes 12 }  \right)  \\
    % % %
    & =  p_{AB}
\end{align}
of binding.
Uniformly random rotations effectively decohere 
the internal singlets, on average.
The binding probability reduces to 
its non-singlet-enhanced value.

We emphasize that this result concerns an average over 
rotations by many \emph{different} pairs of molecules.
Each individual molecule pair can experience 
a binding-probability boost due to entanglement.

\subsection{Measuring quantum cognition against 
DiVincenzo's criteria for quantum computation and communication}
\label{section:diV}

Consider attempting to realize universal quantum computation 
and quantum communication
with any physical platform.
Which requirements must 
the physical components and processes satisfy?
DiVincenzo catalogued these requirements~\cite{diVincenzo_00_Physical}.
Five of \emph{diVincenzo's criteria} underpin quantum computation.
Quantum communication requires another two criteria.
Quantum cognition, we find,
satisfies DiVincenzo's criteria,
except perhaps universality.

We continue assuming that 
Posner operations can be performed with fine control.
This assumption is discussed in Sec.~\ref{section:Control}.
The assumption may appear questionable here:
Practicalities partially concerned DiVincenzo.
% Attempting to realize arbitrary qubit rotations
% with neuronal magnetic fields appears impractical.
Yet the fine-control assumption facilitates a first-step analysis
of what the model can achieve in principle.
Incorporating randomness 
forms an opportunity for future research.

\textbf{1. ``A scalable physical system with well-characterized qubits'':}
Different physical DOFs encode QI
at different stages of a quantum-cognition computation
(Sec.~\ref{section:How_to_enc}).
Initially, phosphorus-31 ($\Thirtyone$) atoms' nuclear spins
serve as qubits.
These atoms occupy free-floating phosphate ions.

Six phosphates can join together, forming a Posner molecule.
Each state of six independent spins transforms into one 
antisymmetrized state~\eqref{eq:Slater}.
Hence antisymmetrized spin-and-position states store QI.
% But the $\Thirtyone$ nuclear spins remain the key ingredients,
% as shown in~\eqref{eq:Simple_code2}.

Third, each Posner has an observable $\GC$
(Sec.~\ref{section:C_symm}).
The eigenvalue $\tau$ 
assumes one of three possible values: $\tau = 0 , \pm 1$.
The eigenspaces $\Hil_\tau$ 
have 24-, 20-, and 20-fold degeneracies.
A state $\ket{ \tau{=}j }$ can be chosen
from each $\Hil_{ \tau{=}j }$ eigenspace.
${\rm span} \Set{ 
\ket{ \tau{=}0 } ,  \ket{ \tau{=}1 } ,  \ket{ \tau{=}2 } }$
forms an effective qutrit space 
(Sections~\ref{section:Qutrit_code} and~\ref{section:Stick_apps}).
But Posners are not associated uniquely with effective qutrits,
to our knowledge (footnote~\ref{footnote:Pseudospin}).

QI-storing Posner systems can be scaled up spatially.
$\Thirtyone$ nuclear spins can form singlets
distributed across Posners.
Entangled Posners can form lattices
that can power universal MBQC (Sec.~\ref{section:AKLT}).

% Just as Posner clusters can be scaled up across space,
% Posner computations can be scaled up across time.
% The ability to perform arbitrarily long computations
% hinges on 
% the ability to form, dissolve, and reform Posners
% (Sec.~\ref{section:Circuit_depth}).
% Only when qubits are in Posners 
% can Posner binding create entanglement.
% Only outside of Posners can qubits rotate independently.
% Alternating entanglement generation with one-qubit rotations
% is necessary for performing arbitrary computations.
% Hence qubits must form Posners, then separate, repeatedly.

%
\textbf{2. ``The ability to initialize the state of the qubits to a simple fiducial state, such as $\ket{000 \ldots}$'':}
Phosphorus nuclear spins can be prepared in singlets $\ket{ \Psi^- }$
(operation~\ref{item:Bell}).
A singlet forms as
the enzyme pyrophosphatase cleaves a diphosphate ion.
The resultant two phosphates are projected onto $\ket{ \Psi^- }$.

\textbf{3. ``Long relevant decoherence times, 
much longer than the gate-operation time'':}
% Reference: "Write-ups" folder --> "Phosphorus"
Fisher has catalogued sources of decoherence
and has estimated coherence times~\cite{Fisher15,Fisher_17_Are,Fisher_17_Personal,Swift_17_Posner}.
Nearby spins and electrons generate electric and magnetic fields.
Protons $\H^+$ threaten $\Thirtyone$ spins the most.
The $\Thirtyone$ spins can entangle with external DOFs via
magnetic dipole-dipole coupling.
% A reference: Fisher_17_Are -- p. 5 -- bottom RHS
% Reference that emphasizes entanglement as source of decoherence: Swift_17_Posner -- "decoherence" section

But phosphates, Posners, and other small particles tumble in solution.
As the particles move around each other,
the fields experienced by a $\Thirtyone$ changes.
The fields are expected to vanish
on average over tumbles.

$\Thirtyone$ nuclear spins in free phosphates, Fisher estimates,
remain coherent for about a second.
% A reference for the above line: Fisher_17_Are -- p. 5
$\Thirtyone$ nuclear spins in Posners
remain coherent for $\sim 10^5-10^6$ s.
% Reference for the 10^5: Fisher15 -- p. 2 -- LHS
% Reference for the 10^6: Fisher_17_Are -- p. 6 -- top LHS

Fisher writes also that $\tau$ labels a pseudospin
``very isolated from the environment,
with potentially extremely long (days, weeks, months, \ldots) 
decoherence times''~\cite{Fisher_17_Are}.
% Reference for the above sentence: Fisher_17_Are -- p. 7
As explained in Sec.~\ref{section:C_symm},
to our knowledge,
$\tau$ does not label any unique quantum three-level pseudospin. 
We translate Fisher's statement as follows:
Consider preparing a $\GC$ eigenstate
associated with the eigenvalue $\tau = j$.
Consider waiting, then measuring $\GC$.
How would long would you have to wait
to have a high probability of obtaining an outcome $\tau = k \neq j$?
At least days.

These coherence times, we expect,
exceed entangling-gate times.
Posner binding (operation~\ref{item:Meas}) consists of electronic dynamics.
Electrons have much shorter time scales than nuclei, in general.
Posner binding releases about 1 eV of energy
to the environment~\cite{Fisher15,Swift_17_Posner}.
A rough estimate for the binding's time scale is
$t_{\rm bind}  \sim  \frac{ \hbar}{ 1 \; \text{eV} }  
\sim  10^{ -15 }$ s $\ll$ $10^5$ s.
As many as $10^{20}$ entangling gates might be performed
before the spins decohere.

Could many single-qubit rotations be performed?
We estimate that a qubit can rotate through 
an angle $\sim  \pi$ in a time
$t'_\rot  \sim  10 \text{ s}
\ll 10^5 \text{ s}$, in the best case
(App.~\ref{section:One_Qubit_Gates}).
Hence the best-case single-qubit-rotation time scale 
is much less than
the out-of-Posner decoherence time.

Qubits should be able not only to rotate and entangle,
but also to leave and enter Posners
(via Operations~\ref{item:Form_Pos},~\ref{item:pH},
and~\ref{item:Disperse_bind})
before decohering.
Molecular formation and hydrolyzation transfers
about 1 eV between molecules and their environments.
This energy scale suggests a time scale of
$\frac{ \hbar }{ 1 \text{ eV} }
\approx 1 \text{ fs}
\ll 1$ s.
We therefore expect phosphates to be able 
to leave and enter Posners 
before the spins decohere.

\textbf{4. ``A `universal' set of quantum gates'':}
The biological qubits can undergo 
Posner operations, introduced in Sec.~\ref{section:Abstract2}.
The operations include single-qubit rotations
and entangling operations.
Analyses appear in Sections~\ref{section:Analyze_abstract} and~\ref{section:Stick_apps}.

We detail the rotation mechanism in App.~\ref{section:One_Qubit_Gates}.
Neuron firing generates magnetic fields $\mathbf{B}$.
We model these fields as external and classical.
(These fields contrast with 
the influence, mentioned earlier, of $\H^+$ spins.
$\H^+$ spins can entangle with $\Thirtyone$ nuclear spins,
decohering the qubits.
External classical fields cannot, to a good approximation.)
Let $\bm{\mu}$ denote
a $\Thirtyone$ nucleus's spin magnetic moment.
Neuronal-current fields could rotate the spins via the Hamiltonian
$H_{\rm mag}  =  -  \bm{\mu}  \cdot  \mathbf{B}$.
In the best case, spins could rotate through
angles of up to $\pi$.
We expect typical rotations to be
through much smaller angles, though.
% Diffusion could bring Posners to
% differently oriented currents,
% which would generate differently oriented $\mathbf{B}$'s.

Qubit gates induce gates on
the effective qutrits.
The induced gates depend on 
how the qutrits are defined.

Whether the gates form a universal set---in
transforming the qubits or the qutrits---remains
an open question.
Posner operations can efficiently prepare
a state $\AKLTPrime$ that can fuel
universal MBQC (Sec.~\ref{section:AKLT}).
Whether Posner operations can implement MBQC
remains unknown.

\textbf{5. ``A qubit-specific measurement capability'':}
Posner binding (operation~\ref{item:Meas}) measures 
whether $\tau_A + \tau_B = 0$.
% Two Posners bind together if and only if $\tau_A + \tau_B = 0$.
Whether two molecules have bound together
is a classical property.
Each Posner's center of mass is a classical DOF:
Water and other molecules bounce off the Posner frequently.
The bounces measure the Posner's position,
precluding coherence.
Hence two Posners' closeness is a classical DOF.
Closeness serves as a proxy for binding.
Hence whether two Posners have bound
is a classical DOF.
This classical memory records whether $\tau_A + \tau_B = 0$.

Fisher proposed that the readout 
might be amplified further~\cite{Fisher15}.
Posner binding could impact later Posner binding,
then the hydrolyzation of Posners,
then neurons' $\Ca\Two$ concentrations,
and then neuron firing.
This process is overviewed in this paper's Sec.~\ref{section:Backgrnd_MF}.

\textbf{6. ``The ability to interconvert stationary and flying qubits'':}
Computing is often easiest with unmoving hardware.
\emph{Stationary qubits} remain approximately fixed.
They undergo computation.
Then, their state can be transferred to \emph{flying qubits}.
Flying qubits move easily.
They can bring states together
for joint processing.

Consider, for example, a quantum algorithm
that contains subroutines.
Different labs' quantum dots
could implement different subroutines.
The quantum dots' states could be converted into photonic states.
The photons could travel down optical fibers to a central lab.
There, the algorithm's final steps could be implemented.

Phosphorus nuclear spins could serve as stationary qubits and as flying qubits.
Phosphorus atoms occupy lone phosphates and Posners.
In each setting, the nuclear spins undergo computations
(Sec.~\ref{section:Abstract2}).
But Posners protect the spins from decoherence 
better than lone phosphates do~\cite{Fisher15}.
Hence Posners form better flying qubits.

The projector $\PiPos$ [Eq.~\eqref{eq:PiMinus}]
transforms phosphate states into a Posner state.
The projection forms a one-to-one map.
Hence ``stationary'' phosphates' states are converted into
a ``flying'' Posner's state faithfully.

\textbf{7. ``The ability faithfully to transmit flying qubits between specified locations'':}
Posners diffuse through intracellular and extracellular fluid.
A protein could transport Posners into neurons~\cite{Fisher15,Fisher_17_Personal}:
the \emph{vesicular glutamate transporter} (VGLUT)~\cite{VGLUT1_2000,VGLUT_review_2007,SLC17_2010,SLC17_2013},
% Referenced in Matthew's long write-up -- p. 15, LHS
alias the \emph{brain-specific (B) sodium-dependent (${\rm Na}^+$) 
inorganic-phosphate (Pi) cotransporter} (BNPI)~\cite{BNPI_1994}.
% Cited in Matthew's arXiv paper -- p. 6, top
VGLUT sits in cell membranes, 
through which the protein could ferry Posners.
Posners protect in-transit $\Thirtyone$ nuclear spins for
$\sim 10^5 - 10^6$ s~\cite{Fisher15,Fisher_17_Quantum},
as discussed above.

For how long do Posners diffuse
between neurons?
We estimate by dimensional analysis.
The diffusion constant $D$ has dimensions of $\text{distance}^2 / \text{time}$:
\begin{align}
   \label{eq:Diffuse1}
   D  \sim  \frac{ \ell^2 }{ t_\diffuse }  \, .
\end{align}
The time scale over which 
a Posner diffuses between neurons
is denoted by $t_\diffuse$.
A typical synapse has an area of $\ell^2  \sim  10^{-2}  \;  \mu\text{m}^2$~\cite[Fig. 2]{Milo_15_Cell_Synapse}.
% Reference: http://book.bionumbers.org/how-big-is-a-synapse/ -- Fig. 2

We estimate $D$ via the Einstein-Stokes relation,
\begin{align}
   \label{eq:Ein_Stokes}
   D  =  \frac{ \kB T }{ 6 \pi  \eta r }  \, .
\end{align}
Equation~\eqref{eq:Ein_Stokes} describes a radius-$r$ sphere in
a viscosity-$\eta$ fluid.
Water has a viscosity $\eta  \sim  10^{-3} \; \text{N} \cdot \text{s} / \text{m}^2$.
% Reference: http://book.bionumbers.org/what-are-the-time-scales-for-diffusion-in-cells/
A Posner molecule has a radius $r \sim 10$ \AA~\cite{Yin_03_Posner}.

We substitute these numbers, with $\kB  \sim 10^{ - 23}$ J$/$K,
$T  \sim  10^2$ K, and $6 \pi  \approx 10$, into Eq.~\eqref{eq:Ein_Stokes}:
$D  \sim  10^{ -10 }  \;  \text{m}^2 / \text{s}$.
We substitute into Eq.~\eqref{eq:Diffuse1}, 
upon solving for $t_\diffuse$:
\begin{align}
   \label{eq:Diffuse2}
   t_\diffuse  \sim  \frac{ \ell^2 }{ D }
   & \sim  \frac{ 10^{ - 2 }   ( 10^{ -6 }  \;  \text{m} )^2 }{
                           10^{ - 10 }  \;  \text{m}^2 / \text{s} }  \\
   &  =  0.1  \;  \text{ms}  
   \ll  10^5  \;  \text{s}  \, .
\end{align} 
Hence Posners are expected to be able to traverse a synapse
before their phosphorus nuclear spins decohere.

\section{Discussion and Conclusions}
\label{section:Outlook}

% We have translated Fisher's quantum-cognition proposal
% from physics into QI processing.
This paper establishes a framework
for the QI-theoretic analysis of Posner chemistry.
The paper also presents applications of Posners
to QI processing:
to QI storage and protection, to quantum communication,
and to quantum computation.
Many QI applications of Posners await discovery, we expect.
In turn, QI motivates quantum-chemistry questions.
Opportunities are discussed below.

\textbf{Quantum error-correcting and -detecting codes:}
We presented one quantum error-detecting code
and one error-correcting code
accessible to Posners.
Other accessible codes might 
protect more information against more errors.
Ideally, one would show how Posner operations,
or a biochemically reasonable extension thereof,
could (i) prepare states in the codespace and 
(ii) detect and correct errors.

Furthermore, one conserved charge
``protects'' each of our codes.
In the error-detecting code, for example,
the codewords $\ket{ \jt }$ correspond to
distinct eigenvalues of $\GC$.
The natural dynamics protect $\GC$.
Hence the dynamics should not map
any codeword $\ket{ \jt }$ into
any other $\ket{ k_\Logg }$.
But the dynamics could map $\ket{ \jt }$
to another state $\ket{ \jt'}$
in the $\tau = j$ eigenspace.

Imagine a more robust code:
A complete set of quantum numbers
(e.g., $\Set{ \tau , m_{ 1 \ldots 6 } , \ldots }$)
would label each codeword.
The dynamics could not map any codeword 
$\ket{ \tau , m_{ 1 \ldots 6 } , \ldots }$
into any other codeword 
$\ket{ \tau' , m'_{ 1 \ldots 6 } , \ldots }$.
Such a code would enjoy considerable protection
by charge preservation.

Relatedly, quantum codes have been cast as
the ground spaces of Hamiltonians.
Every code's states, $\ket{ \bar{\psi} }$, occupy
a Hilbert space $\bar{ \Hil }$.
% $\bar{ \Hil }$ is called the \emph{code} or the \emph{code space}.
Suppose that $\bar{ \Hil }$ is the ground space of a Hamiltonian $H$.
Suppose that the system is in thermal equilibrium
at a low temperature $T = \frac{1}{ \kB T }$.
The system has a high probability of remaining in $\bar{ \Hil }$.
Entropy suppresses errors.
Equivalently, the code detects errors.
The Posner Hamiltonian $H_\Posner$ was characterized
shortly after this paper's initial release~\cite{Swift_17_Posner}.
The ground space might point to 
an entropically preserved a code.

\textbf{Quantum algorithms:}
Posners might perform quantum algorithms of two types:
(i) Known algorithms~\cite{Jordan_Zoo} might decompose into Posner operations.
(ii) Posner operations could inspire hitherto-unknown quantum algorithms.

\textbf{Reverse-engineering:}
QI processing could guide conjectures about quantum chemistry.
Fisher reverse-engineered physical mechanisms by which
entanglement could impact cognition~\cite{Fisher15}.
Similarly, one might reverse-engineer physical mechanisms by which
Posners could process QI.
This paper motivates reverse-engineering opportunities:
\begin{enumerate}[leftmargin=*]

   \item
   Section~\ref{section:Posner_AKLT} details 
   how Posner operations can efficiently prepare states
   that can fuel universal MBQC.
   To use the states, one performs the operations
   in Sec.~\ref{section:Bob_AKLT}.
   Example operations include 
   (i) measurements of the POVM $\Set{ F_x, F_y, F_z, }$
   [Eq.~\eqref{eq:Fs}] and
   (ii) adaptive single-qubit measurements.
   Could biological systems implement these operations?
   
   %
   % \item 
   % Reverse-engineer measurements of
   % composite spin operators, e.g., $S^{ \zlab }_{1 \ldots 6}$.
   % Posner operations include permutations of spins.
   % Arbitrary permutations of spins, 
   % with measurements of arbitrary composite $S^z$ operators,
   % form another quantum-computation model
   % \emph{permutational quantum computation}~\cite{Jordan_09_Permutational}.
   % This model has power that classical computers lack:
   % PQP denotes the class of problems
   % solvable efficiently by permutational quantum computation.
   % PQP contains problems outside P, 
   % and PQP $\subseteq$ BQP.
   % One might expect similar conclusions about
   % Posner quantum computation. 
   % Hence permutational quantum computation 
   % might provide a stepping stone
   % to the characterization of Posner operations' complexity class
   % (see below).
   
   %
   \item
   Reverse-engineer a measurement of
   the generator $\GC$ of
   the permutation operator $\CThree$.
   If $\GC$ can be measured,
   incoherently teleported random variables 
   can be used easily (Sec.~\ref{section:Tele_protocol}).

\end{enumerate}

\textbf{Quantum computational complexity and universality:}
Posner operations (Sec.~\ref{section:Abstract_logic})
constitute a model of quantum computation.
Which set of problems can this model solve efficiently?
Let PosQP denote the class of computational problems
solvable efficiently with Posner quantum computation.

Whether Posner quantum computation is universal
remains an open question.
(See Sec.~\ref{section:Ent_capacity} for an elaboration.)
Suppose that the model were universal.
PosQP would equal BQP
(the class of problems that a quantum computer
can solve in polynomial time~\cite{NielsenC10}).
But perhaps $\text{PosQP} \subset \text{BQP}$.
PosQP merits characterization.

\textbf{AKLT$'$ state and MBQC protocol:}
Posner operations can efficiently prepare
a state $\AKLTPrime$ that fuels universal MBQC (Sec.~\ref{section:AKLT}).
The state preparation may be simplified.
Opportunities are detailed in Sec.~\ref{section:Analyze_AKLT_prep}.
Also, $\AKLTPrime$ holds interest
outside of MBQC.
Properties to explore are discussed in Sec.~\ref{section:AKLT_analysis}.

\textbf{Entanglement's effect on binding rates 
and biological Bell tests:}
Entanglement between Posners affects binding rates.
So Fisher conjectured in~\cite{Fisher15}.
The conjecture grew from analyses of spin-and-orbital states.
We supported the conjecture with a two-Posner example,
using a PVM
(Sec.~\ref{section:Bio_Bell}).
The example illustrates
how to check Fisher's conjecture with 
the formalism of QI.
Larger-scale calculations could test 
(i) Fisher's four-Posner conjecture and
(ii) entanglement's effects on the binding probabilities of
swarms of Posners.

Moreover, the QI formalism
could lead to a framework for \emph{biological Bell tests}.
Such tests might be cast as nonlocal games~\cite{Palazuelos_16_Survey}.
The Clauser-Holt-Shimony-Hauser (CHSH) game, which illustrates Bell's theorem~\cite{Clauser_69_Proposed,Preskill_01_Entanglement},
can serve as a model.

\textbf{Quantum chemistry:}
Physical conjectures populate
Sections~\ref{section:When_Pos_form},~\ref{section:Formal_Pos_form}, 
and~\ref{section:How_to_enc_transf}.
These conjectures merit testing and refinement.
First, Posner creation was modeled with
a Lennard-Jones potential.
Second, pre-Posner spin states were assumed 
to transform deterministically
into antisymmetric Posner states.
The pre-Posner orbital state was assumed
to determine the map.
Third, Posner creation was assumed to preserve
each spin's $S^{ \zlab }$ essentially.
Fourth, Posner dynamics were assumed to preserve
$\CThree$ and $S^{ \zlab }_{1 \ldots 6}$.

\textbf{Randomness:}
Our QI-processing protocols involve
perfect executions of Posner operations.
But Posners suffer magnetic fields somewhat randomly.
Randomness could hinder some, and improve some, QI processing.

For example, Sec.~\ref{section:AKLT} features a honeycomb lattice.
Singlets might not form a honeycomb in solution.
(See footnote~\ref{footnote:Hydroxy_hex} for
a reason why regular graphs might form.)
They might have a greater probability of 
forming a random graph.
Randomness could improve the state's connectivity.
Improved connectivity might lower the bar 
for fueling universal MBQC (Sec.~\ref{section:Analyze_AKLT_prep}).
What randomness helps, and what randomness hinders,
merits investigation.

%
% Acknowledgements
%
\begin{acknowledgments}
The authors thank Ning Bao, Philippe Faist, Matthew Fisher, Steve Flammia, Yaodong Li, Leo Radzihovsky, and Tzu-Chieh Wei for discussions.
We thank Fernando Pastawski for help with constructing the quantum error-detecting code.
NYH thanks John Preskill for nudges toward this paper's topic
and for feedback about drafts.
We are grateful for funding from the Institute for Quantum Information and Matter, an NSF Physics Frontiers Center (NSF Grant PHY-1125565) with support from the Gordon and Betty Moore Foundation (GBMF-2644).
This research was partially supported by the NSF also 
under Grant No. NSF PHY-1125915.
NYH is grateful for partial support from the Walter Burke Institute for Theoretical Physics at Caltech, for a Graduate Fellowship from the Kavli Institute for Theoretical Physics, for a Barbara Groce Graduate Fellowship,
and for an NSF grant for the Institute for Theoretical Atomic, Molecular, and Optical Physics at Harvard University and the Smithsonian Astrophysical Observatory. 
\end{acknowledgments}

%
% Appendices
%
\onecolumngrid
\begin{appendices}

% To switch for two-column to one-column formatting here, using the command \onecolumngrid.

% Number subsections in the appendices as in the main text,
% except skip the capital Roman numerals.
\renewcommand{\thesection}{\Alph{section}}
\renewcommand{\thesubsection}{\Alph{section} \arabic{subsection}}
\renewcommand{\thesubsection}{\Alph{section} \arabic{subsection}}
\renewcommand{\thesubsubsection}{\Alph{section} \arabic{subsection} \roman{subsubsection}}

% Label the equations in Appendix L as L1, L2, ...
\makeatletter\@addtoreset{equation}{section}
\def\theequation{\thesection\arabic{equation}}

\section{Background: Quantum information theory}
\label{section:QI_backgrnd}

Quantum systems can process information more efficiently,
transmit information more compactly,
and secure information more reliably
than classical systems can.
Consider a system of $\Sites$ qubits,
e.g., $\Sites$ phosphorus nuclear spins.
The system corresponds to a Hilbert space $\Hil$ 
of dimensionality $2^\Sites$.
Let $\Set{ \ket{ \phi_j } }$ denote 
an orthonormal basis for $\Hil$.
The system can occupy a quantum state
$\ket{ \psi }   = \sum_j  c_j  \ket{ \phi_j }  \in \Hil$.
The $2^\Sites$ coefficients $c_j  \in  \mathbb{C}$ satisfy 
the normalization condition $\sum_j  | c_j |^2  =  1$.
Consider specifying one of the $2^\Sites$ basis elements $\ket{ \phi_j }$.
One must use $2^\Sites$ bits (two-level units of classical information).
The specification requires only $\Sites$ qubits.
One can leverage this discrepancy to process information quickly, 
using quantum systems.
The state $\ket{ \psi }$ constitutes QI.

QI can be processed
with help from entanglement~\cite{NielsenC10,Preskill_99_QEC}. 
\emph{Entanglement} manifests in
correlations stronger than any shareable 
by classical systems.
Entanglement facilitates quantum computation, communication, and cryptography.
We briefly review efficiency, quantum computational models and universality,
and quantum error correction.
Readers seeking more background
are referred to~\cite{NielsenC10,Preskill_99_QEC}.

\subsection{Efficiency}
\label{section:Backgrnd_efficiency}

Quantum computers can efficiently solve certain problems 
that, according to widespread belief, classical computers cannot.
\emph{Efficiently} loosely means the following.
Consider a family $F$ of computational problems.
For example, consider receiving a number $\mathcal{N}$
whose prime factors you must identify.
An instance of $F$ consists of, e.g., the number $\mathcal{N}$ to be factored.
Let $n$ quantify the resources 
required to specify an instance of $F$.
For example, $n$ might equal the number of bits
needed to represent $\mathcal{N}$.
Let $t$ denote the time required to solve the instance.
Suppose that the time grows, at most, 
polynomially in the amount of resources:
$t \sim (\const) n^k$, for some $k \geq 0$.
The problems in $F$ can be solved efficiently.

Quantum computers can factor arbitrary numbers
more quickly than classical computers can~\cite{Shor_97_Polynomial}.
Imagine using a quantum computer 
to solve a problem more quickly
than any classical computer.
One would achieve \emph{quantum speedup},
or \emph{quantum supremacy}~\cite{Preskill_12_Quantum}.\footnote{
% < f >
Preskill coined the term ``quantum supremacy'' in~\cite{Preskill_12_Quantum}. 
The paper concludes with quantum computing's potential:
``How might quantum computers change the world? Predictions are never easy, but it would be especially presumptuous to believe that our limited classical minds can divine the future course of quantum information science.''
Posners suggest that we have better chances than Preskill expected.}
% < /f >
% We will show how to encode QI, to communicate information,
% and to create a resource state that can fuel universal quantum computation, 
% using Posners.
% Next, we overview quantum-computation models, universality,
% and quantum error correction.

%
%
%
\subsection{Quantum-computation models and universality}
\label{section:Backgrnd_models}

A general quantum process consists of 
state preparations, evolutions, and measurements.
Which operations can be implemented easily
(which states $\ket{ \psi }$ can be prepared easily, etc.)
varies from platform to platform.
Consider, for example, a nuclear-magnetic resonance (NMR) experiment.
Let $\Sites$ denote the number of nuclear spins.
Preparing the pure state $\ket{ 0 }^{ \otimes \Sites }$ is difficult.
Preparing a maximally mixed state 
$\id / 2^{\Sites - 1 }$ of $\Sites - 1$ spins,
tensored with one pure $\ket{ 0 }$,
is easier~\cite{Knill_98_Power}.
A set of quantum resources---of performable quantum operations---forms
a \emph{model for quantum computation}.
DiVincenzo catalogued the ingredients needed to realize 
a quantum-computation model physically~\cite{diVincenzo_00_Physical}.

Certain computational models are universal~\cite{Deutsch_85_Quantum}. A universal quantum computer can perform every conceivable quantum computation. Every universal model can simulate every other universal model efficiently. 

Many quantum-computation models exist.
Two prove most pertinent to this paper:
the circuit model~\cite{Deutsch_89_Quantum} and
measurement-based quantum computation (MBQC)~\cite{Briegel_01_Persistent,Raussendorf_03_Measurement,Briegel_09_Measurement}.
Other models include 
the quantum Turing machine~\cite{Deutsch_85_Quantum},
the one-clean-qubit model~\cite{Knill_98_Power},
adiabatic quantum computation~\cite{Farhi_01_Quantum},
anyonic quantum computation~\cite{Kitaev_03_Fault}, 
teleportation-based quantum computation~\cite{Gottesman_99_Demonstrating,Knill_01_Scheme,Nielsen_03_Quantum,Zhou_00_Methodology},
quantum walks on graphs~\cite{Childs_09_Universal}, and
permutational quantum computation~\cite{Jordan_09_Permutational}.

The \emph{circuit model} is used most widely~\cite{Deutsch_89_Quantum}.
One solves a problem by running a quantum circuit,
illustrated by a circuit diagram (e.g., Fig.~\ref{fig:Circuit_MF_story}).
Wires represents the qubits, which are often
prepared in pure states $\ket{ 0 }$.
Rectangles represent unitary operations $U$.
The $U$'s evolve the qubits,
implementing gates.
A rectangle inscribed with a dial represents a measurement.
Single qubits can be measured with respect to some orthonormal basis,
e.g., $\Set{ \ket{ 0 }, \ket{1} }$, wherein $\langle 0 | 1 \rangle = 0$.

\emph{Depth} quantifies a circuit's length, or complexity.
Consider grouping together
the operations that can be performed simultaneously.
For example, qubit 1 can interact with qubit 2
while qubit 3 interacts with qubit 4.
Each group of gates occurs during one time slice.
The number of time slices in a circuit equals
the circuit's depth.
Suppose that the depth does not depend on the number of qubits.
Such a circuit has \emph{constant depth}.

\emph{Primitive unitaries} can be implemented directly.
Composing primitives simulates more-complicated operations.
One universal primitive set~\cite{Kitaev_97_Quantum,NielsenC10}
is natural to compare with $\Thirtyone$ dynamics:
(i) Each qubit's state can rotate through a fixed angle $\theta$
about a fixed axis $\hat{n}$ of the Bloch sphere.\footnote{
% < f >
The \emph{Bloch sphere} represents pure qubit states geometrically~\cite{NielsenC10}.
A general pure qubit state has the form
$\ket{ \psi }  =  \cos \frac{\theta}{2}  \ket{ 0 }  
+  e^{ i \varphi} \sin \frac{\theta}{2} \ket{1}$,
wherein $\theta, \varphi \in [ 0, 2 \pi )$.
The state is equivalent to the \emph{Bloch vector}
$\left( \sin \theta  \:  \cos  \varphi ,  \sin \theta  \:  \sin \varphi,
\cos  \theta \right)$.
The Bloch vector lies on the unit sphere, 
or Bloch sphere.
Points inside the sphere represent mixed states 
$\rho  \neq  \ketbra{ \psi }{ \psi }$.}
% < /f >
$\theta$ must be an irrational multiple of $2 \pi$.
(ii) Each qubit can rotate through a fixed angle $\theta'$
about a fixed axis $\hat{n}' \neq \hat{n}$.
(iii) Any two qubits can be entangled 
via some fixed unitary.

No unitary is known to entangle Posners'
phosphorus nuclear spins.
Hence we turn from the circuit model to MBQC~\cite{Briegel_01_Persistent,Raussendorf_03_Measurement,Briegel_09_Measurement}.
To implement MBQC, one prepares
a many-qubit entangled state $\ket{ \psi }$.
One measures single qubits adaptively.
Measurements are \emph{adaptive} if
earlier measurements' outcomes dictate
later measurements' forms.

Certain states $\ket{ \psi }$ enable one to simulate efficiently, via MBQC,
a universal quantum computer.
Example states include the 
\emph{Affleck-Kennedy-Lieb-Tasaki (AKLT) state}
on a honeycomb lattice,
$\AKLTHon$~\cite{Briegel_01_Persistent,Raussendorf_03_Measurement,VandenNest_06_Universal,Miyake_11_Quantum}.
Posners can occupy a similar state, $\AKLTPrime$.
$\AKLTPrime$ can fuel universal MBQC
(Sec.~\ref{section:AKLT}).

\subsection{Quantum error correction}
\label{section:Backgrnd_QEC}

Two sources of error threaten quantum computers.
First, the operations performed might differ from 
the target operations.
Consider, for example, trying to rotate a qubit
through an angle $\frac{\pi}{2}$ about the $z$-axis.
One might overshoot or undershoot. 
The qubit would rotate through an angle $\frac{\pi}{2}  +  \epsilon$,
for some $\epsilon  \neq 0$.

Second, a quantum computer might entangle with its environment.
The environment decoheres the computer's state.
QI leaks from the computer into the environment.

\emph{Quantum error correction} preserves QI.
Imagine wishing to process 
a state $\ket{\psi}$ of $k$ qubits.
One chooses an \emph{error-correcting code}.
The code maps $\ket{ \psi }$ to 
a state $\ket{ \bar{\psi} }$ of $n > k$ qubits.
$\ket{ \bar{\psi} }$ undergoes physical processes
that effect \emph{logical operations} on the encoded state.
The logical operations constitute a computation.

Throughout the computation, certain observables $O$ are measured.
Which $O$'s depends on the code.
The measurements' outcomes imply whether an error has occurred
and, if so, which sort of error.
The code dictates how to counteract the error.
The state is typically corrected with some unitary $U$.
After the computation and correction terminate,
the state is decoded.
The computational problem's answer is read out.

A code can \emph{detect} more errors than it can correct.
Suppose that, according to the $O$ measurements,
many errors have corrupted $\ket{ \psi }$. 
Suppose that the code cannot correct all those errors.
The state must be scrapped;
and the computation, reinitiated.
We present a quantum error-detecting code 
and an error-correcting code 
formed from states accessible to Posners (Sec.~\ref{section:QEC}).

Let us review the mathematics of 
quantum error correction and detection (QECD).
Consider encoding $k < n$ logical qubits in $n$ physical qubits.
The physical Hilbert space $\mathbb{C}^{2n}$
has dimensionality $2^n$.
A QECD code is a subspace $\Hil_\Logg  \subset  \mathbb{C}^{2n}$
of dimensionality $2^k  <  2^n$.
Let $\Basis^\comp_\Logg = \Set{ \ket{ j_\Logg } }$ denote 
the code's computational basis.
(See Sec.~\ref{section:Encodings} for
an introduction to computational bases.)

Each quantum error-correcting/-detecting code
corresponds to a set $\Set{ E_\alpha }$ of
correctable/detectable errors.
For example, a code of $n = 9$ physical qubits
has been constructed~\cite{Shor_95_Scheme}.
This code corrects the set of single-qubit Pauli errors,
$\Set{ \sigma^x_1 , \sigma^x_2 , \ldots , \sigma^x_9,
\sigma^z_1 , \ldots, \sigma^z_9 }  \, .$
The shorthand $\sigma^\alpha_j  
\equiv  \id^{ \otimes (j - 1) }  \otimes
\sigma^\alpha_j  \otimes  \id^{ \otimes ( n - j ) }$.
The ability to correct $\sigma^y$ errors follows from
the ability to correct $\sigma^x$ and $\sigma^z$.

Under what conditions can a code $\Hil_\Logg$ detect 
a set $\Set{ E_\alpha }$ of errors?
The code and set must satisfy
the \emph{quantum error-detection criteria},
\begin{align}
   \label{eq:QED_criteria}
   \langle j_\Logg  |  E_\alpha  |  k_\Logg  \rangle
   =  C_\alpha  \,  \delta_{jk}  
   \qquad  \forall  j, k, \alpha  \, .
\end{align}
The Kronecker delta is denoted by $\delta_{ j k }$. 
$C_\alpha$ denotes a constant dependent only on the error $E_\alpha$,
not on the codeword labels $j$ and $k$.
Equation~\eqref{eq:QED_criteria} decomposes into two subcriteria:
the off-diagonal criterion, in which $j \neq k$, and
the diagonal criterion, in which $j = k$.

The \emph{off-diagonal error-detecting criterion} has the form
\begin{align}
   \label{eq:QED_offdiag}
   \langle j_\Logg  |  E_\alpha  |  k_\Logg  \rangle
   =  0
   \qquad  \forall  j \neq k  \, .
\end{align}
No $E_\alpha$ maps any codeword $\ket{ k_\Logg }$ into
any other codeword $\ket{ j_\Logg }$.
The logical states retain their integrity 
under detectable errors.

The \emph{diagonal criterion} has the form
\begin{align}
   \label{eq:QED_diag}
   \langle  j_\Logg  |  E_\alpha  |  j_\Logg  \rangle
   =  C_\alpha
   \qquad  \forall  j , \alpha  \, .
\end{align}
Suppose that $\ket{ j_\Logg }$ is prepared.
The environment might effectively measure 
$\expval{ E_\alpha }$.
The environment gains no information about the state,
according to Eq.~\eqref{eq:QED_diag}:
Every codeword's expectation value equals every other codeword's.
Typical detectable errors $E_\alpha$ operate nontrivially
on just a few close-together qubits.
The codewords are \emph{locally indistinguishable} 
with respect to $\Set{ E_\alpha }$.

Local indistinguishability protects QI:
Suppose that the environment had ``learned'' about $\ket{ j_\Logg }$.
Information would have leaked out of the system.
Highly entangled states are locally indistinguishable:
Entanglement distributes information throughout the system.
Local operations cannot extract
the distributed information.

We have reviewed the error-detection criteria.
Under what conditions can a code $\Hil_\Logg$
correct $\Set{ E_\alpha }$?
The code must satisfy the 
\emph{quantum error-correction criteria}~\cite{Knill_97_Theory,Bennett_96_Mixed,Kribs_05_Unified,Kribs_05_Operator,Nielsen_07_Algebraic,Preskill_99_QEC},
% References cited near Eq. 5 on p. 3 of Bacon_05_Operator
\begin{align}
   \label{eq:QEC_criteria}
   \bra{ j_\Logg }  E_\beta^\dag  E_\alpha  \ket{ k_\Logg }
   =  C_{ \alpha \beta }  \,  \delta_{jk}
   \quad  \forall j, k, \alpha, \beta  \, .
\end{align}
Equation~\eqref{eq:QEC_criteria} is interpreted
similarly to Eq.~\eqref{eq:QED_criteria}.
A code that corrects $(d - 1) / 2$ errors
detects $(d - 1)$ errors.
We refer readers to~\cite{Preskill_99_QEC} for more background.

\section{Multiplicity of (no-colliding-nuclei antisymmetric) subspaces 
accessible to a Posner molecule}
\label{eq:Subtle_HilPos}

% References: Two Google hangouts during which NYH was in Santa Barbara, at home,
% and EJC was in Pasadena. Second hangout: 9/3/17

A subtlety about $\HilPos$ was glossed over in Sec.~\ref{section:How_to_enc_transf}.
Consider Eq.~\eqref{eq:Slater}.
In every term,
the spin quantum number $m_{ \pi_\alpha (j) }$ appears alongside 
the position $\mathbf{r}_{ \pi_\alpha ( j ) }$.
% The value of a given $\mathbf{r}_{ \pi_\alpha ( j ) }$ is fixed by the assignment
% $( \mathbf{r}_1 , \ldots , \mathbf{r}_6 )
% = \LParen ( h_+ , \phi ) , \ldots , ( h_- , \phi + 4 \pi / 3 ) \RParen$.
The tuple 
$( m_{ \pi_\alpha (j) }  ,  \mathbf{r}_{ \pi_\alpha ( j ) } )$
occupies different kets in different terms.
But $m_{ \pi_\alpha (j) }$ remains hitched to 
the same position $\mathbf{r}_{ \pi_\alpha ( j ) }$
throughout the terms.
How are the $m_{ \pi_\alpha (j) }$'s assigned to positions?

This question has a two-part answer.
The choice of coordinate system partially determines the assignments.
So do initial conditions, the pre-Posner phosphates' positions and momenta.

The choice of coordinate system determines
the $\varphi$-value associated with a given $m$-value.
For example, suppose that $m_{1} = 0$.
Should this spin variable be assigned to
$\mathbf{r}_1 = ( \phi ,  h )$, to
$\mathbf{r}_1  =  ( \phi + 2 \pi / 3 ,  h )$, 
or to $\mathbf{r}_1  =  ( \phi + 4 \pi / 3 , h )$?
(Whether $h = h_+$ or $h = h_-$ is irrelevant.)
This assignment is a convention,
because the orientation of $\hat{x}_\In$ is a convention.

We illustrate the answer's second part with an example.
Suppose that three singlets,
\begin{align}
   \label{eq:Singlet_exp}
   \ket{ \Psi^- }^{ \otimes 3 }  & =
   \frac{1}{ \sqrt{2} }  \:  
   ( \ket{ \uparrow }  \ket{ \downarrow }
   -  \ket{ \downarrow }  \ket{ \uparrow } )
   \otimes  \frac{1}{ \sqrt{2} }  \:
   (  \ket{ \uparrow }  \ket{ \downarrow }  
   -  \ket{ \downarrow }  \ket{ \uparrow } )
   \otimes  \frac{1}{ \sqrt{2} }  \:
   (  \ket{ \uparrow }  \ket{ \downarrow }  
   -  \ket{ \downarrow }  \ket{ \uparrow } )  \, ,
\end{align}
join together to form a Posner.
Molecule creation is assumed to preserve
the entanglement within each pair of spins (Sec.~\ref{section:Formal_Pos_form}).
Posner creation maps each six-spin term in~\eqref{eq:Singlet_exp}
to a sum~\eqref{eq:Slater}.
% Call this sum $\ket{ \Psi_\Posner }$.
% The RHS~\eqref{eq:Prod_state} acquires a negative sign
% when the $\ket{}_1$'s are swapped with the $\ket{}_2$'s.
% So does $\ket{ \Psi_\Posner }$.
% But swapping the $\ket{}_1$'s with the $\ket{}_2$'s in~\eqref{eq:Prod_state}
% exchanges just the spin variables $m_{1}$ and $m_{2}$.
% Swapping the $\ket{}_1$'s with the $\ket{}_2$'s in $\ket{ \Psi_\Posner }$
% exchanges the spin-and-position pairs
% $( m_{1} ,  \mathbf{r}_1 )$ and
% $( m_{2 } ,  \mathbf{r}_2 )$.
Suppose we choose an intra-Posner coordinate system
such that $\mathbf{r}_1 = ( 0, h )$, for $h = h_+$ or $h = h_-$.
Given that coordinate system,
which value should $\mathbf{r}_2$ assume?
Should the spin at $( 0 , h_\pm )$ form a singlet with
the spin at $( 0,  h_\mp )$, with the spin at $( 2 \pi / 3 ,  h_\pm )$, etc.?
Different answers generate qualitatively different Posner states:
The states transform differently under $\CThree$.
% Direct calculation supports this claim.
%
% References about states' transforming differently under S^z_tot -- all outdated (contain results from buggy code): 
% (1) "Write-ups from others" folder --> ÒprojectingPosnersÓ Ñ> ÒsinglePosnerA,Ó ÒsinglePosnerB,Ó singlePosnerCÓ
% (2) Emails in chain ``Neatening C eigenstates'' -- messages from around late Sept.
% (3) (Perhaps outdated): Emails in chain ``Posner-state decomposition'' (begun on 8/31/17) -- exchanged on 9/1/17.

The correct answer, we posit, is determined by
the positions and momenta that the phosphates had
at the lip of the Lennard-Jones potential (Sec.~\ref{section:When_Pos_form}).
Different projections of the same initial state, we posit, 
would release different amounts of heat to the environment.

The PVM model in Sec.~\ref{section:Formal_Pos_form}
can now be refined:
Posner creation projects the phosphates' state
with the projector $\PiPos$ onto 
\emph{some} no-colliding-nuclei subspace $\HilPos$
of the antisymmetric subspace.
$6!$ such subspaces exist;
$6!$ possible forms are available to $\PiPos$.
One subspace and projector correspond to
entanglement between 
the $( 0 , h_\pm )$ spin and
the $( 0,  h_\mp )$ spin;
one subspace and projector correspond to
entanglement between
the $( 0 , h_\pm )$ spin and
$( 2 \pi / 3 ,  h_\pm )$ spin; etc.
Hence pre-Posner positions and momenta,
with a choice of coordinate system,
determine to which position 
each spin variable (e.g., $m_{1}$) is assigned
during Posner creation.

%
% Dimensionality of mapped-to subspace
%
\section{The Posner-molecule Hilbert space $\HilPos$ has dimensionality 64.}
\label{section:Dim_anti}

When a Posner forms, we posit,
$\PiPos$ projects the phosphorus nuclei's joint state [Eq.~\eqref{eq:PiMinus}].
$\PiPos$ defines a map that preserves the dimensionality
of the space available for storing QI, 64.
A counting argument shows why.
% we show in App.~\ref{section:Count_arg}.
% In App.~\ref{section:Group_thry},
% we decompose the space of 6 logical qubits
% in terms of composite spin operators.

%
%
%
% \subsection{Counting argument}
% \label{section:Count_arg}

Imagine that the phosphorus nuclei were classical and distinguishable.
A tuple $( m_j ,  \mathbf{r}_j )$ would label 
the $j^\th$ nucleus's state.
The spin variable $m_j$ could assume one of two possible values.
The position $\mathbf{r}_j$ could assume one of six possible values,
$\uparrow$ or $\downarrow$.
The tuple could therefore assume one of twelve possible values:
\begin{align}
   \label{eq:TwelveTuples}
   & \LParen \uparrow,  ( \phi ,  h_+ )  \RParen,   \;
   \LParen \uparrow,  ( \phi ,  h_- )  \RParen,    \;
   \LParen \uparrow,  ( \phi + 2 \pi / 3 ,  h_+ )  \RParen,    \;
   \LParen \uparrow,  ( \phi + 2 \pi / 3 ,  h_- )  \RParen,    \;
   \LParen \uparrow,  ( \phi + 4 \pi / 3 ,  h_+ )  \RParen,    \;
   \LParen \uparrow,  ( \phi + 4 \pi / 3 ,  h_- )  \RParen,   
   \\ \nonumber &
   \LParen \downarrow,  ( \phi ,  h_+ )  \RParen,     \;
   \LParen \downarrow,  ( \phi ,  h_- )  \RParen,    \;
   \LParen \downarrow,  ( \phi + 2 \pi / 3 ,  h_+ )  \RParen,    \;
   \LParen \downarrow,  ( \phi + 2 \pi / 3 ,  h_- )  \RParen,    \;
   % \nonumber \\ & \quad
   \LParen \downarrow,  ( \phi + 4 \pi / 3 ,  h_+ )  \RParen,    \;
   \; \text{or} \;  \;
   \LParen \downarrow,  ( \phi + 4 \pi / 3 ,  h_- )  \RParen  \, .
\end{align}
The nuclei would be ``dodequits'': 
$\dim ( \Hil_\nuc )$ would equal $2 \times 6  =  12$.

Let us return to reality: The phosphorus nuclei are indistinguishable fermions.
A hextuple of nuclei can occupy
the antisymmetric basis state~\eqref{eq:Slater}.
This state is labeled by a set of six tuples.
Each tuple must differ from each other tuple,
for the state to be antisymmetric.
To label a joint state,
we choose six of the twelve possible tuples.

But we cannot choose six arbitrary tuples.
No two tuples can contain the same position:
No two nuclei can coincide.
Hence we pair up the twelve possible tuples.
Each pair's constituent tuples
have the same positions and different spin states:
\begin{enumerate}

   \item $\LParen \uparrow,  ( \phi ,  h_+ )  \RParen,   \quad
   \LParen \downarrow,  ( \phi ,  h_+ )  \RParen$
   
   \item $\LParen \uparrow,  ( \phi ,  h_- )  \RParen,   \quad
   \LParen \downarrow,  ( \phi ,  h_- )  \RParen$
   
   \item $\LParen \uparrow,  ( \phi + 2 \pi / 3 ,  h_+ )  \RParen,  \quad
   \LParen \downarrow,  ( \phi + 2 \pi / 3 ,  h_+ )  \RParen$
   
   \item $\LParen \uparrow,  ( \phi + 2 \pi / 3 ,  h_- )  \RParen,   \quad
   \LParen \downarrow,  ( \phi + 2 \pi / 3 ,  h_- )  \RParen$
   
   \item $\LParen \uparrow,  ( \phi + 4 \pi / 3 ,  h_+ )  \RParen,   \quad
   \LParen \downarrow,  ( \phi + 4 \pi / 3 ,  h_+ )  \RParen$
   
   \item $\LParen \uparrow,  ( \phi + 4 \pi / 3 ,  h_- )  \RParen, 
   \LParen \downarrow,  ( \phi + 4 \pi / 3 ,  h_- )  \RParen$

\end{enumerate}
\noindent  We have formed six pairs of tuples.
We choose one tuple from each pair,
to label an antisymmetric joint basis state.

Let us count the ways in which we can choose the six tuples.
We can choose one tuple from each pair in two ways.
We choose from each of six pairs.
Hence we have $2^6 = 64$ choices of labels for
an antisymmetric joint state.

\section{Why the Posner's Hamiltonian is expected
to conserve $S_{1 \ldots 6}^{ \zinter }$}
\label{section:Z_Cons}

A Posner's phosphorus nuclear spins 
resist decoherence for long times, according to Fisher~\cite{Fisher15}.
We infer that the Posner Hamiltonian $H_\Posner$ preserves 
$S^{ \zlab }_{ 1 \ldots 6 }$
(the $z$-component, relative to the Posner's lab frame,
of the six phosphorus nuclei's total spin).
% Fisher supports his claim by arguing that
% the spins (i) fail to couple to electric fields,
% (ii) couple to magnetic fields weakly,
% and (iii) cannot couple to the Posner's
% calcium and oxygen nuclear spins
% (as those atoms have spin quantum numbers $s = 0$)~\cite{Fisher15,Fisher_17_Personal}.
We support this interpretation by identifying candidate interactions 
that preserve the Posner's $C_3$ symmetry.
These interactions, we show, preserve $S^{ \zlab }_{1 \ldots 6}$.
Swift \emph{et al.} studied intra-Posner interactions
more rigorously
after the present paper's initial release~\cite[Eq.~(3)]{Swift_17_Posner}.
Their findings---including their form for $H_\Posner$---are consistent with ours.

The nuclei within a molecule can interact, in general.
Intramolecule interactions include 
the Coulomb exchange, kinetic exchange, and superexchange~\cite{Ashcroft_76_Solid}.
These interactions have the Heisenberg form 
\begin{align}
   \label{eq:Heis}
   \mathbf{S}_j \cdot  \mathbf{S}_k
   =  S_j^z  S_k^z  +  S_j^+  S_k^-  +  S_j^-  S_k^+  \, .
\end{align}
The $j^\th$ single-nucleus spin operator
is denoted by $\mathbf{S}_j$.
Raising and lowering operators are denoted by
$S_j^\pm  :=  \frac{1}{2} ( S_j^x  \pm  i  S_j^y )$.

Suppose that arbitrary phosphorus nuclear spins in a Posner
interact via Eq.~\eqref{eq:Heis}:
\begin{align}
   \label{eq:Heis2}
   H_\inter  =
   \sum_{j = 1}^6  \sum_{ k < j }  
   J_{jk}  \,  \mathbf{S}_j \cdot  \mathbf{S}_k  \, .
\end{align}
The pair-dependent interaction strength
is denoted by $J_{jk}$.
This $H_\inter$ remains invariant under
permutations of the spins via $\CThree$.
$\CThree$ represents the rotation
that preserves the Posner's geometry.
Hence the Posner's intrinsic Hamiltonian
might contain $H_\inter$.

The first term in Eq.~\eqref{eq:Heis} conserves
each spin's $S^{ z }_j$,
relative to an arbitrary reference frame.
The second term does not.
But suppose that any spin flips upward via $S^+_j$.
Another spin flips downward via $S^-_k$.
The compensation preserves the total spin's $z$-component.

\section{Decomposition of the Posner-molecule Hilbert space $\HilPos$
in terms of composite spin operators}
\label{section:Group_thry}

A Posner encodes logical qubits (Sec.~\ref{section:How_to_enc}).
Three qubits correspond to the $h_+$ triangle in Fig.~\ref{fig:h_axis},
via Eq.~\eqref{eq:Simple_code2}.\footnote{
% < f >
No particular nucleus can be associated with any particular 
pure spin-and-position state, by Pauli's principle.
But a spin can be associated with a position.
Loosely speaking, some nucleus $A$ occupies 
the spin state $\ket{ m_j }$
if and only if $A$ occupies 
the position state $\ket{ \mathbf{r}_j }$.} 
% < /f >
We label these qubits 1, 2, and 3.
The $h_-$ triangle corresponds to logical qubits 4, 5, and 6.
The 64-dimensional logical space decomposes into
a direct sum of subspaces.
Different subspaces transform in different ways
under $\mathbf{S}_{123}^2  +  \mathbf{S}_{456}^2$,
the composite spin-squared operator~\eqref{eq:S_tot_op}.
Let us derive the decomposition.
We refer readers to standard quantum-mechanics textbooks,
such as~\cite{Shankar94}, for background.

Let us focus on one triangle (one trio of qubits) first.
Each trio corresponds to a Hilbert space
$\mathbb{C}^2  \otimes  \mathbb{C}^2  \otimes  \mathbb{C}^2$.
Each factor is replaced with
the corresponding subsystem's spin quantum number,
in useful conventional notation:
$s_1  \otimes  s_2  \otimes  s_3
=  \frac{1}{2}  \otimes  \frac{1}{2}  \otimes  \frac{1}{2}$.
This tensor product can be rewritten as a direct sum.

To derive the direct sum,
we follow rules for adding angular-momentum quantum numbers.
Two spin quantum numbers, $s_1$ and $s_2$, sum as
\begin{align}
   \label{eq:AddS}
   s_\tot  =  | s_1  - s_2 | ,  | s_1 - s_2 |  +  1,  \ldots
   s_1 + s_2 - 1 ,  s_1 + s_2  \, .
\end{align}
Two magnetic spin quantum numbers, $m_1$ and $m_2$, sum as
\begin{align}
   \label{eq:AddM}
   m_\tot  =  m_1 + m_2  \, .
\end{align}
We need not use Eq.~\eqref{eq:AddM} here, however.

Since tensor products distribute across direct sums,
\begin{align}
   % Checked on 9/22/17
   \label{eq:Tri_group0}
   \frac{1}{2}  \otimes  \frac{1}{2}  \otimes  \frac{1}{2}
   % % %
   & =  ( 0  \oplus  1 )  \otimes  \frac{1}{2}  \\
   % % %
   \label{eq:Tri_group}
   & =  \frac{1}{2}  \oplus  \left( \frac{1}{2}  \oplus  \frac{3}{2} \right)  \, .
\end{align}
We can check Eq.~\eqref{eq:Tri_group}:
A space that transforms with spin quantum number $s$
has dimensionality $2s + 1$.
That is, $s$ corresponds to 
$2s + 1$ possible magnetic spin quantum numbers $m$.
According to the LHS of Eq.~\eqref{eq:Tri_group0}, therefore, 
each triangle corresponds to a space of dimensionality
$\left(2 \times \frac{1}{2}  +  1  \right)^3  =  2^3  =  8$.
Equation~\eqref{eq:Tri_group} implies the same dimensionality:
$2 + 2 + 4  =  8$.

Each Posner consists of two triangles.
A triangle pair corresponds to the Hilbert space
$\left(  \frac{1}{2}  \oplus  \frac{1}{2}  \oplus  \frac{3}{2}  \right)^{ \otimes 2 }$.
Distributing the tensor product across the direct sums yields
\begin{align}
   % Checked on 9/22/17
   \label{eq:Pos_group0}
   \left(  \frac{1}{2}  \oplus  \frac{1}{2}  \oplus  \frac{3}{2}  \right)^{ \otimes 2 }
   % % %
   & =  \left(  \frac{1}{2}  \otimes  \frac{1}{2}  \right)^{ \oplus 4 }
   \oplus  \left(  \frac{1}{2}  \otimes  \frac{3}{2}  \right)^{ \oplus 4 }
   \oplus  \left(  \frac{3}{2}  \otimes  \frac{3}{2}  \right)  \\
   % % %
   & =  ( 0  \oplus  1  )^{  \oplus 4 }
   \oplus  ( 1  \oplus  2  )^{  \oplus 4 }
   \oplus ( 0 \oplus 1 \oplus 2 \oplus 3 )  \\
   % % %
   \label{eq:Pos_group}
   & =  0^{ \oplus 5 }  \oplus  1^{ \oplus 9 }
   \oplus  2^{ \oplus 5 }  \oplus 3  \, .
\end{align}
Let us check Eq.~\eqref{eq:Pos_group}.
According to the LHS of Eq.~\eqref{eq:Pos_group0},
a Posner corresponds to a space of dimensionality 
$(2 + 2 + 4)^2  =  8^2 = 64$.
Equation~\eqref{eq:Pos_group} implies the same dimensionality:
$5  +  (3 \times 9 )  +  (5  \times  5 )  +  7  =  64$.

\section{Preferred eigenbasis of 
the permutation operator $\CThree$}
\label{section:C_eigenspaces}

The permutation operator $\CThree$
was introduced in Sec.~\ref{section:C_symm}.
The Posner dynamics is assumed to conserve
$\CThree$, as well as $S^{ \zlab}_{1 \ldots 6}$.
An eigenbasis shared by $\CThree$ and 
$S^{ \zlab}_{1 \ldots 6}$
can facilitate the construction of natural quantum error-correcting codes
(Sec.~\ref{section:QEC}).

Several eigenbases of $\CThree$ 
are eigenbases of $S^{ \zlab }_{1 \ldots 6}$.
The operator $\mathbf{S}_{123}^2  \otimes  \mathbf{S}_{456}^2$
breaks the degeneracy satisfactorily,
as discussed in Sections~\ref{section:Eigenbasis} and~\ref{section:AKLT}.
$\CThree$, $S^{ \zlab}_{1 \ldots 6}$, and 
$\mathbf{S}_{123}^2  \otimes  \mathbf{S}_{456}^2$
share the eigenbasis in
Tables~\ref{table:Tau0_states_Pos},~\ref{table:Tau1_states}, and~\ref{table:Tau2_states}.
Each table corresponds to one value of $\tau = 0, \pm 1$
(equivalently, $\tau = 0, 1, 2$).

%
% tau = 0 eigenstates
%
\begin{table*}[t] 
\begin{center} 
\begin{tabular}{|M{1.2cm}|M{2cm}|M{1.2cm}|M{1.2cm}|M{1.2cm}|M{2cm}|M{3cm}|N}
    \hline
 $\text{State}$ 
 & $s_{123}\otimes s_{456}$ 
 & $m_{123}$ 
 & $m_{456}$ 
 & $m_{\text{1...6}}$ 
 & $\tau _{123}\otimes \tau _{456}$ 
 & $\text{Decomposition}$
 &
 \\[5pt]  \hline
 $\ket{c_{\tau = 0}^1}$ 
 & $\frac{3}{2}\otimes \frac{3}{2}$ 
 & $\frac{3}{2}$ & $\frac{3}{2}$ 
 & 3 
 & $1\otimes 1$ 
 & $\ket{000}\ket{000}$  
 & 
 \\[5pt] \hline
 $\ket{c_{\tau = 0}^2}$  
 & $\frac{3}{2}\otimes \frac{3}{2}$ 
 & $\frac{3}{2}$ 
 & $\frac{1}{2}$ 
 & 2 
 & $1\otimes 1$ 
 & $\ket{000}\ket{W} $ 
 &
 \\[5pt] \hline
 $\ket{c_{\tau = 0}^3}$ 
 & $\frac{3}{2}\otimes \frac{3}{2}$ & $\frac{3}{2}$ & $-\frac{1}{2}$ & 1 & $1\otimes 1$ & $\ket{000}\ket{\bar{W}} $ 
 &
 \\[5pt] \hline
 $\ket{c_{\tau = 0}^4}$ & $\frac{3}{2}\otimes \frac{3}{2}$ & $\frac{3}{2}$ & $-\frac{3}{2}$ & 0 & $1\otimes 1$ & $\ket{000}\ket{111}$
 &
 \\[5pt] \hline
 $\ket{c_{\tau = 0}^5}$ & $\frac{3}{2}\otimes \frac{3}{2}$ & $\frac{1}{2}$ & $\frac{3}{2}$ & 2 & $1\otimes 1$ & $\ket{W}\ket{000}$  
 &
 \\[5pt] \hline
 $\ket{c_{\tau = 0}^6}$ & $\frac{3}{2}\otimes \frac{3}{2}$ & $\frac{1}{2}$ & $\frac{1}{2}$ & 1 & $1\otimes 1$ & $\ket{W}\ket{W} $ 
 &
 \\[5pt] \hline
 $\ket{c_{\tau = 0}^7}$ & $\frac{3}{2}\otimes \frac{3}{2}$ & $\frac{1}{2}$ & $-\frac{1}{2}$ & 0 & $1\otimes 1$ & $\ket{W}\ket{\bar{W}} $
 &
 \\[5pt] \hline
 $\ket{c_{\tau = 0}^8}$ & $\frac{3}{2}\otimes \frac{3}{2}$ & $\frac{1}{2}$ & $-\frac{3}{2}$ & -1 & $1\otimes 1 $& $\ket{W}\ket{111} $ 
 &
 \\[5pt] \hline
 $\ket{c_{\tau = 0}^9}$ & $\frac{3}{2}\otimes \frac{3}{2}$ & $-\frac{1}{2}$ & $\frac{3}{2}$ & 1 & $1\otimes 1$ & $\ket{\bar{W}}\ket{000} $ 
 &
 \\[5pt] \hline
 $\ket{c_{\tau = 0}^{10}}$ & $\frac{3}{2}\otimes \frac{3}{2}$ & $-\frac{1}{2}$ & $\frac{1}{2}$ & 0 & $1\otimes 1 $& $\ket{\bar{W}}\ket{W}  $
 &
 \\[5pt] \hline
 $\ket{c_{\tau = 0}^{11}}$ & $\frac{3}{2}\otimes \frac{3}{2}$ & $-\frac{1}{2}$ & $-\frac{1}{2}$ & -1 & $1\otimes 1 $& $\ket{ \bar{W}}\ket{\bar{W}}$  
 &
 \\[5pt] \hline
 $\ket{c_{\tau = 0}^{12}}$ & $\frac{3}{2}\otimes \frac{3}{2}$ & $-\frac{1}{2}$ & $-\frac{3}{2}$ & -2 & $1\otimes 1 $& $\ket{\bar{W}}\ket{111}$  
 &
 \\[5pt] \hline
 $\ket{c_{\tau = 0}^{13}}$ & $\frac{3}{2}\otimes \frac{3}{2}$ & $-\frac{3}{2}$ & $\frac{3}{2}$ & 0 & $1\otimes 1 $&  $\ket{111}\ket{000}$ 
 &
 \\[5pt] \hline
 $\ket{c_{\tau = 0}^{14}}$ & $\frac{3}{2}\otimes \frac{3}{2}$ & $-\frac{3}{2}$ & $\frac{1}{2}$ & -1 & $1\otimes 1 $&  $\ket{111}\ket{W} $ 
 &
 \\[5pt] \hline
 $\ket{c_{\tau = 0}^{15}}$ & $\frac{3}{2}\otimes \frac{3}{2}$ & $-\frac{3}{2}$ & $-\frac{1}{2}$ & -2 & $1\otimes 1 $& $\ket{ 111}\ket{\bar{W}}$  
 &
 \\[5pt] \hline
 $\ket{c_{\tau = 0}^{16}}$ & $\frac{3}{2}\otimes \frac{3}{2}$ & $-\frac{3}{2}$ & $-\frac{3}{2}$ & -3 & $1\otimes 1 $&  $\ket{111}\ket{111} $
 &
 \\[5pt] \hline
 $\ket{c_{\tau = 0}^{17}}$ 
 & $\frac{1}{2}\otimes \frac{1}{2}$ 
 & $\frac{1}{2}$ & $\frac{1}{2}$ 
 & 1 
 & $\omega \otimes \omega ^2 $
 & $\ket{\omega }\ket{\omega^2} $ 
 &
 \\[5pt] \hline
 $\ket{c_{\tau = 0}^{18}}$ & $\frac{1}{2}\otimes \frac{1}{2}$ & $\frac{1}{2}$ & $-\frac{1}{2}$ & 0 & $\omega \otimes \omega ^2 $& $\ket{\omega }\ket{\overline{\omega ^2}}$  
 &
 \\[5pt] \hline
 $\ket{c_{\tau = 0}^{19}}$ & $\frac{1}{2}\otimes \frac{1}{2}$ & $\frac{1}{2}$ & $\frac{1}{2}$ & 1 & $\omega ^2\otimes \omega  $&  $\ket{\omega ^2}\ket{\omega}  $
 &
 \\[5pt] \hline
 $\ket{c_{\tau = 0}^{20}}$ & $\frac{1}{2}\otimes \frac{1}{2}$ & $\frac{1}{2}$ & $-\frac{1}{2}$ & 0 & $\omega ^2\otimes \omega  $& $\ket{\omega^2}\ket{\bar{\omega }}  $
 &
 \\[5pt] \hline
 $\ket{c_{\tau = 0}^{21}}$ & $\frac{1}{2}\otimes \frac{1}{2}$ & $-\frac{1}{2}$ & $\frac{1}{2}$ & 0 & $\omega \otimes \omega ^2 $& $\ket{\bar{\omega }}\ket{\omega^2}  $
 &
 \\[5pt] \hline
 $\ket{c_{\tau = 0}^{22}}$ & $\frac{1}{2}\otimes \frac{1}{2}$ & $-\frac{1}{2}$ & $-\frac{1}{2}$ & -1 & $\omega \otimes \omega ^2 $& $\ket{\bar{\omega}}\ket{\overline{\omega ^2}}$  
 &
 \\[5pt] \hline
 $\ket{c_{\tau = 0}^{23}}$  & $\frac{1}{2}\otimes \frac{1}{2}$ & $-\frac{1}{2}$ & $\frac{1}{2}$ & 0 & $\omega ^2\otimes \omega  $& $\ket{ \overline{\omega^2}}\ket{\omega}$  
 &
 \\[5pt] \hline
 $\ket{c_{\tau = 0}^{24}}$ & $\frac{1}{2}\otimes \frac{1}{2}$ & $-\frac{1}{2}$ & $-\frac{1}{2}$ & -1 & $\omega ^2\otimes \omega$  &  $\ket{ \overline{\omega^2}}\ket{\bar{\omega }}$ 
 &
 \\[5pt] \hline
\end{tabular}
\caption{\caphead{Preferred eigenbasis for the $\tau = 0$ eigenspace
of the permutation operator $\CThree$:}
Twenty-four states span the eigenspace.
Each basis element equals a product of 
two three-qubit states.
The final column displays the product, 
explained in Sec.~\ref{section:Eigenbasis}.
The state's first factor represents a state of 
the qubits (labeled $j = 1, 2, 3$) in the top triangle in Fig.~\ref{fig:h_axis}.
The second factor represents a state of 
the qubits (labeled $j = 4, 5, 6$) in the bottom triangle.
Each factor is an eigenstate shared by the total-spin operators
$\mathbf{S}_{123}^2$ and $S^{ \zlab }_{123}$
or by $\mathbf{S}_{456}^2$ and $S^{ \zlab }_{456}$.
The operators are defined in Sec.~\ref{section:Eigenbasis}.
Table~\ref{table:Trio_states} displays the three-qubit eigenstates.
The spin quantum number $s_{123}$ 
denotes the eigenvalue of $\mathbf{S}_{123}^2$.
The magnetic spin quantum number $m_{123}$
denotes the eigenvalue of $S^{ \zlab }_{123}$.
$s_{456}$ and $m_{456}$ are defined analogously.
The total magnetic spin quantum number
$m_{ 1 \ldots 6 }  =  m_{123} + m_{456}$.
The notation in column two follows from~\cite{Shankar94}:
Eigenspaces of $\mathbf{S}_{123}^2  \otimes  \mathbf{S}_{456}^2$
bear the label $s_{123} \otimes s_{456}$.
Column six is notated similarly.
$\tau_{123}$ denotes the eigenvalue of 
the permutation operator
that cyclically permutes qubits 1, 2, and 3.
$\tau_{456}$ is defined analogously.
The permutation eigenvalues multiply to
$\tau_{123} \times  \tau_{456}  =  \tau$.
}
\label{table:Tau0_states_Pos}
\end{center}
\end{table*}

%
%  tau = 1 eigenstates
%
\begin{table*}[t] 
\begin{center} 
\begin{tabular}{|M{1.2cm}|M{2cm}|M{1.2cm}|M{1.2cm}|M{1.2cm}|M{2cm}|M{3cm}|N}
    \hline
 $\text{State}$ & $s_{123}\otimes s_{456}$ & $m_{123}$ & $m_{456}$ & $m_{\text{1...6}}$ & $\tau _{123}\otimes \tau _{456}$ & $\text{Decomposition} $
 &
 \\[5pt] \hline
 $\ket{c_{\tau = 1}^1}$ & $\frac{3}{2}\otimes \frac{1}{2}$ & $\frac{3}{2}$ & $\frac{1}{2} $& 2 & $1\otimes \omega  $& $\ket{ 000} \ket{\omega}  $
 &
 \\[5pt] \hline
 $\ket{c_{\tau = 1}^2}$ & $\frac{3}{2}\otimes \frac{1}{2}$ & $\frac{3}{2}$ & $-\frac{1}{2}$ & 1 & $1\otimes \omega $ & $\ket{000}\ket{\bar{\omega }}  $
 &
 \\[5pt] \hline
 $\ket{c_{\tau = 1}^3}$ & $\frac{3}{2}\otimes \frac{1}{2}$ & $\frac{1}{2}$ & $\frac{1}{2} $& 1 & $1\otimes \omega  $& $\ket{W}\ket{\omega}  $
 &
 \\[5pt] \hline
 $\ket{c_{\tau = 1}^4}$ & $\frac{3}{2}\otimes \frac{1}{2}$ & $\frac{1}{2}$ & $-\frac{1}{2} $& 0 & $1\otimes \omega  $& $\ket{W} \ket{\bar{\omega }}  $
 &
 \\[5pt] \hline
 $\ket{c_{\tau = 1}^5}$ & $\frac{3}{2}\otimes \frac{1}{2}$ & $-\frac{1}{2}$ & $\frac{1}{2} $& 0 & $1\otimes \omega  $& $\ket{\bar{W}}\ket{\omega}  $
 &
 \\[5pt] \hline
 $\ket{c_{\tau = 1}^6}$ & $\frac{3}{2}\otimes \frac{1}{2}$ & $-\frac{1}{2}$ & $-\frac{1}{2}$ & -1 & $1\otimes \omega $ & $\ket{\bar{W}}\ket{\bar{\omega }}  $
 &
 \\[5pt] \hline
 $\ket{c_{\tau = 1}^7}$ & $\frac{3}{2}\otimes \frac{1}{2}$ & $-\frac{3}{2}$ & $\frac{1}{2} $& -1 & $1\otimes \omega  $& $\ket{111}\ket{\omega}  $
 &
 \\[5pt] \hline
 $\ket{c_{\tau = 1}^8}$ & $\frac{3}{2}\otimes \frac{1}{2}$ & $-\frac{3}{2}$ & $-\frac{1}{2}$ & -2 & $1\otimes \omega $ & $\ket{111}\ket{\bar{\omega }}  $
 &
 \\[5pt] \hline
 $\ket{c_{\tau = 1}^9}$ & $\frac{1}{2}\otimes \frac{3}{2}$ & $\frac{1}{2}$ & $\frac{3}{2} $& 2 & $\omega \otimes 1 $& $\ket{\omega}\ket{000}  $
 &
 \\[5pt] \hline
 $\ket{c_{\tau = 1}^{10}}$ & $\frac{1}{2}\otimes \frac{3}{2}$ & $\frac{1}{2}$ & $\frac{1}{2} $& 1 &$ \omega \otimes 1 $& $\ket{\omega}\ket{W} $
 &
 \\[5pt] \hline
 $\ket{c_{\tau = 1}^{11}}$ & $\frac{1}{2}\otimes \frac{3}{2}$ & $\frac{1}{2}$ & $-\frac{1}{2}$ & 0 & $\omega \otimes 1 $& $\ket{\omega }\ket{\bar{W}}  $
 &
 \\[5pt] \hline
 $\ket{c_{\tau = 1}^{12}}$ & $\frac{1}{2}\otimes \frac{3}{2}$ & $\frac{1}{2}$ & $-\frac{3}{2}$ & -1 & $\omega \otimes 1 $& $\ket{\omega}\ket{111}  $
 &
 \\[5pt] \hline
 $\ket{c_{\tau = 1}^{13}}$ & $\frac{1}{2}\otimes \frac{1}{2}$ & $\frac{1}{2}$ & $\frac{1}{2} $& 1 & $\omega ^2\otimes \omega ^2 $&$ \ket{\omega ^2}\ket{\omega ^2}  $
 &
 \\[5pt] \hline
 $\ket{c_{\tau = 1}^{14}}$ & $\frac{1}{2}\otimes \frac{1}{2}$ & $\frac{1}{2}$ & $-\frac{1}{2} $& 0 & $\omega ^2\otimes \omega ^2 $& $\ket{\omega ^2}\ket{\overline{\omega ^2}} $
 &
 \\[5pt] \hline
 $\ket{c_{\tau = 1}^{15}}$ & $\frac{1}{2}\otimes \frac{3}{2}$ & $-\frac{1}{2}$ & $\frac{3}{2} $& 1 & $\omega \otimes 1 $&$\ket{\bar{\omega }}\ket{000}  $
 &
 \\[5pt] \hline
 $\ket{c_{\tau = 1}^{16}}$ & $\frac{1}{2}\otimes \frac{3}{2}$ & $-\frac{1}{2}$ & $\frac{1}{2} $& 0 & $\omega \otimes 1 $&$\ket{\bar{\omega }}\ket{W}  $
 &
 \\[5pt] \hline
 $\ket{c_{\tau = 1}^{17}}$ & $\frac{1}{2}\otimes \frac{3}{2}$ & $-\frac{1}{2}$ & $-\frac{1}{2}$ & -1 &$ \omega \otimes 1 $&$ \ket{\bar{\omega }}\ket{\bar{W}}  $
 &
 \\[5pt] \hline
 $\ket{c_{\tau = 1}^{18}}$ & $\frac{1}{2}\otimes \frac{3}{2}$ & $-\frac{1}{2}$ & $-\frac{3}{2}$ & -2 & $\omega \otimes 1 $& $\ket{\bar{\omega }}\ket{111}  $
 &
 \\[5pt] \hline
 $\ket{c_{\tau = 1}^{19}}$ & $\frac{1}{2}\otimes \frac{1}{2}$ & $-\frac{1}{2}$ & $\frac{1}{2}$ & 0 & $\omega ^2\otimes \omega ^2 $& $\ket{ \overline{\omega ^2}}\ket{\omega ^2}  $
 &
 \\[5pt] \hline
 $\ket{c_{\tau = 1}^{20}}$ & $\frac{1}{2}\otimes \frac{1}{2}$ & $-\frac{1}{2}$ & $-\frac{1}{2}$ & -1 &$ \omega ^2\otimes \omega ^2 $&$ \ket{\overline{\omega ^2}}\ket{\overline{\omega ^2}}  $
 &
 \\[5pt] \hline    
\end{tabular}
\caption{\caphead{Preferred eigenbasis for the $\tau = 1$ eigenspace
of the rotation symmetry operator $\CThree$:}
The notation is defined below Table~\ref{table:Tau0_states_Pos}.
}
\label{table:Tau1_states}
\end{center}
\end{table*}

%
% tau = 2 eigenstates
%
\begin{table*}[t] 
\begin{center} 
\begin{tabular}{|M{1.2cm}|M{2cm}|M{1.2cm}|M{1.2cm}|M{1.2cm}|M{2cm}|M{3cm}|N}
   \hline
 $\text{State} $ & $ s_{123}\otimes s_{456} $ & $ m_{123} $ & $ m_{456} $ & $ m_{\text{1...6}} $ & $ \tau _{123}\otimes \tau _{456} $ & $ \text{Decomposition}$ 
 &
 \\[5pt] \hline
 $\ket{c_{\tau = -1}^1} $ & $ \frac{3}{2}\otimes \frac{1}{2} $ & $ \frac{3}{2} $ & $ \frac{1}{2} $ & $ 2 $ & $ 1\otimes \omega ^2 $ & $ \ket{000}\ket{\omega ^2}  $ 
 &
 \\[5pt] \hline
 $\ket{c_{\tau = -1}^2} $ & $ \frac{3}{2}\otimes \frac{1}{2} $ & $ \frac{3}{2} $ & $ -\frac{1}{2} $ & $ 1 $ & $ 1\otimes \omega ^2 $ & $ \ket{000}\ket{\overline{\omega ^2}}  $ 
 &
 \\[5pt] \hline
 $\ket{c_{\tau = -1}^3} $ & $ \frac{3}{2}\otimes \frac{1}{2} $ & $ \frac{1}{2} $ & $ \frac{1}{2} $ & $ 1 $ & $ 1\otimes \omega ^2 $ & $ \ket{W}\ket{\omega ^2}  $ 
 &
 \\[5pt] \hline
 $\ket{c_{\tau = -1}^4} $ & $ \frac{3}{2}\otimes \frac{1}{2} $ & $ \frac{1}{2} $ & $ -\frac{1}{2} $ & $ 0 $ & $ 1\otimes \omega ^2 $ & $ \ket{W}\ket{\overline{\omega ^2}}  $ 
 &
 \\[5pt] \hline
 $\ket{c_{\tau = -1}^5} $ & $ \frac{3}{2}\otimes \frac{1}{2} $ & $ -\frac{1}{2} $ & $ \frac{1}{2} $ & $ 0 $ & $ 1\otimes \omega ^2 $ & $ \ket{\bar{W}}\ket{\omega ^2}  $ 
 &
 \\[5pt] \hline
 $\ket{c_{\tau = -1}^6} $ & $ \frac{3}{2}\otimes \frac{1}{2} $ & $ -\frac{1}{2} $ & $ -\frac{1}{2} $ & $ -1 $ & $ 1\otimes \omega ^2 $ & $ \ket{\bar{W}}\ket{\overline{\omega ^2}}  $ 
 &
 \\[5pt] \hline
 $\ket{c_{\tau = -1}^7} $ & $ \frac{3}{2}\otimes \frac{1}{2} $ & $ -\frac{3}{2} $ & $ \frac{1}{2} $ & $ -1 $ & $ 1\otimes \omega ^2 $ & $ \ket{111}\ket{\omega ^2}  $ 
 &
 \\[5pt] \hline
 $\ket{c_{\tau = -1}^8} $ & $ \frac{3}{2}\otimes \frac{1}{2} $ & $ -\frac{3}{2} $ & $ -\frac{1}{2} $ & $ -2 $ & $ 1\otimes \omega ^2 $ & $\ket{111}\ket{\overline{\omega ^2}}  $ 
 &
 \\[5pt] \hline
 $\ket{c_{\tau = -1}^9} $ & $ \frac{1}{2}\otimes \frac{1}{2} $ & $ \frac{1}{2} $ & $ \frac{1}{2} $ & $ 1 $ & $ \omega \otimes \omega  $ & $ \ket{\omega}\ket{\omega}  $ 
 &
 \\[5pt] \hline
 $\ket{c_{\tau = -1}^{10}} $ & $ \frac{1}{2}\otimes \frac{1}{2} $ & $ \frac{1}{2} $ & $ -\frac{1}{2} $ & $ 0 $ & $ \omega \otimes \omega  $ & $ \ket{\omega}\ket{\bar{\omega }}  $ 
 &
 \\[5pt] \hline
 $\ket{c_{\tau = -1}^{11}} $ & $ \frac{1}{2}\otimes \frac{3}{2} $ & $ \frac{1}{2} $ & $ \frac{3}{2} $ & $ 2 $ & $ \omega ^2\otimes 1 $ & $ \ket{\omega ^2}\ket{000}  $ 
 &
 \\[5pt] \hline
 $\ket{c_{\tau = -1}^{12}} $ & $ \frac{1}{2}\otimes \frac{3}{2} $ & $ \frac{1}{2} $ & $ \frac{1}{2} $ & $ 1 $ & $ \omega ^2\otimes 1 $ & $ \ket{\omega ^2}\ket{W}  $ 
 &
 \\[5pt] \hline
 $\ket{c_{\tau = -1}^{13}} $ & $ \frac{1}{2}\otimes \frac{3}{2} $ & $ \frac{1}{2} $ & $ -\frac{1}{2} $ & $ 0 $ & $ \omega ^2\otimes 1 $ & $ \ket{\omega ^2}\ket{\bar{W}}  $ 
 &
 \\[5pt] \hline
 $\ket{c_{\tau = -1}^{14}} $ & $ \frac{1}{2}\otimes \frac{3}{2} $ & $ \frac{1}{2} $ & $ -\frac{3}{2} $ & $ -1 $ & $ \omega ^2\otimes 1 $ & $\ket{\omega ^2}\ket{111}  $ 
 &
 \\[5pt] \hline
 $\ket{c_{\tau = -1}^{15}} $ & $ \frac{1}{2}\otimes \frac{1}{2} $ & $ -\frac{1}{2} $ & $ \frac{1}{2} $ & $ 0 $ & $ \omega \otimes \omega  $ & $\ket{\bar{\omega }}\ket{\omega}  $ 
 &
 \\[5pt] \hline
 $\ket{c_{\tau = -1}^{16}} $ & $ \frac{1}{2}\otimes \frac{1}{2} $ & $ -\frac{1}{2} $ & $ -\frac{1}{2} $ & $ -1 $ & $ \omega \otimes \omega  $ & $ \ket{\bar{\omega }}\ket{\bar{\omega }}  $ 
 &
 \\[5pt] \hline
 $\ket{c_{\tau = -1}^{17}} $ & $ \frac{1}{2}\otimes \frac{3}{2} $ & $ -\frac{1}{2} $ & $ \frac{3}{2} $ & $ 1 $ & $ \omega ^2\otimes 1 $ & $ \ket{\overline{\omega ^2}}\ket{000}  $ 
 &
 \\[5pt] \hline
 $\ket{c_{\tau = -1}^{18}} $ & $ \frac{1}{2}\otimes \frac{3}{2} $ & $ -\frac{1}{2} $ & $ \frac{1}{2} $ & $ 0 $ & $ \omega ^2\otimes 1 $ & $ \ket{\overline{\omega ^2}}\ket{W}  $ 
 &
 \\[5pt] \hline
 $\ket{c_{\tau = -1}^{19}} $ & $ \frac{1}{2}\otimes \frac{3}{2} $ & $ -\frac{1}{2} $ & $ -\frac{1}{2} $ & $ -1 $ & $ \omega ^2\otimes 1 $ & $ \ket{\overline{\omega ^2}}\ket{\bar{W}}  $ 
 &
 \\[5pt] \hline
 $\ket{c_{\tau = -1}^{20}} $ & $ \frac{1}{2}\otimes \frac{3}{2} $ & $ -\frac{1}{2} $ & $ -\frac{3}{2} $ & $ -2 $ & $ \omega ^2\otimes 1 $ & $\ket{ \overline{\omega ^2}}\ket{111}  $ 
 &
 \\[5pt] \hline
\end{tabular}
\caption{\caphead{Preferred eigenbasis for the $\tau = -1$ eigenspace
(equivalently, the $\tau = 2$ eigenspace)
of the rotation symmetry operator $\CThree$:}
The notation is defined below Table~\ref{table:Tau0_states_Pos}.
}
\label{table:Tau2_states}
\end{center}
\end{table*}

\section{Quantification of 
the information encoded in
the outcome of a Posner-binding measurement:
Analysis 2}
\label{section:Quant_Bind_Outcome}

The Posner-binding measurement is analyzed 
in Sec.~\ref{section:Stick_apps}.
The measurement yields an outcome
that encodes classical information.
This information is quantified in
Sec.~\ref{section:Analyze_Stick}.
The quantification is explained alternatively here.

Imagine wishing to measure the $\tau_A$ and $\tau_B$
of Posners $A$ and $B$. % \footnote{
% < f >
% Whether $\tau_A$ and $\tau_B$ can be measured,
% via the operations in Sec.~\ref{section:Abstract_logic},
% is unclear.
% Even whether $\tau_A + \tau_B$ can be measured is unclear.}
% < /f >
Each measurement would yield one of three possible outcomes
(0, 1, or 2).
The pair of measurements would yield 
one of nine possible outcomes.
The pair of outcomes could be recorded in
$\lceil \log_2 (9) \rceil  =  4$ bits.

Whether two Posners bind is equivalent to 
a measurement of
whether $\tau_A + \tau_B = 0$.
The yes-or-no answer constitutes one bit.
You forfeit three of the bits you wanted,
measuring just whether the Posners bind.
Three is the number of bits you would need
to specify the value of $( \tau_A , \tau_B )$,
given that $\tau_A + \tau_B \neq 0$.
Why? Suppose that $\tau_A + \tau_B \neq 0$.
$( \tau_A , \tau_B )$ can equal one of six possible values,
$( 0 , 1 )$, $( 0 , 2 )$, $(1 , 0 )$, $(1 , 1 )$, $(2 , 0 )$, or $(2 , 2 )$.
Specifying one of six possible values requires 
$\lceil \log_2 ( 6 )  \rceil  =  3$ bits.\footnote{
% < f >
Imagine learning, instead, that $\tau_A + \tau_B = 0$.
Given this information, would you need three bits
to specify the value of $( \tau_A , \tau_B )$?
No: $( \tau_A , \tau_B )$ can assume one of three possible values.
Hence you would need $\lceil  \log_2 (3)  \rceil = 2$ bits.
But you could encode the tuple's value in three bits.}
% < /f >
Hence measuring Posner binding is equivalent to 
each of two QI processes:
\begin{enumerate}
   \item
   Measuring $( \tau_A , \tau_B )$ 
and coarse-graining away three bits
(all information except whether $\tau_A + \tau_B = 0$).

   \item
   Measuring the Bell basis
and coarse-graining away one bit
(whether a $+$ outcome or a $-$ outcome occurred).
\end{enumerate}

% Reference: email chain "Nonuniversal?" -- begun 9/12/17

%
%
%
\section{How to prepare, with Posner operations, 
states used in incoherent teleportation}
\label{section:Tele_state_app}

% References: 
% (1) Folder created by Elizabeth: "projectingPosners" --> teleportationExplicitTauPlus
% (2) Old reference: email chain "Posner-state decomposition" -- messages sent on 9/1/17
% (3) ``Neatening C eigenstates" -- emails exchanged on 9/13/17

Section~\ref{section:Teleportation} details how Posners can
teleport QI incoherently.
The protocol involves states 
$\ket{ \plust }  =  \frac{1}{ \sqrt{3} }  
( \ket{ \zerot }  +  \ket{ \onet }  +  \ket{ \twot } )$ and 
$\ket{ \psi }  =  c_0  \ket{ \zerot }  +  c_1  \ket{ \onet }  +  c_2  \ket{ \twot }$.
Each $\ket{ \jt }$ denotes an arbitrary state in
the $\tau = j$ subspace.
How can Posner operations (Sec.~\ref{section:Abstract2}) prepare 
a $\ket{ \plust }$ and a $\ket{ \psi }$?
One protocol is described below.
Other protocols may await discovery.

Each state is of one Posner and is pure.
Hence the Posner contains three singlets.
Consider preparing three singlets via operation~\ref{item:Bell}.
Consider rotating one spin % \footnote{
% < f >
% The phosphorus will need to enter a Posner and rotate.
% The Posner will then need to hydrolyze.
% The other two qubit will need to enter and exit Posners
% around the same times,
% to remain coherent while the other qubit rotates.}}
% < /f >
about the $y_\lab$-axis,
through an angle $\theta$.

Consider forming a Posner from the spins,
via operation~\ref{item:Form_Pos}.
Let the singlets be arranged as in Fig.~\ref{fig:Prep_teleport}.
Recall that a Posner contains
two triangles of phosphorus nuclear spins
(Sec.~\ref{section:Posner_geo}).
One triangle sits at $\zinter = h_+$;
and the other triangle, at $\zinter = h_-$.
Each triangle contains one singlet
(illustrated with a green, wavy line).
One singlet extends from the $h_+$ triangle
to the $h_-$ triangle.
(How a singlet corresponds to 
positions in a Posner
is discussed in Sec.~\ref{section:How_to_enc_transf}
and App.~\ref{eq:Subtle_HilPos}.)
The red hoop encircles the rotated spin.
The rotated spin
is entangled with a spin in the same triangle.

Let $\ket{ \phi ( \theta ) }$ denote the Posner's state.
$\ket{ \phi ( \theta ) }$ can have weight on
each $\tau = j$ eigenspace:
\begin{align}
   \label{eq:Phi_theta}
   \ket{ \phi ( \theta ) }  =  \sum_{ j = 0}^2
   \sum_{ \lambda_j = 1 }^{ d_j }
   C_{ j , \lambda_j  } ( \theta ) 
   \ket{ c_{\tau = j}^{ \lambda_j } }  \, .
\end{align}
The $\tau = j$ eigenspace has degeneracy $d_j$.
The degeneracy parameter is denoted by $\lambda_j$.
The coefficients $C_{ j , \lambda_j  } ( \theta )$
satisfy the normalization condition
$\sum_{ j }  \sum_{ \lambda_j }
\left\lvert  C_{ j , \lambda_j  } ( \theta )  \right\rvert^2  =  1$.

% Reference: "Write-ups from others" folder --> projectingPosners --> singlePosnerTheta
The dependence on $\theta$ can be calculated analytically:
The state has an amount
\begin{align}
   \label{eq:Weight0}
   \sum_{ \lambda_0 = 1 }^{24}  
   \left\lvert  C_{ 0 , \lambda_0 } ( \theta ) \right\rvert^2
   =   \frac{1}{6} [ \cos ( 2 \theta )  +  2 ]
\end{align}
of weight on the $\tau = 0$ eigenspace, an amount
\begin{align}
   \label{eq:Weight1}
   \sum_{ \lambda_1 = 1 }^{20}  
   \left\lvert  C_{ 1 , \lambda_1 } ( \theta ) \right\rvert^2
   =  \frac{1}{12}  [  4 - \cos ( 2 \theta ) ]
\end{align}
on the $\tau = 1$ eigenspace, and an amount
\begin{align}
   \label{eq:Weight2}
   \sum_{ \lambda_2 = 1 }^{20}  
   \left\lvert  C_{ 2 , \lambda_2 } ( \theta ) \right\rvert^2
   =  \frac{1}{12}  [  4 - \cos ( 2 \theta ) ]
\end{align}
on the $\tau = 2$ eigenspace.

At which $\theta$-value does
the weight on each eigenspace equal
the weight on every other?
Let us equate~\eqref{eq:Weight0},~\eqref{eq:Weight1}, and~\eqref{eq:Weight2}.
Solving for the angle yields
$\theta = \frac{\pi}{4}$.
The corresponding state can serve as
the equal-weight superposition $\ket{ \plust }$:
\begin{align}
   \ket{ \plust }  =  \ket{ \phi \left( \pi / 4 \right) }  \, .
\end{align}
The basis vectors $\ket{ \jt }$ inherit the definition
\begin{align}
   \label{eq:Plus_basis}
   \ket{ \jt }  =  \sum_{ \lambda_j = 1 }^{ d_j }
   C_{ j , \lambda_j }  
    \left( \pi / 4 \right) 
   \ket{ c_j^{ \lambda_j } }  \, .
\end{align}

Now, let $\theta$ assume an arbitrary value.
Information about $\ket{ \phi ( \theta ) }$
can be teleported incoherently:
\begin{align}
   \label{eq:Psi_phi}
   \ket{ \psi }  =  \ket{ \phi ( \theta ) }  \, .
\end{align}
Granted, $\ket{ \phi ( \theta ) }$ might not decompose as
$\sum_{ j = 0 }^2  c_j  \ket{ \jt }$,
in terms of the $\ket{ \jt }$'s defined in Eq.~\eqref{eq:Plus_basis}.
Yet the incoherent-teleportation protocol continues to work:
Equation~\eqref{eq:Psi_phi} defines
new basis elements $\ket{ \jt ( \theta ) }$:
\begin{align}
   \label{eq:Plus_basis_psi}
   \ket{ \jt ( \theta ) }  =  \sum_{ \lambda_j = 1 }^{ d_j }
   C_{ j , \lambda_j } ( \theta )
   \ket{ c_j^{ \lambda_j } }  \, .
\end{align}

States $\ket{ \jt }$ of Posner $A$ appear in 
Eqs.~\eqref{eq:ToTeleport} and~\eqref{eq:Tel_3_state2a}.
Each such $\ket{ \jt }$ must be replaced with 
a $\ket{ \jt ( \theta ) }$.
The projector $\Pi_{AB}$ transforms the $\ket{ \jt ( \theta ) }$'s
as it would transform the $\ket{ \jt }$'s.

% Figure: Preparing states with which to teleport incoherently
\begin{figure}[tb]
\centering
\includegraphics[width=.45\textwidth, clip=true]{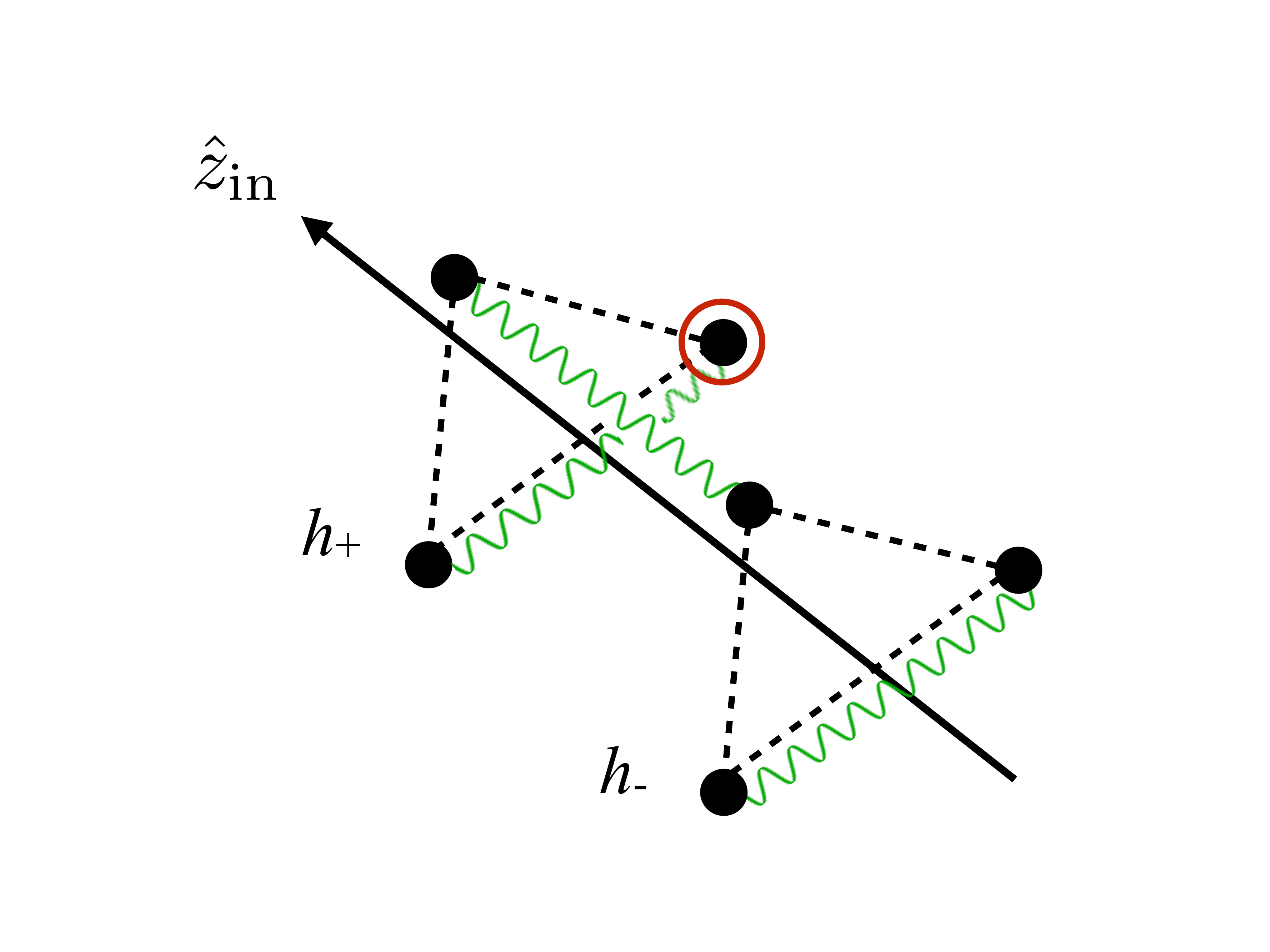}
\caption{\caphead{Posner-molecule state
usable in incoherent teleportation:}
Each black dot represents a phosphorus nuclear spin.
The internal $z$-axis $\hat{z}_\In$
remains fixed with respect to the atoms' positions.
Three spins sit at $\zinter = h_+$;
and three spins, at $\zinter = h_-$.
The spins occupy a pure state of three singlets.
Each green, wavy line represents one singlet.
The red hoop encircles a spin
that has been rotated through an angle $\theta$.
The rotation is about the $y_\lab$-axis,
which remains fixed relative to the lab that contains the Posner.
The angle labels the Posner's state, $\ket{ \phi ( \theta ) }$.
Instances of $\ket{ \phi ( \theta ) }$ can serve as 
the $\ket{ \plust }$ and the $\ket{ \psi }$
in incoherent teleportation (Sec.~\ref{section:Tele_protocol}).}
\label{fig:Prep_teleport}
\end{figure}
\section{Frustrated-lattice intuition about projecting onto the $\tau = 0$ subspace}
\label{section:Frust_int}

We can understand Eq.~\eqref{eq:Proj_prod2}
in terms of a frustrated lattice.
Consider a triangular lattice of three sites, $A$, $B$, and $C$.
Let a spin-1 DOF occupy each site.
The site-$K$ magnetic spin quantum number $m_K = 0, \pm 1$
stands in place of $\tau_K$. 

Let us regress to Eq.~\eqref{eq:Proj_prod1}.
We ignore the final $m - 3$ identity operators in each term.
How does $\Pi_{123}$ transform the lattice's state?
Consider multiplying out the terms in the RHS.
We label as a \emph{cross-term} each term that contains
at least one $\Pi_{ \tau_K = 0 }$
and one $\Pi_{ \tau_K = \pm 1 }$,
for some $K = A, B, C$.
These projectors annihilate each other;
the cross-terms vanish.
Each surviving term in $\Pi_{123}$
contains only $\tau_K = 0$ projectors
or only $\tau_K = \pm 1$ projectors.

Each $\tau_K = \pm 1$ projector 
represents an antiferromagnetic interaction
between two lattice sites.
The $\tau_K = \pm 1$ projectors form a term 
that represents a frustrated lattice.
No set $( \tau_A, \tau_B, \tau_C )$
satisfies all the constraints
encoded in the frustration term.
Hence the lattice must occupy its $\tau_A = \tau_B = \tau_C = 0$ subspace.

\section{PEPS representation of $\AKLTPrime$}
\label{section:PEPS}

The AKLT$'$ PEPS is a repeating pattern of two tensors,
$T^+$ and $T^-$ (Fig.~\ref{fig:PEPS}).
We will focus primarily on $T^+$. 
The tensor has six indices.
Three ($v^+_1$, $v^+_2$, and $v^+_3$) are virtual.
Three more indices ($a^+_1$, $a^+_2$, and $a^+_3$) are physical.

Each small, black dot represents a virtual spin.
% Model: Molnar_17_Generalization.pdf -- p. 6 -- just below Eq. 7
Each short leg, extending upward from the plane occupied by the large circle,
represents a physical qubit.
We denote the physical qubits' computational-basis states
by $\ket{ a^+_1  \,  a^+_2  \,  a^+_3 }$.
For each $j = 1, 2, 3$, the physical index $a^+_j  =  0 , 1$.
% The three dots inside circle $A$ 
% represent the physical qubits 
% in the Posner triangle. 

Each long leg, extended across the plane occupied by the large circle,
represents a virtual index.
The $v^+_1$ and $v_2^+$ lines represents singlets.
Consider, as an illustration, the physical qubit
associated with $a^+_1$.
This qubit forms a singlet with
some physical qubit in another tensor.
Suppose that $a^+_1  =  0$.
The $T^+$ physical qubit points upward.
Hence the partner physical qubit must point downward:
The partner qubit's $a$ must equal one.
This necessity is conveyed to the second tensor
by the virtual index $v_1^+$: 
If $a^+_1  =  0$, 
$T^+_{ a^+_1 ,  a^+_2 , a^+_3 , v^+_1 ,  v^+_2 , v^+_3 }  \neq 0$
only if $v^+_1 = 0$.

The virtual index $v^+_3$ differs from $v_1^+$ and $v_2^+$:
The tensor lacks isotropy.
$v^+_3$ connects two tensors
associated with the same Posner,
$T^+$ and $T^-$.
The two tensors, together, determine
which $\CThree$ eigenspace the Posner occupies.
Hence $v^+_3$ carries
not only ``singlet'' information about
one physical qubit.
$v^+_3$ conveys also
how the $T^+$ qubit trio transforms
under $\CThree_3$
(the final column in Table~\ref{table:Trio_states}). 
This $\CThree$ information dictates
how the $T^-$ physical qubits must transform,
such that the Posner occupies the $\tau = 0$ eigenspace.

We ascribe to $v^+_3$ a tuple $( \tilde{v}^+_3 ,  \tau^+ )$.
The first entry conveys information about 
the $a^+_{ 3 }$ physical qubit.
Only if $\tilde{v}^+_3  =  a^+_{ 3 }$ can
the tensor have a nonzero value.
The second entry, $\tau^+$, equals 0, 1, or 2.
Hence $v^+_3$ assumes one of six possible values:
\begin{align}
   v^+_3  =  ( \tilde{v}^+_3 ,  \tau^+ )  
   & =  \Set{ ( 0, 0) ,  (0, 1 ) ,  ( 0 , 2 ) ,  ( 1, 0) ,  (1, 1 ) ,  ( 1 , 2 ) }  \\
   \label{eq:v3}
   & =  \Set{ 0 , 1 , 2 , 3 , 4 , 5 }  \, .
\end{align}
Hence $v^+_3$ has a bond dimension of six.

Having overviewed the tensor's six indices,
we consider the whole tensor,
$T^+_{ a^+_1 ,  a^+_2 , a^+_3 , v^+_1 ,  v^+_2 , v^+_3 }$.
This tensor equals
the coefficient that multiplies
the computational-basis state $\ket{ a^+_1 ,  a^+_2 , a^+_3 }$
when the virtual indices have the values
$v^+_1$,  $v^+_2$, and $v^+_3$.
Suppose, for simplicity, that
the $T^+$ triangle lacked connections to any other triangles.
The triangle would occupy the physical state
\begin{align}
   ( \const )  \sum_{ a^+_1 ,  a^+_2 , a^+_3 , v^+_1 ,  v^+_2 , v^+_3 }
   T^+_{ a^+_1 ,  a^+_2 , a^+_3 , v^+_1 ,  v^+_2 , v^+_3 }
   \ket{ a^+_1 ,  a^+_2 , a^+_3 }  \, .
\end{align} 
The $v$'s do not label the ket, because they are virtual.

The tensor can be evaluated, 
with help from Table~\ref{table:Trio_states},
after a normalization convention is chosen.
We illustrate with three examples.

First, let us evaluate $T^+_{ 0 0 0 0 0 0 }$.
Since $v^+_3 = 0$, Eq.~\eqref{eq:v3} implies that
$T^+_{ 0 0 0 0 0 0 }$ can $\neq  0$ only if $a^+_3 = 0$.
Indeed, $a^+_3 = 0$.
In fact, every $a$ vanishes.
This tensor equals the coefficient of
the one-triangle state
$\ket{ a^+_1  \,  a^+_2  \,  a^+_3 }  =  \ket{ 0 0 0 }$.
This state occupies the $\tau = 0$ eigenspace,
according to Table~\ref{table:Trio_states}.
We choose the following normalization condition:
$\ket{ 0 0 0 }$ appears once,
with a unit coefficient, in the table's second column.
Hence we choose for $T^+_{ 0 0 0 0 0 0 }$ to equal one.

The second example consists of
$T^+_{ a^+_1  \,  a^+_2  \,  a^+_3  \,  0 0 1 }$,
wherein the $a$'s have arbitrary values. 
According to the final three indices
[and Eq.~\eqref{eq:v3}],
the tensor can be nonzero only if
$a^+_j  =  0$  for all $j = 1, 2, 3$.
That is, $T^+_{ a^+_1  \,  a^+_2  \,  a^+_3  \,  0 0 1 }  =  0$ 
except, perhaps, if the coefficient of
$\ket{ a^+_1  \,  a^+_2  \,  a^+_3 }  =  \ket{ 0 0 0 }$.

The tensor's final index implies that $v^+_3 = 1$.
Hence, by Eq.~\eqref{eq:v3}, the qubit trio
transforms under $\CThree$ with $\tau^+ = 1$.
No qubit-trio state (i) transforms with $\tau^+$ and
(ii) equals a linear combination of computational-basis states including $\ket{ 0 0 0 }$,
by Table~\eqref{table:Trio_states}.
Hence $T^+_{ 0 0 0  0 0 1 }  =  0$.

The final example consists of $T^+_{ 1 0 0 1 0 0 }$.
The physical indices ``agree with'' the virtual indices:
$a^+_1 = v^+_1$,  $a^+_2 = v^+_2$,  
and [by Eq.~\eqref{eq:v3}]
$a^+_3 = \tilde{v}^+_3$.
Hence the tensor does not necessarily vanish.
This tensor multiplies the physical one-triangle ket 
$\ket{ a^+_1  \,  a^+_2  \,  a^+_3 }  =  \ket{ 1 0 0 }$.
This ket appears three times in
the second column of Table~\ref{table:Trio_states}.
Only one of those appearances is relevant:
Since $v^+_3  =  0$, Eq.~\eqref{eq:v3} implies that
$\tau_+ = 0$.
Hence the physical qubit trio occupies the first ket
in the $\ket{W}$ decomposition
(in the third row of Table~\ref{table:Trio_states}).
This ket multiples a $\frac{1}{ \sqrt{3} }$
in the table's second column.
We might wish to ascribe the value $\frac{1}{ \sqrt{3} }$
to $T^+_{ 1 0 0 1 0 0 }$.

But the physical qubits' state is constructed from singlets.
Singlets carry minus signs.
We must incorporate these minus signs
into our convention.
We choose for the tensor to carry a factor of $( - 1 )^{ a_j^+ }$
for each $j = 1 , 2 , 3$.
Hence $T^+_{ 1 0 0 1 0 0 }  =  - \frac{1}{ \sqrt{3} }$.

\section{How logical qubits could be rotated}
\label{section:One_Qubit_Gates}

Firing neurons, we propose, generate a magnetic field
that could rotate Posners' phosphorus nuclear spins significantly.
We review the interaction Hamiltonian.
Then, we quantify the magnetic field 
generated by firing neurons.
% We discuss the angle through which one spin must rotate,
% to undergo a computationally significant gate.
We form the rotation unitary $U(t)$,
then infer the time $t_\rot$ for which
the spin must rotate.
$t_\rot$, we expect, is much less than
the time $t_\fire$ for which a neuron fires.
But a spin could rotate significantly
over several firings.
The spin would not decohere significantly during this time,
if in a Posner.

These estimates are order-of-magnitude.
We often focus on the best possible case.

%
% Hamiltonian
%
\textbf{Hamiltonian:}
Consider a spin  of magnetic moment $\bm{\mu}$.
A magnetic field $\mathbf{B}$ can evolve the spin
under the Hamiltonian 
$H_{\rm mag}  =  -  \bm{\mu}  \cdot  \mathbf{B}$.
The $^{31}$P nuclear spin has a magnetic moment of magnitude
$\mu  =  1.13 \mu_{\rm N}$~\cite{Fuller_76_Nuclear,Stone_15_Nuclear}.
% Reference https://en.wikipedia.org/wiki/Nuclear_magnetic_moment -- 
% the table just above the "Calculating the magnetic moment" section
The Bohr magneton is denoted by
$\mu_{\rm N}  =  \frac{ e \hbar}{ 2 m_{\rm p} }$;
and the proton mass, by $m_{\rm p}$.
% Hence $H  \approx  \frac{ e \hbar }{ m_{\rm p} } B$.

%
% Magnetic field
%
\textbf{Magnetic-field strength:}
Firing causes a current to run down a neuron.
The current generates a magnetic field $\mathbf{B}$,
by the Biot-Savart law.
Luo \emph{et al.} model the in-brain $\mathbf{B}$
generated by neural tissue~\cite{Luo_11_Modeling}.
Table~3 on their p. 15 suggests that
the field can reach tens of nano-Tesla (nT).
% According to their p. 6, Table 3 concerns many neurons.
% Such strengths appear most likely to be reached
% in the neurons.
% Fortunately, Posners occupy neurons 
% in Fisher's narrative~\cite{Fisher15}.
Hence we approximate $B  :=  |  \mathbf{B}  |  
\approx  10^{-8}$ T.

Subtleties merit bearing in mind.
First, the in-tissue field has
a mean of $10^{-2} - 10^{-1}$ nT and
a standard deviation of $10^{-2} - 10^{-1}$ nT~\cite{Luo_11_Modeling}.
% Reference: Luo_11_Modeling -- p. 6
Our focus on the best case justifies
the use of a greater $B$.

Second, the study of the in-brain $\mathbf{B}$
has fluctuated over the past decade
(e.g.,~\cite{Xue_06_Direct,Blagoev_07_Modelling,Luo_11_Modeling,Jones_12_Detectability}).
Relatedly, magnetoencephelography (MEG) has guided studies of the in-brain field.
But MEG measures the field outside the skull. 
``Many current sources in the cortex are expected to cancel,''
Blagoev \emph{et al.} write~\cite{Blagoev_07_Modelling},
``leading to a small magnetic field outside the skull.
Hence, using the MEG-measured magnetic field strength to calculate
the magnitude of the field within the cortex
might lead to an underestimation.''
% Reference: Blagoev_07_Modelling -- p. 142, LHS
If $\mathbf{B}$ is stronger than believed,
single-qubit unitaries can be implemented more quickly than expected.

%
% Rotation angle
%
\textbf{Angle of rotation:}
Let $\hat{n}$ denote the axis of rotation.
Let $\theta$ denote the angle through which
a spin rotates.
How large must $\theta$ be for
the global quantum state to change significantly?
One might na\"{i}vely guess $\pi$.

But quantum-cognition spins are prepared in singlets
$\ket{ \Psi^- }$
(via operation~\ref{item:Bell}).
$\ket{ \Psi^- }$ remains invariant under
arbitrary identical rotations of both qubits,
$U_\alpha  :=
U_{ \hat{n} }   \left(  \frac{\theta}{2}  \right)
\otimes  U_{ \hat{n} }   \left(  \frac{\theta}{2}  \right)$.
$U_\alpha$ transforms $\ket{ \Psi^- }$ as does
the identity operation, $\id$:
$\ket{ \Psi^- }  =  U_\alpha  \ket{ \Psi^- }$.
Hence the qubits need not rotate physically
to undergo $U_\alpha$ effectively.

Consider rotating the qubits oppositely, physically, with 
$U_\beta
:=  U_{ \hat{n} } \left(  \frac{\theta}{2}  \right)
\otimes  U_{ \hat{n} }  \left(  -  \frac{\theta}{2}  \right)$.
The two unitaries combine group-theoretically:
$\ket{ \Psi^- }  =  U_\alpha  \ket{ \Psi^- }
\mapsto  U_\beta  U_\alpha  \ket{ \Psi^- }
=  [ U_{ \hat{n} } ( \theta )  \otimes   \id ]  \ket{ \Psi^- }$.
Hence rotating qubit 1 through $\theta$ counterclockwise
is equivalent to
(i) rotating qubit 1 through $\frac{\theta}{2}$ counterclockwise while
(ii) rotating qubit 2 through $\frac{\theta}{2}$ clockwise.

Hence the time $t_\rot$ required to rotate a qubit effectively 
through an angle $\pi$
equals the time required to rotate a qubit physically
through an angle $\theta  =  \frac{\pi}{2}$.
The unitary
\begin{align}
   \label{eq:U_Form1}
   \exp  \left(  - i  \,   \frac{ \theta }{ 2 }  \, 
   \hat{ \mathbf{n} }  \cdot  \bm{\sigma}  \right)  
\end{align}
rotates a qubit through an angle $\theta$.
Since $\theta = \frac{ \pi }{ 2 }$, 
$\frac{ \theta }{ 2 }  =  \frac{ \pi }{ 4 }$.
The order-of-magnitude estimate will eliminate the $\frac{1}{2}$,
but the half is worth being aware of.

%
% Unitary
%
\textbf{Equation of unitaries and solution for $t_\rot$:}
The Hamiltonian generates the unitary
\begin{align}
   \label{eq:U_Form2}
   \exp  \left( - \frac{ i }{ \hbar }  \,  H_{\rm mag} t  \right)
   % % %
   \approx  \exp  \left(  i  
   \frac{ 1.13 e }{ 2 m_{\rm p} } B  t  \right)  \, .
\end{align}
The exponentials~\eqref{eq:U_Form2} and~\eqref{eq:U_Form1}
equal each other when $t = t_\rot$.
We equate the exponentials' arguments
and neglect order-one factors:
$\frac{ e B }{ m_{\rm p} }  t_\rot  
\approx  1$.
Solving for the time scale's order of magnitude yields
\begin{align}
   t_\rot  
   % % %
   \approx  \frac{ m_{\rm p} }{ e B }
   % % %
   \approx  \frac{ 10^{ -27 }  \text{ kg} }{
   \left( 10^{ - 19 }  \text{ C}  \right)
   \left(  10^{ -8 }  \text{ T}  \right) }
   % % %
   \approx 1  \text{ s.}
\end{align}
% Odd. In his 2015 paper, Matthew wrote that protons' magnetic fields should cause much quicker rotations, at frequencies $f \sim 10^3$ Hz (p. 2, top LHS).
Let us compare this required rotation time to
the duration $t_\fire$ of one neuron firing.

%
% Firing duration
%
\textbf{Duration of neuron firing:}
Xue \emph{et al.} attribute $5-10$ ms to a firing~\cite{Xue_06_Direct}, 
citing~\cite{Destexhe_98_Dendritic,Williams_02_Dependence,Gulledge_05_Synaptic}.
% Reference: Xue_06_Direct -- p. 6
We therefore approximate $t_\fire  \approx  10 \text{ ms}  
=  10^{-2}$ s.
One firing does not last long enough
to rotate a qubit through an angle $\frac{\pi}{2}$:
$t_\fire  \approx  10^{-2} \text{ s}
\ll  1 \text{ s}  
\approx  t_\rot$.
But $10^2$ firings could rotate the neuron enough.

%
% Frequency of firing
%
\textbf{Frequency of neuron firing:}
Blagoev \emph{et al.} write that
``a `typical' neuron spikes 0.1 to 10 times 
a second''~\cite{Blagoev_07_Modelling}.
% Reference: Blagoev_07_Modelling -- p. 11 -- LHS
We focus on the best case of ten firings per second.
One hundred firings would consume about ten seconds.
Hence rotating a qubit through 
an angle $\sim \frac{\pi}{2}$ would take 
$t'_\rot  \approx  10$ s.
Let us compare this rotation time to
two time scales that characterize the qubit.

\textbf{Comparison with spin lifetime:}
Consider a phosphorus nuclear spin
in a lone phosphate.
The spin is expected to have a lifetime of
$\sim 1 \text{ s}  \ll  t'_\rot$~\cite{Fisher15} .
The spin will decohere before rotating appreciably.
But a spin in a Posner is expected to have
a lifetime of $\sim 10^5 - 10^6$ s~\cite{Fisher15}.
In-Posner qubits could undergo 
$\sim 10^4 - 10^5$ single-qubit gates before decohering.

\textbf{Comparison with diffusion time:}
A Posner could diffuse between the neuron firings.
Let us estimate the distance diffused.
We estimated the Posner's diffusion constant in Sec.~\ref{section:diV}:
$D  \sim  10^{ - 10 }  \text{ m}^2 / \text{s}$.
Solving $D  \sim  \frac{ \ell^2 }{ t }$ for distance yields
$\ell \sim  \sqrt{ D t }
% % %
\sim  \sqrt{ \left(  10^{ - 10 }  \text{ m}^2 / \text{s}  \right)
\left(10  \text{ s}  \right)  }
% % %
=  1$ mm.

One millimeter equals approximately another relevant length:
The in-tissue magnetic field appears
as a function of 
two-dimensional position in~\cite[Fig.~3]{Luo_11_Modeling}.
The tallest spikes represent field strengths
$B  \approx  10$ nT.
About a millimeter separates neighboring peaks.
Hence a Posner could diffuse from peak to peak,
rotating maximally during each firing.

We do not expect a Posner to hit 
peak after peak typically.
But we have presented the best possible case.
At best, a qubit could effectively rotate through
angles up to $\pi$.

\end{appendices}

%
% Bibliography
%
\bibliographystyle{h-physrev}
\bibliography{CognitionRefs}

\begin{thebibliography}{100}

\bibitem{Fisher15}
M.~P.~A. {Fisher},
\newblock Annals of Physics {\bf 362}, 593 (2015).

\bibitem{Treboux_00_Posner}
G.~Treboux, P.~Layrolle, N.~Kanzaki, K.~Onuma, and A.~Ito,
\newblock The Journal of Physical Chemistry A {\bf 104}, 5111 (2000),
  http://dx.doi.org/10.1021/jp994399t.

\bibitem{Kanzaki_01_Posner}
N.~Kanzaki, G.~Treboux, K.~Onuma, S.~Tsutsumi, and A.~Ito,
\newblock Biomaterials {\bf 22}, 2921  (2001).

\bibitem{Yin_03_Posner}
X.~Yin and M.~Stott,
\newblock J. Chem. Phys. {\bf 118}, 3717 (2003).

\bibitem{Fisher_17_Quantum}
M.~P.~A. {Fisher} and L.~{Radzihovsky},
\newblock ArXiv e-prints  (2017), 1707.05320.

\bibitem{Fisher_17_Personal}
M.~P.~A. Fisher, private communication.

\bibitem{NielsenC10}
M.~A. Nielsen and I.~L. Chuang,
\newblock {\em {Quantum Computation and Quantum Information}} (Cambridge
  University Press, 2010).

\bibitem{Preskill_99_QEC}
J.~Preskill,
\newblock Ch. 7: Quantum error correction,
\newblock in {\em Phys. 219: Quantum Computation and Information}, Lecture
  Notes, 1999.

\bibitem{Swift_17_Posner}
M.~W. {Swift}, C.~G. {Van de Walle}, and M.~P.~A. {Fisher},
\newblock ArXiv e-prints  (2017), 1711.05899.

\bibitem{Bennett_93_Teleporting}
C.~H. Bennett {\em et~al.},
\newblock Phys. Rev. Lett. {\bf 70}, 1895 (1993).

\bibitem{Bennett_92_Communication}
C.~H. Bennett and S.~J. Wiesner,
\newblock Phys. Rev. Lett. {\bf 69}, 2881 (1992).

\bibitem{Briegel_01_Persistent}
H.~J. Briegel and R.~Raussendorf,
\newblock Phys. Rev. Lett. {\bf 86}, 910 (2001).

\bibitem{Raussendorf_03_Measurement}
R.~Raussendorf, D.~E. Browne, and H.~J. Briegel,
\newblock Phys. Rev. A {\bf 68}, 022312 (2003).

\bibitem{Briegel_09_Measurement}
H.~J. Briegel, D.~E. Browne, W.~Dur, R.~Raussendorf, and M.~Van~den Nest,
\newblock Nat Phys , 19 (2009).

\bibitem{Deutsch_89_Quantum}
D.~Deutsch,
\newblock Proceedings of the Royal Society of London A: Mathematical, Physical
  and Engineering Sciences {\bf 425}, 73 (1989),
  http://rspa.royalsocietypublishing.org/content/425/1868/73.full.pdf.

\bibitem{Gross_09_Most}
D.~Gross, S.~T. Flammia, and J.~Eisert,
\newblock Phys. Rev. Lett. {\bf 102}, 190501 (2009).

\bibitem{Raussendorf_01_One}
R.~Raussendorf and H.~J. Briegel,
\newblock Phys. Rev. Lett. {\bf 86}, 5188 (2001).

\bibitem{Hein_06_Entanglement}
M.~{Hein} {\em et~al.},
\newblock {Entanglement in Graph States and its Applications},
\newblock in {\em Proceedings of the International School of Physics "Enrico
  Fermi" on "Quantum Computers, Algorithms and Chaos"}, 2006.

\bibitem{VandenNest_06_Universal}
M.~Van~den Nest, A.~Miyake, W.~D\"ur, and H.~J. Briegel,
\newblock Phys. Rev. Lett. {\bf 97}, 150504 (2006).

\bibitem{Miyake_11_Quantum}
A.~Miyake,
\newblock Annals of Physics {\bf 326}, 1656  (2011),
\newblock July 2011 Special Issue.

\bibitem{Affleck_87_Rigorous}
I.~Affleck, T.~Kennedy, E.~H. Lieb, and H.~Tasaki,
\newblock Phys. Rev. Lett. {\bf 59}, 799 (1987).

\bibitem{Affleck_88_Valence}
I.~Affleck, T.~Kennedy, E.~H. Lieb, and H.~Tasaki,
\newblock Communications in Mathematical Physics {\bf 115}, 477 (1988).

\bibitem{Kennedy_88_Existence}
T.~Kennedy, E.~H. Lieb, and B.~S. Shastry,
\newblock Journal of Statistical Physics {\bf 53}, 1019 (1988).

\bibitem{Wei_12_Two}
T.-C. Wei, I.~Affleck, and R.~Raussendorf,
\newblock Phys. Rev. A {\bf 86}, 032328 (2012).

\bibitem{Verstraete_08_Matrix}
F.~Verstraete, V.~Murg, and J.~Cirac,
\newblock Advances in Physics {\bf 57}, 143 (2008),
  http://dx.doi.org/10.1080/14789940801912366.

\bibitem{Perez_07_PEPS}
D.~{Perez-Garcia}, F.~{Verstraete}, J.~I. {Cirac}, and M.~M. {Wolf},
\newblock ArXiv e-prints  (2007), 0707.2260.

\bibitem{Molnar_17_Generalization}
A.~{Molnar}, Y.~{Ge}, N.~{Schuch}, and J.~I. {Cirac},
\newblock ArXiv e-prints  (2017), 1706.07329.

\bibitem{diVincenzo_00_Physical}
D.~P. {Divincenzo},
\newblock Fortschritte der Physik {\bf 48}, 771 (2000), quant-ph/0002077.

\bibitem{Penrose_89_Emperor's}
R.~Penrose,
\newblock {\em The Emperor's New Mind: Concerning Computers, Minds, and the
  Laws of Physics} (Oxford UP, Oxford, UK, 1989).

\bibitem{Hameroff_14_Consciousness}
S.~Hameroff and R.~Penrose,
\newblock Physics of Life Reviews {\bf 11}, 39  (2014).

\bibitem{Tegmark_15_Consciousness}
M.~Tegmark,
\newblock Chaos, Solitons \& Fractals {\bf 76}, 238  (2015).

\bibitem{Onuma98Posner}
K.~Onuma and A.~Ito,
\newblock Chemistry of Materials {\bf 10}, 3346 (1998),
  http://dx.doi.org/10.1021/cm980062c.

\bibitem{Ayako99Posner}
A.~Oyane, K.~Onuma, T.~Kokubo, and I.~Atsuo,
\newblock The Journal of Physical Chemistry B {\bf 103}, 8230 (1999),
  http://dx.doi.org/10.1021/jp9910340.

\bibitem{Dey10Posner}
A.~Dey {\em et~al.},
\newblock Nat. Mat. {\bf 9}, 1010 (2010).

\bibitem{Rapaport_04_Art}
D.~C. Rapaport,
\newblock {\em The Art of Molecular Dynamics Simulation}, 2 ed. (Cambridge
  University Press, 2004).

\bibitem{Misra_77_Zeno's}
B.~Misra and E.~C.~G. Sudarshan,
\newblock Journal of Mathematical Physics {\bf 18}, 756 (1977),
  http://dx.doi.org/10.1063/1.523304.

\bibitem{Ashcroft_76_Solid}
N.~W. Ashcroft and N.~D. Mermin,
\newblock {\em Solid State Physics} (Brooks/Cole, Belmont, CA, 1976).

\bibitem{Cohen_Tannoudji_91_Quantum}
C.~Cohen-Tannoudji, B.~Diu, and F.~Lalo{\"e},
\newblock {\em Quantum Mechanics} (Wiley, 1991).

\bibitem{Knill_97_Theory}
E.~Knill and R.~Laflamme,
\newblock Phys. Rev. A {\bf 55}, 900 (1997).

\bibitem{Bennett_96_Mixed}
C.~H. Bennett, D.~P. DiVincenzo, J.~A. Smolin, and W.~K. Wootters,
\newblock Phys. Rev. A {\bf 54}, 3824 (1996).

\bibitem{Kribs_05_Unified}
D.~Kribs, R.~Laflamme, and D.~Poulin,
\newblock Phys. Rev. Lett. {\bf 94}, 180501 (2005).

\bibitem{Kribs_05_Operator}
D.~W. {Kribs}, R.~{Laflamme}, D.~{Poulin}, and M.~{Lesosky},
\newblock eprint arXiv:quant-ph/0504189  (2005), quant-ph/0504189.

\bibitem{Nielsen_07_Algebraic}
M.~A. Nielsen and D.~Poulin,
\newblock Phys. Rev. A {\bf 75}, 064304 (2007).

\bibitem{Watrous_06_Lecture}
J.~Watrous,
\newblock Ch. 16: Quantum error correction,
\newblock in {\em CPSC 519/619: Quantum Computation}, Lecture Notes, 2006.

\bibitem{Luo_11_Modeling}
Q.~Luo, X.~Jiang, B.~Chen, Y.~Zhu, and J.-H. Gao,
\newblock Magnetic Resonance in Medicine {\bf 65}, 1680 (2011).

\bibitem{Spekkens_08_Negativity}
R.~W. Spekkens,
\newblock Phys. Rev. Lett. {\bf 101}, 020401 (2008).

\bibitem{Howard_14_Contextuality}
M.~Howard, J.~Wallman, V.~Veitch, and J.~Emerson,
\newblock Nature {\bf 510}, 351 (2014).

\bibitem{Deutsch_95_Universality}
D.~Deutsch, A.~Barenco, and A.~Ekert,
\newblock Proceedings of the Royal Society of London A: Mathematical, Physical
  and Engineering Sciences {\bf 449}, 669 (1995),
  http://rspa.royalsocietypublishing.org/content/449/1937/669.full.pdf.

\bibitem{Lloyd_95_Almost}
S.~Lloyd,
\newblock Phys. Rev. Lett. {\bf 75}, 346 (1995).

\bibitem{Brylinski_01_Universal}
J.-L. {Brylinski} and R.~{Brylinski},
\newblock eprint arXiv:quant-ph/0108062  (2001), quant-ph/0108062.

\bibitem{Freedman_02_Diameters}
M.~{Freedman}, A.~{Kitaev}, and J.~{Lurie},
\newblock eprint arXiv:quant-ph/0209113  (2002), quant-ph/0209113.

\bibitem{Harrow_08_Exact}
A.~W. {Harrow},
\newblock ArXiv e-prints  (2008), 0806.0631.

\bibitem{Aaronson_05_Quantum}
S.~Aaronson,
\newblock Proceedings of the Royal Society of London A: Mathematical, Physical
  and Engineering Sciences {\bf 461}, 3473 (2005),
  http://rspa.royalsocietypublishing.org/content/461/2063/3473.full.pdf.

\bibitem{Horodecki_09_Quantum}
R.~Horodecki, P.~Horodecki, M.~Horodecki, and K.~Horodecki,
\newblock Rev. Mod. Phys. {\bf 81}, 865 (2009).

\bibitem{Gottesman_99_Demonstrating}
D.~Gottesman and I.~L. Chuang,
\newblock Nature {\bf 402}, 390 (1999).

\bibitem{Knill_01_Scheme}
E.~Knill, R.~Laflamme, and G.~J. Milburn,
\newblock Nature {\bf 409}, 46 (2001).

\bibitem{Nielsen_03_Quantum}
M.~A. Nielsen,
\newblock Physics Letters A {\bf 308}, 96  (2003).

\bibitem{Zhou_00_Methodology}
X.~Zhou, D.~W. Leung, and I.~L. Chuang,
\newblock Phys. Rev. A {\bf 62}, 052316 (2000).

\bibitem{Gross_07_Novel}
D.~Gross and J.~Eisert,
\newblock Phys. Rev. Lett. {\bf 98}, 220503 (2007).

\bibitem{Gross_07_Measurement}
D.~Gross, J.~Eisert, N.~Schuch, and D.~Perez-Garcia,
\newblock Phys. Rev. A {\bf 76}, 052315 (2007).

\bibitem{Verstraete_04_Valence}
F.~Verstraete and J.~I. Cirac,
\newblock Phys. Rev. A {\bf 70}, 060302 (2004).

\bibitem{Ostlund_95_Thermo}
S.~\"Ostlund and S.~Rommer,
\newblock Phys. Rev. Lett. {\bf 75}, 3537 (1995).

\bibitem{Fannes_92_Communications}
M.~Fannes, B.~Nachtergaele, and R.~F. Werner,
\newblock Comm. Math. Phys. {\bf 144}, 443 (1992).

\bibitem{Klumper_91_Equivalence}
A.~Kl\"umper, A.~Schadschneider, and J.~Zittartz,
\newblock Journal of Physics A: Mathematical and General {\bf 24}, L955 (1991).

\bibitem{Verstraete_04_Renormalization}
F.~{Verstraete} and J.~I. {Cirac},
\newblock eprint arXiv:cond-mat/0407066  (2004), cond-mat/0407066.

\bibitem{Brennen_08_Measurement}
G.~K. Brennen and A.~Miyake,
\newblock Phys. Rev. Lett. {\bf 101}, 010502 (2008).

\bibitem{Miyake_10_Quantum}
A.~Miyake,
\newblock Phys. Rev. Lett. {\bf 105}, 040501 (2010).

\bibitem{Bartlett_10_Quantum}
S.~D. Bartlett, G.~K. Brennen, A.~Miyake, and J.~M. Renes,
\newblock Phys. Rev. Lett. {\bf 105}, 110502 (2010).

\bibitem{Shankar94}
R.~Shankar,
\newblock {\em {Principles of Quantum Mechanics}} (Plenum Press, New York,
  1994).

\bibitem{Raussendorf_06_Fault}
R.~Raussendorf, J.~Harrington, and K.~Goyal,
\newblock Annals of Physics {\bf 321}, 2242  (2006).

\bibitem{Browne_08_Phase}
D.~E. Browne {\em et~al.},
\newblock New Journal of Physics {\bf 10}, 023010 (2008).

\bibitem{Wei_14_Hybrid}
T.-C. Wei, P.~Haghnegahdar, and R.~Raussendorf,
\newblock Phys. Rev. A {\bf 90}, 042333 (2014).

\bibitem{Wei_15_Universal}
T.-C. Wei and R.~Raussendorf,
\newblock Phys. Rev. A {\bf 92}, 012310 (2015).

\bibitem{Preskill_12_Quantum}
J.~{Preskill},
\newblock {Quantum computing and the entanglement frontier},
\newblock in {\em The Theory of the Quantum World: Proceedings of the 25th
  Solvay Conference on Physics}, edited by D.~Gross, M.~Henneaux, and
  A.~Sevrin, World Scientific Publishing, 2013.

\bibitem{Farhi_16_Quantum}
E.~{Farhi} and A.~W. {Harrow},
\newblock ArXiv e-prints  (2016), 1602.07674.

\bibitem{Nishino_01_Two}
T.~Nishino {\em et~al.},
\newblock Progress of Theoretical Physics {\bf 105}, 409 (2001).

\bibitem{Niggemann_97_Quantum}
H.~Niggemann, A.~Kl{\"u}mper, and J.~Zittartz,
\newblock Zeitschrift f{\"u}r Physik B Condensed Matter {\bf 104}, 103 (1997).

\bibitem{Verstraete_04_Diverging}
F.~Verstraete, M.~A. Mart\'{\i}n-Delgado, and J.~I. Cirac,
\newblock Phys. Rev. Lett. {\bf 92}, 087201 (2004).

\bibitem{Darmawan_12_Measurement}
A.~S. Darmawan, G.~K. Brennen, and S.~D. Bartlett,
\newblock New Journal of Physics {\bf 14}, 013023 (2012).

\bibitem{Fisher_17_Are}
M.~P.~A. Fisher,
\newblock International Journal of Modern Physics B {\bf 31}, 1743001 (2017),
  http://www.worldscientific.com/doi/pdf/10.1142/S0217979217430019.

\bibitem{VGLUT1_2000}
E.~Bellocchio, R.~J. Reimer, R.~T. Fremeau, and R.~H. Edwards,
\newblock Science {\bf 289}, 957 (2000).

\bibitem{VGLUT_review_2007}
M.~Liguz-Lecznar and J.~Skangiel-Kramska,
\newblock Acta Neurobiol Exp {\bf 67}, 207 (2007).

\bibitem{SLC17_2010}
S.~Sreedharan {\em et~al.},
\newblock BMC Genomics {\bf 11}, 17 (2010).

\bibitem{SLC17_2013}
R.~Reimer,
\newblock Molecular Aspects of Medecine {\bf 34}, 350 (2013).

\bibitem{BNPI_1994}
B.~Ni, P.~R. Rosteck, N.~S. Nadi, and S.~M. Paul,
\newblock Proc. Nod. Acad. Sci. {\bf 91}, 5607 (1994).

\bibitem{Milo_15_Cell_Synapse}
R.~Milo and R.~Phillips,
\newblock How big is a synapse?,
\newblock in {\em Cell Biology by the Numbers}, Garland Science, 2015.

\bibitem{Jordan_Zoo}
S.~Jordan,
\newblock Quantum algorithm zoo,
\newblock Online resource.

\bibitem{Palazuelos_16_Survey}
C.~{Palazuelos} and T.~{Vidick},
\newblock Journal of Mathematical Physics {\bf 57}, 015220 (2016).

\bibitem{Clauser_69_Proposed}
J.~F. Clauser, M.~A. Horne, A.~Shimony, and R.~A. Holt,
\newblock Phys. Rev. Lett. {\bf 23}, 880 (1969).

\bibitem{Preskill_01_Entanglement}
J.~Preskill,
\newblock Ch. 4: Quantum entanglement,
\newblock in {\em Phys. 219: Quantum Computation and Information}, Lecture
  Notes, 2001.

\bibitem{Shor_97_Polynomial}
P.~W. Shor,
\newblock SIAM Journal on Computing {\bf 26}, 1484 (1997),
  https://doi.org/10.1137/S0097539795293172.

\bibitem{Knill_98_Power}
E.~Knill and R.~Laflamme,
\newblock Phys. Rev. Lett. {\bf 81}, 5672 (1998).

\bibitem{Deutsch_85_Quantum}
D.~Deutsch,
\newblock Proceedings of the Royal Society of London A: Mathematical, Physical
  and Engineering Sciences {\bf 400}, 97 (1985),
  http://rspa.royalsocietypublishing.org/content/400/1818/97.full.pdf.

\bibitem{Farhi_01_Quantum}
E.~Farhi {\em et~al.},
\newblock Science {\bf 292}, 472 (2001),
  http://science.sciencemag.org/content/292/5516/472.full.pdf.

\bibitem{Kitaev_03_Fault}
A.~Kitaev,
\newblock Annals of Physics {\bf 303}, 2  (2003).

\bibitem{Childs_09_Universal}
A.~M. Childs,
\newblock Phys. Rev. Lett. {\bf 102}, 180501 (2009).

\bibitem{Jordan_09_Permutational}
S.~P. {Jordan},
\newblock Quantum Information and Computation {\bf 10}, 470 (2010).

\bibitem{Kitaev_97_Quantum}
A.~Y. Kitaev,
\newblock Russian Math. Surveys {\bf 52}, 1191 (1997).

\bibitem{Shor_95_Scheme}
P.~W. Shor,
\newblock Phys. Rev. A {\bf 52}, R2493 (1995).

\bibitem{Fuller_76_Nuclear}
G.~H. Fuller,
\newblock Journal of Physical and Chemical Reference Data {\bf 5}, 835 (1976),
  https://doi.org/10.1063/1.555544.

\bibitem{Stone_15_Nuclear}
N.~J. Stone,
\newblock Journal of Physical and Chemical Reference Data {\bf 44}, 031215
  (2015), https://doi.org/10.1063/1.4917489.

\bibitem{Xue_06_Direct}
Y.~Xue, J.-H. Gao, and J.~Xiong,
\newblock NeuroImage {\bf 31}, 550  (2006).

\bibitem{Blagoev_07_Modelling}
K.~Blagoev {\em et~al.},
\newblock NeuroImage {\bf 37}, 137  (2007).

\bibitem{Jones_12_Detectability}
H.~D.~T. Jones {\em et~al.},
\newblock Sandia National Laboratories Report No. SAND2012-8198, 2012
  (unpublished).

\bibitem{Destexhe_98_Dendritic}
A.~Destexhe, M.~Neubig, D.~Ulrich, and J.~Huguenard,
\newblock Journal of Neuroscience {\bf 18}, 3574 (1998),
  http://www.jneurosci.org/content/18/10/3574.full.pdf.

\bibitem{Williams_02_Dependence}
S.~R. Williams and G.~J. Stuart,
\newblock Science {\bf 295}, 1907 (2002),
  http://science.sciencemag.org/content/295/5561/1907.full.pdf.

\bibitem{Gulledge_05_Synaptic}
A.~T. Gulledge, B.~M. Kampa, and G.~J. Stuart,
\newblock Journal of Neurobiology {\bf 64}, 75 (2005).

\end{thebibliography}

\end{document}